\title{Rankings for Bipartite Tournaments via Chain Editing}
\author{Joseph Singleton}
\affiliation{
    \institution{Cardiff University}
    \city{Cardiff}
}
\email{singletonj1@cardiff.ac.uk}
\author{Richard Booth}
\affiliation{%
    \institution{Cardiff University}
    \city{Cardiff}
}
\email{boothr2@cardiff.ac.uk}
\begin{abstract}

Ranking the participants of a tournament has applications in voting, paired
comparisons analysis, sports and other domains. In this paper we introduce
\emph{bipartite tournaments}, which model situations in which two different
kinds of entity compete indirectly via matches against players of the opposite
kind; examples include education (students/exam questions) and solo sports
(golfers/courses).
In particular, we look to find rankings via \emph{chain graphs},
which correspond to bipartite tournaments in which the sets of adversaries
defeated by the players on one side are nested with respect to set inclusion.
Tournaments of this form have a natural and appealing ranking associated with
them. We apply \emph{chain editing} -- finding the minimum number of edge
changes required to form a chain graph -- as a new mechanism
for tournament ranking. The properties of these rankings are investigated in a
probabilistic setting, where they arise as maximum likelihood estimators, and
through the axiomatic method of social choice theory.
Despite some nice properties, two problems remain: an important anonymity axiom
is violated, and chain editing is \complexityclass{NP}-hard. We address both
issues by relaxing the minimisation constraint in chain editing, and
characterise the resulting ranking methods via a greedy approximation
algorithm.


\end{abstract}
\let\overlinewithoriginalheight\overline
\newcommand*\overlinewithlessheight[1]{{\mathpalette\overline@aux{#1}}}
\newcommand*\overline@aux[2]{
  \begingroup
    \count0=\fam 
    \setbox0=\hbox{$\m@th #1\fam=\count0 #2$}
    \@tempdima=.4\ht0
    \setbox0=\hbox{$\m@th #1\fam=\count0\overlinewithoriginalheight{#2}$}%
    \advance\@tempdima by .6\ht0
    \ht0=\@tempdima 
    \usebox0
  \endgroup%
}
\let\overline\overlinewithlessheight
\newcommand{\ch}{\mathcal{C}}
\newcommand{\minch}[1]{\operatorname{\mathcal{M}}\left({#1}\right)}
\newcommand{\minchmon}[1]{\operatorname{\mathcal{M}}_{\mathsf{mon}}\left({#1}\right)}
\newcommand{\mindist}[1]{m({#1})}
\renewcommand{\phi}{\varphi}
\newcommand{\K}{\mathcal{K}}
\newcommand{\N}{\mathbb{N}}
\newcommand{\R}{\mathbb{R}}
\newcommand{\ale}{\preceq}
\newcommand{\alt}{\prec}
\newcommand{\aeq}{\approx}
\newcommand{\asymb}{\mathcal{A}}
\newcommand{\bsymb}{\mathcal{B}}
\newcommand{\anle}{\leqslant^{\asymb}}
\newcommand{\aneq}{\approx^{\asymb}}
\newcommand{\anlt}{<^{\asymb}}
\newcommand{\bnle}{\leqslant^{\bsymb}}
\newcommand{\ble}{\sqsubseteq}
\newcommand{\blt}{\sqsubset}
\newcommand{\beq}{\approx}
\newcommand{\tr}{\top}
\newcommand{\dual}[1]{{\overline{#1}}}
\newcommand{\vect}{\operatorname{vec}}
\newcommand{\argmin}{\operatorname*{arg\ min}}
\newcommand{\argmax}{\operatorname*{arg\ max}}
\newcommand{\rs}{\upharpoonright}
\newcommand{\dotprod}{\bullet}
\newcommand{\symdiff}{\mathrel{\triangle}}
\newcommand{\phicount}{{\phi_{\mathsf{count}}}}
\newcommand{\phicardint}{{\phi_{\mathsf{CI}}}}
\newcommand{\tuple}[1]{\langle{#1}\rangle}
\newcommand{\complexityclass}[1]{\textsf{#1}}
\newcommand{\ranks}[1]{\mathsf{ranks}({#1})}
\renewcommand{\intop}[1]{{\phi_{#1}^\mathsf{int}}}
\newcommand{\swap}[3]{\mathsf{swap}({#1}; {#2}, {#3})}
\newcommand{\inlineheading}[1]{%
    \addvspace{1mm}%
    \noindent\textbf{#1.}
}
\newlist{inlinelist}{enumerate*}{1}
\setlist[inlinelist]{label=(\roman*)}
\newcommand{\axiomref}[1]{\textbf{#1}}
\theoremstyle{remark} \newtheorem*{notation}{Notation}
\newtheoremstyle{axiomstyle}{}{}{\slshape}{}{\bfseries}{}{ }{\thmnote{ (#3) }}
\theoremstyle{axiomstyle} \newtheorem*{axiom}{}
\def\thisistheprerint{}
\begin{document}


\pagestyle{fancy}
\fancyhead{}


\maketitle


\section{Introduction}
\label{sec:introduction}

A tournament consists of a finite set of players equipped with a \emph{beating
relation} describing pairwise comparisons between each pair of players.
Determining a ranking of the players in a tournament has applications in voting
in social choice~\cite{brandt2016a} (where players represent alternatives and
$x$ beats $y$ if a majority of voters prefer $x$ over $y$), paired comparisons
analysis~\cite{gonzalez2014paired} (where players may represent products and
the beating relation the preferences of a user), search engines
\cite{slutzki2006scoring}, sports tournaments~\cite{bozoki2016application} and
other domains.

In this paper we introduce \emph{bipartite tournaments}, which consist of two
disjoint sets of players $A$ and $B$ such that comparisons only take place
between players from opposite sets. We consider ranking methods which produce
two rankings for each tournament -- one for each side of the bipartition. Such
tournaments model situations in which two different kinds of entity compete
\emph{indirectly} via matches against entities of the opposite kind.
The notion of competition may be abstract, which allows the model to be applied
in a variety of settings. An important example is
education~\cite{jiao2017algorithms}, where $A$ represents students, $B$ exam
questions, and student $a$ `beats' question $b$ by answering it correctly. Here
the ranking of students reflects their performance in the exam, and the ranking
of questions reflects their \emph{difficulty}. The simultaneous ranking of both
sides allows one ranking to influence the other; e.g. so that students are
rewarded for correctly answering difficult questions. This may prove
particularly useful in the context of crowdsourced questions provided by
students themselves, which may vary in their difficulty (see for example the
PeerWise system \cite{denny_peerwise_2008}).

A related example is \emph{truth discovery}~\cite{li_survey_2016,
singleton_booth_2020}: the task of finding true information on a number of
topics when faced with conflicting reports from sources of varying (but
unknown) reliability. Many truth discovery algorithms operate iteratively,
alternately estimating the reliability of sources based on current estimates of
the true information, and obtaining new estimates of the truth based on source
reliability levels. The former is an instance of a bipartite tournament;
similar to the education example, $A$ represents data sources, $B$ topics of
interest, and $a$ defeats $b$ by providing true information on topic $b$
(according to the current estimates of the truth). Applying a bipartite
tournament ranking method at this step may therefore facilitate development of
\emph{difficulty-aware} truth discovery algorithms, which reward sources for
providing accurate information on difficult topics
\cite{galland_corroborating_2010}.
Other application domains include the evaluation of generative models
in machine learning~\cite{olsson2018skill} (where $A$ represents generators and
$B$ discriminators) and solo sports contests (e.g. where $A$ represents golfers
and $B$ golf courses).

In principle, bipartite tournaments are a special case of \emph{generalised}
tournaments
\cite{gonzalez2014paired,slutzki2005ranking,csato2019impossibility}, which
allow intensities of victories and losses beyond a binary win or loss (thus
permitting draws or multiple comparisons), and drop the requirement that every
player is compared to all others.  However, many existing ranking methods in
the literature do not apply to bipartite tournaments due to the violation of an
\emph{irreducibility} requirement, which requires that the tournament graph be
strongly connected. In any case, bipartite tournament ranking presents a unique
problem -- since we aim to rank players with only indirect information
available -- which we believe is worthy of study in its own right.

In this work we focus particularly on ranking via \emph{chain graphs} and
\emph{chain editing}. A chain graph is a bipartite graph in which the
neighbourhoods of vertices on one side form a chain with respect to set
inclusion. A (bipartite) tournament of this form represents an `ideal'
situation in which the capabilities of the players are perfectly nested: weaker
players defeat a subset of the opponents that stronger players defeat. In this
case a natural ranking can be formed according to the set of opponents defeated
by each player. These rankings respect the tournament results in an intuitive
sense: if a player $a$ defeats $b$ and $b'$ ranks worse than $b$, then $a$ must
defeat $b'$ also.
Unfortunately, this perfect nesting may not hold in reality: a weak player may
win a difficult match by coincidence, and a strong player may lose a match by
accident.
With this in mind, \citet{jiao2017algorithms} suggested an appealing ranking
method for bipartite tournaments: apply \emph{chain editing} to the input
tournament -- i.e. find the minimum number of edge changes required to form a
chain graph -- and output the corresponding rankings. Whilst their work
focused on algorithms for chain editing and its variants, we look to study the
properties of the ranking method itself through the lens of computational social
choice.

\inlineheading{Contribution} Our primary contribution is the introduction of a
class of ranking mechanisms for bipartite tournaments defined by chain editing.
We also provide a new probabilistic characterisation of chain editing via
maximum likelihood estimation. To our knowledge this is the first in-depth
study of chain editing as a ranking mechanism. Secondly, we introduce a new
class of `chain-definable' mechanisms by relaxing the minimisation constraint
of chain editing in order to obtain tractable algorithms and to resolve the
failure of an important anonymity axiom.

\inlineheading{Paper outline} In \Cref{sec:preliminaries} we define the
framework for bipartite tournaments and introduce chain graphs.
\Cref{sec:ranking_via_chain_editing} outlines how one may use chain editing to
rank a tournament, and characterises the resulting mechanisms in a
probabilistic setting. Axiomatic properties are considered in
\Cref{sec:axiomatic_analysis}. \Cref{sec:match_preference_operators} defines a
concrete scheme for producing chain-editing-based rankings.
\Cref{sec:relaxing_chain_min} introduces new ranking methods by relaxing the
chain editing requirement. Related work is discussed in
\Cref{sec:related_work}, and we conclude in \Cref{sec:conclusion}.
\ifdefined\thisistheprerint
    Note that some proofs are omitted in the body of the paper and can be found
    in the appendix.
\else
    Note that some proofs are omitted due to lack of space, and can be found in
    the appendix of~\cite{singleton_booth_21_arxiv}.
\fi

\section{Preliminaries}
\label{sec:preliminaries}

In this section we define our framework for bipartite tournaments, introduce
chain graphs and discuss the link between them.

\subsection{Bipartite Tournaments}

Following the literature on generalised
tournaments~\cite{gonzalez2014paired, slutzki2005ranking,
csato2019impossibility}, we represent a tournament as a matrix, whose entries
represent the results of matches between participants. In what follows, $[n]$
denotes the set $\{1,\ldots,n\}$ whenever $n \in \N$.

\begin{definition}

    A \emph{bipartite tournament} -- hereafter simply a \emph{tournament} -- is
    a triple $(A, B, K)$, where $A = [m]$ and $B = [n]$ for some $m, n \in \N$,
    and $K$ is an $m \times n$ matrix with $K_{ab} \in \{0, 1\}$ for all $(a,
    b) \in A \times B$. The set of all tournaments will be denoted by $\K$.

\end{definition}

Here $A$ and $B$ represent the two sets of players in the
tournament.\footnotemark{} An entry $K_{ab}$ gives the result of the match
between $a \in A$ and $b \in B$: it is 1 if $a$ defeats $b$ and 0 otherwise.
Note that we do not allow for the possibility of draws, and every $a \in A$
faces every $b \in B$.
When there is no ambiguity we denote a tournament simply by $K$, with the
understanding that $A = [\text{rows}(K)]$ and $B = [\text{columns}(K)]$.

The \emph{neighbourhood} of a player $a \in A$ in $K$ is the set $K(a) = \{b
\in B \mid K_{ab} = 1\} \subseteq B$, i.e. the set of players which $a$
defeats. The neighbourhood of $b \in B$ is the set $K^{-1}(b) = \{a \in A \mid
K_{ab} = 1\} \subseteq A$, i.e. the set of players defeating $b$.

\footnotetext{
    Note that $A$ and $B$ are not disjoint as sets: $1$ is always contained in
    both $A$ and $B$, for instance. This poses no real problem, however, since
    we view the number $1$ merely a \emph{label} for a player. It will always
    be clear from context whether a given integer should be taken as a label
    for a player on the $A$ side or the $B$ side.
}

Given a tournament $K$, our goal is to place a ranking on each of $A$ and $B$. We
define a ranking \emph{operator} for this purpose.

\begin{definition}

    An \emph{operator} $\phi$ assigns each tournament $K$ a pair $\phi(K) =
    ({\ale_K^\phi}, {\ble_K^\phi})$ of total preorders on $A$ and $B$
    respectively.\footnotemark{}

\end{definition}

\footnotetext{
    A total preorder is a transitive and complete binary relation.
}

For $a, a' \in A$, we interpret $a \ale_K^\phi a'$ to mean that $a'$ is ranked
\emph{at least as strong} as $a$ in the tournament $K$, according to the
operator $\phi$ (similarly, $b \ble_K^\phi b'$ means $b'$ is ranked at least as
strong as $b$). The strict and symmetric parts of ${\ale_K^\phi}$ are denoted
by ${\alt_K^\phi}$ and ${\aeq_K^\phi}$,

As a simple example, consider $\phicount$, where $a \ale_K^\phicount a'$ iff
$|K(a)| \le |K(a')|$ and $b \ble_K^\phicount b'$ iff $|K^{-1}(b)| \ge
|K^{-1}(b')|$. This operator simply ranks players by number of victories. It is
a bipartite version of the \emph{points system} introduced
by~\citet{rubinstein1980ranking}, and generalises \emph{Copeland's
rule}~\cite{brandt2016a}.

\subsection{Chain Graphs}

Each bipartite tournament $K$ naturally corresponds to a bipartite graph $G_K$,
with vertices $A \sqcup B$ and an edge between $a$ and $b$ whenever $K_{ab} =
1$.\footnotemark{} The task of ranking a tournament admits a particularly
simple solution if this graph happens to be a \emph{chain graph}.

\footnotetext{
    $A \sqcup B$ is the \emph{disjoint union} of $A$ and $B$, which we define
    as $\{(a, \asymb) \mid a \in A\} \cup \{(b, \bsymb) \mid b \in B\}$, where
    $\asymb$ and $\bsymb$ are constant symbols.
}

\begin{definition}[\cite{yannakakis1981computing}]
\label{def:chain_graph}

    A bipartite graph $G = (U, V, E)$ is a \emph{chain graph} if there is an
    ordering $U = \{u_1,\ldots,u_k\}$ of $U$ such that $N(u_1) \subseteq \cdots
    \subseteq N(u_k)$, where $N(u_i) = \{v \in V \mid (u_i, v) \in E\}$ is the
    neighbourhood of $u_i$ in $G$.

\end{definition}

\begin{figure}
    \centering
    \begin{tikzpicture}

        \def \m {3}
        \def \n {4}
        \def \width {2}
        \def \height {2}
        \def \edges {{{1,1},{1,2}}}

        \tikzset{mynode/.style={circle,draw}}

        \footnotesize

        \foreach \a in {1,...,\m} {
            \def \y {\height * (\a - 1) / (\m - 1)}
            \node (a-\a) at (0, -{\y}) [mynode] {$u_{\a}$};
        }
        \foreach \b in {1,...,\n} {
            \def \y {\height * (\b - 1) / (\n - 1)}
            \def \index {\pgfmathparse{int(1 + \n - \b)}\pgfmathresult}
            \node (b-\b) at (\width, -{\y}) [mynode] {$v_{\index}$};
        }
        \def \drawedge#1#2{\draw[-] (a-#1) -- (b-#2);}
        \drawedge{1}{1}
        \drawedge{2}{1} \drawedge{2}{2}
        \drawedge{3}{1} \drawedge{3}{2} \drawedge{3}{3} \drawedge{3}{4}

    \end{tikzpicture}
    \caption{An example of a chain graph}
    \label{fig:chain_graph_example}
    \Description{An example of a chain graph}
\end{figure}

In other words, a chain graph is a bipartite graph where the neighbourhoods of
the vertices on one side can be ordered so as to form a chain with respect to
set inclusion. It is easily seen that this nesting property holds for $U$ if
and only if it holds for $V$. \Cref{fig:chain_graph_example} shows an example
of a chain graph.

Now, as our terminology might suggest, the neighbourhood $K(a)$ of some player
$a \in A$ in a tournament $K$ coincides with the neighbourhood of the
corresponding vertex in $G_K$. If $G_K$ is a chain graph we can therefore
enumerate $A$ as $\{a_1,\ldots,a_m\}$ such that $K(a_i) \subseteq K(a_{i+1})$
for each $1 \le i < m$. This indicates that each $a_{i+1}$ has performed
\emph{at least as well} as $a_i$ in a strong sense: every opponent which $a_i$
defeated was also defeated by $a_{i+1}$, and $a_{i+1}$ may have additionally
defeated opponents which $a_i$ did not.\footnotemark{} It seems only natural in
this case that one should rank $a_i$ (weakly) below $a_{i+1}$.
\footnotetext{
    Note that this is a more robust notion of performance than comparing the
    neighbourhoods of $a_i$ and $a_{i+1}$ by \emph{cardinality}, which may fail
    to account for differences in the strength of opponents when counting wins
    and losses.
}
Appealing to transitivity and the fact that each $a \in A$ appears as
\emph{some} $a_i$, we see that any tournament $K$ where $G_K$ is a chain graph
comes pre-equipped with a natural total preorder on $A$, where $a'$ ranks
higher than than $a$ if and only if $K(a) \subseteq K(a')$. The duality of the
neighbourhood-nesting property for chain graphs implies that $B$ can also be
totally preordered, with $b'$ ranked higher than $b$ if and only if $K^{-1}(b)
\supseteq K^{-1}(b')$.\footnotemark{}
Moreover, these total preorders relate to the tournament results in an
important sense: if $a$ defeats $b$ and $b'$ ranks worse than $b$, then $a$
must defeat $b'$ also. That is, the neighbourhood of each $a \in A$ is
\emph{downwards closed} w.r.t the ranking of $B$, and the neighbourhood of each
$b \in B$ is \emph{upwards closed} in $A$.

\footnotetext{
    Note that the ordering of the $B$s is reversed compared to the $A$s, since
    the larger $K^{-1}(b)$ the \emph{worse} $b$ has performed.
}

Tournaments corresponding to chain graphs will be said to satisfy the
\emph{chain property}, and will accordingly be called \emph{chain tournaments}.
We give a simpler (but equivalent) definition which does not refer to the
underlying graph $G_K$. First, define relations ${\anle_K}, {\bnle_K}$ on
$A$ and $B$ respectively by $a \anle_K a'$ iff $K(a) \subseteq K(a')$ and $b
\bnle_K b'$ iff $K^{-1}(b) \supseteq K^{-1}(b')$, for any tournament $K$.

\begin{definition}
    A tournament $K$ has the \emph{chain property} if $\anle_K$ is a total
    preorder.
\end{definition}

According to the duality principle mentioned already, the chain property
implies that $\bnle_K$ is also a total preorder. Note that the relations
$\anle_K$ and $\bnle_K$ are analogues of the \emph{covering relation} for
non-bipartite tournaments \cite{brandt2016a}.

\begin{example}
    Consider
    $
        K = \left[\begin{smallmatrix}
            1 & 0 & 0 & 0 \\
            1 & 1 & 0 & 0 \\
            1 & 1 & 1 & 1
        \end{smallmatrix}\right]
    $. Then $K(1) \subset K(2) \subset K(3)$, so $K$ has the chain
    property.  In fact, $K$ is the tournament corresponding to the chain graph
    $G$ from \Cref{fig:chain_graph_example}.

\end{example}

\section{Ranking via Chain Editing}
\label{sec:ranking_via_chain_editing}

We have seen that chain tournaments
come equipped with natural rankings of $A$ and $B$. Such tournaments represent
an `ideal' situation, wherein the abilities of the players on both sides of the
tournament are perfectly nested. Of course this may not be so in reality:
the nesting may be broken by some $a \in A$ winning a match it ought not to by
chance, or by losing a match by accident.

One idea for recovering a ranking in this case, originally suggested
by \citet{jiao2017algorithms}, is to apply \emph{chain editing}: find the
minimum number of edge changes required to convert the graph $G_K$ into a chain
graph. This process can be seen as correcting the `noise' in an observed
tournament $K$ to obtain an ideal ranking. In this section we introduce the
class of operators producing rankings in this way.

\subsection{Chain-minimal Operators}

To define chain-editing in our framework we once again present an equivalent
definition which does not refer to the underlying graph $G_K$: the number of
edge changes between graphs can be replaced by the \emph{Hamming distance}
between tournament matrices.

\begin{definition}
    For $m, n \in \N$, let $\ch_{m,n}$ denote the set of all $m \times n$
    chain tournaments. For an $m \times n$ tournament $K$,
    write $\minch{K} = \argmin_{K' \in \ch_{m,n}}{d(K, K')} \subseteq \K$ for
    the set of chain tournaments closest to $K$ w.r.t the Hamming
    distance $d(K, K') = |\{(a,b) \in A \times B \mid K_{ab} \ne K'_{ab}\}|$.
    Let $\mindist{K}$ denote this minimum distance.
\end{definition}

Note that chain editing, which is \complexityclass{NP}-hard in
general~\cite{jiao2017algorithms}, amounts to finding a single element of
$\minch{K}$.\footnotemark{} We comment further on the computational complexity
of chain editing in \Cref{sec:related_work}. The following
property characterises chain editing-based operators $\phi$.

\footnotetext{
    The decision problem associated with chain editing -- which in tournament
    terms is the question of whether $\mindist{K} \le k$ for a given integer
    $k$ -- is \complexityclass{NP}-complete~\cite{drange2015threshold}.
}

\begin{axiom}[chain-min]
    For every tournament $K$ there is $K' \in \minch{K}$ such that $\phi(K) =
    ({\anle_{K'}}, {\bnle_{K'}})$.
\end{axiom}

\sloppy
That is, the ranking of $K$ is obtained by choosing the neighbourhood-subset
rankings for some closest chain tournament $K'$. Operators satisfying
\axiomref{chain-min} will be called \emph{chain-minimal}.

\begin{example}
    \label{ex:minch}
    Consider
    $
        K = \left[\begin{smallmatrix}
            1 & 0 & 1 & 0 \\
            1 & 1 & 0 & 0 \\
            0 & 1 & 1 & 1
        \end{smallmatrix}\right]
    $.
    $K$ does not have the chain property, since neither $K(1) \subseteq K(2)$
    nor $K(2) \subseteq K(1)$. The set $\minch{K}$ consists of four
    tournaments a distance of 2 from $K$:
    \[
        \minch{K} = \left\{
            \left[\begin{smallmatrix}
                1 & \bm{{\color{red}1}} & 1 & 0 \\
                1 & 1 & 0 & 0 \\
                \bm{{\color{red}1}} & 1 & 1 & 1
            \end{smallmatrix}\right],
            \left[\begin{smallmatrix}
                1 & 0 & \bm{{\color{red}0}} & 0 \\
                1 & 1 & 0 & 0 \\
                \bm{{\color{red}1}} & 1 & 1 & 1
            \end{smallmatrix}\right],
            \left[\begin{smallmatrix}
                1 & 0 & 1 & 0 \\
                1 & \bm{{\color{red}0}} & 0 & 0 \\
                \bm{{\color{red}1}} & 1 & 1 & 1
            \end{smallmatrix}\right],
            \left[\begin{smallmatrix}
                1 & 0 & 1 & 0 \\
                1 & 1 & \bm{{\color{red}1}} & 0 \\
                \bm{{\color{red}1}} & 1 & 1 & 1
            \end{smallmatrix}\right]
        \right\}
    \]

    \sloppy
    The corresponding rankings are $(213, \{12\}34)$, $(123, 12\{34\})$, $(213,
    13\{24\})$ and $(123, \{13\}24)$.\footnotemark{}


    \footnotetext{
        Here $a_1a_2a_3$ is shorthand for the ranking $a_1 \alt a_2 \alt a_3$
        of $A$, and similar for $B$. Elements in brackets are ranked equally.
    }

\end{example}

\Cref{ex:minch} shows that there is no unique chain-minimal operator, since for
a given tournament $K$ there may be several closest chain tournaments to choose
from. In \Cref{sec:match_preference_operators} we introduce a principled way to
single out a \emph{unique} chain tournament and thereby construct a
well-defined chain-minimal operator.

\subsection{A Maximum Likelihood Interpretation}
\label{sec:mle}

So far we have motivated \axiomref{chain-min} as a way to fix
errors in a tournament and recover the ideal or \emph{true} ranking. In this
section we make this notion precise by defining a probabilistic model in which
chain-minimal rankings arise as maximum likelihood estimates.
The maximum likelihood approach has been applied for (non-bipartite)
tournaments (e.g. the Bradley-Terry
model~\cite{bradley_terry_52,gonzalez2014paired}), voting in social choice
theory~\cite{elkind2016rationalizations}, truth
discovery~\cite{wang_truth_2012}, belief merging~\cite{everaere2020} and other
related problems.

In this approach we take an epistemic view of tournament ranking: it is assumed
there exists a true `state of the world' which determines the tournament
results along with objective rankings of $A$ and $B$. A given
tournament $K$ is then seen as a \emph{noisy observation} derived from the
true state, and a \emph{maximum likelihood estimate} is a state for which the
probability of observing $K$ is maximal.

More specifically, a state of the world is represented as a vector of
\emph{skill levels} for the players in $A$ and $B$.\footnotemark{}

\footnotetext{
    For simplicity we use numerical skill levels here, although it would
   suffice to have a partial preorder on $A \sqcup B$ such that each
   $a \in A$ is comparable with every $b \in B$.
}

\begin{definition}
   \label{def:stateworld}

    For a fixed size $m \times n$, a \emph{state of the world} is a tuple
    $\theta = \tuple{\bm{x}, \bm{y}}$, where $\bm{x} \in \R^m$ and
    $\bm{y} \in \R^n$ satisfies the following properties:
   \begin{equation}
        \forall a, a' \in A \quad (
            x_a < x_{a'} \implies \exists b \in B: x_a < y_b \le x_{a'}
        )
        \label{eqn:state_condition_a}
   \end{equation}
   \begin{equation}
        \forall b, b' \in B \quad(
            y_b < y_{b'} \implies \exists a \in A: y_b \le x_a < y_{b'}
        )
        \label{eqn:state_condition_b}
   \end{equation}
   where $A = [m]$, $B = [n]$. Write $\Theta_{m,n}$ for the set of all $m
   \times n$ states.

\end{definition}

For $a \in A$, $x_a$ is the \emph{skill level} of $a$ in state $\theta$ (and
similarly for $y_b$). These skill levels represent the true capabilities of the
players in $A$ and $B$ in state $\theta$: $a$ is capable of defeating $b$ if
and only if $x_a \ge y_b$.
Note that \labelcref{eqn:state_condition_a} suggests a simple form of
\emph{explainability}: $a'$ can only be strictly more skilful than $a$ if there
is some $b \in B$ which \emph{explains} this fact, i.e. some $b$ which $a'$ can
defeat but $a$ cannot (\labelcref{eqn:state_condition_b} is analogous for the
$B$s). These conditions are intuitive if we assume that skill levels are
relative to the sets $A$ and $B$ currently under consideration (i.e.
they do not reflect the abilities of players in future matches against new
contenders outside of $A$ or $B$). Finally note that our states of the world
are \emph{richer} than the output of an operator, in contrast to other work in
the literature~\cite{bradley_terry_52, gonzalez2014paired,
elkind2016rationalizations}. Specifically, a state $\theta$ contains extra
information in the form of comparisons between $A$ and $B$.

Noise is introduced in the observed tournament $K$ via \emph{false positives}
(where $a \in A$ defeats a more skilled $b \in B$ by accident) and \emph{false
negatives} (where $a \in A$ is defeated by an inferior $b \in B$ by
mistake).\footnote{Note that a false positive for $a$ is a false negative for
$b$ and vice versa.} The noise model is therefore parametrised by the false
positive and false negative rates $\bm{\alpha} = \tuple{\alpha_+, \alpha_-}
\in [0,1]^2$, which we assume are the same for all $a \in A$.\footnotemark{} We
also assume that noise occurs independently across all matches.

\footnotetext{
    This is a strong assumption, and it may be more realistic to model the
    false positive/negative rates as a function of $x_a$. We leave this to
    future work.
}

\begin{definition}
   \label{def:probdist}

   Let $\bm{\alpha} = \tuple{\alpha_+, \alpha_-} \in [0,1]^2$. For each $m,
   n \in \N$ and $\theta = \tuple{\bm{x}, \bm{y}} \in \Theta_{m,n}$, consider
   independent binary random variables $X_{ab}$ representing the outcome of a
   match between $a \in [m]$ and $b \in [n]$, where
   \begin{equation}
        \label{eqn:probdist_random_var_one}
        P_{\bm{\alpha}}(X_{ab} = 1 \mid \theta)
        = \begin{cases}
            \alpha_+,& x_a < y_b \\
            1 - \alpha_-,& x_a \ge y_b
        \end{cases}
   \end{equation}
   \begin{equation}
        \label{eqn:probdist_random_var_zero}
        P_{\bm{\alpha}}(X_{ab} = 0 \mid \theta)
        = \begin{cases}
            1 - \alpha_+,& x_a < y_b \\
            \alpha_-,& x_a \ge y_b
        \end{cases}
   \end{equation}

   This defines a probability distribution $P_{\bm{\alpha}}({\cdot} \mid
   \theta)$ over $m \times n$ tournaments by
   \[
      P_{\bm{\alpha}}(K \mid \theta) = \prod_{(a, b) \in [m] \times [n]}{
           P_{\bm{\alpha}}(X_{ab} = K_{ab} \mid \theta)
      }
   \]
\end{definition}

Here $P_{\bm{\alpha}}(K \mid \theta)$ is the probability of observing the
tournament results $K$ when the false positive and negative rates are given by
$\bm{\alpha}$ and the true state of the world is $\theta$. Note that the
four cases in \labelcref{eqn:probdist_random_var_one} and
\labelcref{eqn:probdist_random_var_zero} correspond to a false positive, true
positive, true negative and false negative respectively. We can now define a
maximum likelihood operator.

\begin{definition}

    Let $\bm{\alpha} \in [0,1]^2$ and $m, n \in \N$. Then $\theta \in
    \Theta_{m,n}$ is a \emph{maximum likelihood estimate} (MLE) for an $m
    \times n$ tournament $K$ w.r.t $\bm{\alpha}$ if $\theta \in
    \argmax_{\theta' \in \Theta_{m,n}}{P_{\bm{\alpha}}(K \mid \theta')}$. An
    operator $\phi$ is a \emph{maximum likelihood operator} w.r.t
    $\bm{\alpha}$ if for any $m, n \in \N$ and any $m \times n$ tournament $K$
    there is an MLE $\theta = \tuple{\bm{x}, \bm{y}} \in \Theta_{m,n}$ for $K$
    such that $a \ale_K^\phi a'$ iff $x_a \le x_{a'}$ and $b \ble_K^\phi b'$
    iff $y_b \le y_{b'}$.

\end{definition}

Now, consider the tournament $K_\theta$ associated with each state $\theta =
\tuple{\bm{x}, \bm{y}}$, given by $[K_\theta]_{ab} = 1$ if $x_a \ge y_b$ and
$[K_\theta]_{ab} = 0$ otherwise. Note that $K_\theta$ is the unique tournament
with non-zero probability when
there are no false positive or false negatives. Expressed in terms of
$K_\theta$, the MLEs take a particularly simple form if $\alpha_+ = \alpha_-$,
i.e. if false positives and false negatives occur at the same rate.

\begin{lemma}
   \label{result:mle_hamming}

   Let $\bm{\alpha} = \tuple{\beta, \beta}$ for some $\beta < \frac{1}{2}$.
   Then $\theta$ is an MLE for $K$ if and only if $\theta \in \argmin_{\theta'
   \in \Theta_{m,n}}{d(K, K_{\theta'})}$.
\end{lemma}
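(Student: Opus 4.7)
The plan is to compute $P_{\bm{\alpha}}(K \mid \theta)$ in closed form, observing that in the symmetric case $\alpha_+ = \alpha_- = \beta$ the per-match probabilities only depend on whether the observed result $K_{ab}$ agrees with the state-induced result $[K_\theta]_{ab}$. From there, the log-likelihood will be a strictly decreasing affine function of the Hamming distance $d(K, K_\theta)$, so MLEs coincide with Hamming-minimisers.

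Concretely, I would first unpack \Cref{def:probdist} under the assumption $\alpha_+ = \alpha_- = \beta$ to observe that, for any match $(a,b)$,
\[
    P_{\bm{\alpha}}(X_{ab} = K_{ab} \mid \theta)
    = \begin{cases}
        1 - \beta, & [K_\theta]_{ab} = K_{ab}, \\
        \beta,     & [K_\theta]_{ab} \ne K_{ab},
    \end{cases}
\]
by checking each of the four cases (true positive, false negative, false positive, true negative). Taking the product over all $(a,b) \in [m] \times [n]$ and letting $d = d(K, K_\theta)$ then gives
\[
    P_{\bm{\alpha}}(K \mid \theta) = \beta^{d} (1 - \beta)^{mn - d}.
\]

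Next I would take logarithms to obtain
\[
    \log P_{\bm{\alpha}}(K \mid \theta)
    = mn \log(1 - \beta) + d(K, K_\theta)\, \log\!\tfrac{\beta}{1 - \beta}.
\]
Since $\beta < \tfrac{1}{2}$ we have $\beta/(1-\beta) < 1$, so the coefficient of $d(K, K_\theta)$ is strictly negative. Hence, as $\theta'$ ranges over $\Theta_{m,n}$, maximising $P_{\bm{\alpha}}(K \mid \theta')$ is equivalent to minimising $d(K, K_{\theta'})$, which is exactly the claim.

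There is essentially no real obstacle here; the only point requiring a moment's care is matching the four cases in \labelcref{eqn:probdist_random_var_one}--\labelcref{eqn:probdist_random_var_zero} with the two values of $[K_\theta]_{ab}$ defined from $\theta$, to justify the clean two-case expression above. Everything else is a short algebraic manipulation.
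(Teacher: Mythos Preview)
Your proposal is correct and follows essentially the same approach as the paper: compute the likelihood, observe it is a strictly decreasing function of $d(K,K_\theta)$ when $\beta<\tfrac12$, and conclude. The only difference is presentational---the paper first records a general four-term product formula for arbitrary $\bm{\alpha}$ (grouping matches by the four false/true positive/negative cases via $K(a)\setminus K_\theta(a)$, $K(a)\cap K_\theta(a)$, etc.) and then specialises to $\alpha_+=\alpha_-=\beta$, whereas you exploit the symmetry immediately to collapse to the two-case per-match expression; your route is slightly more direct but otherwise identical in substance.
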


\begin{proof}[Proof (sketch)]

    Let $K$ be an $m \times n$ tournament. It can be shown (and we do so in the
    appendix) that for any $\theta \in \Theta_{m,n}$
    \begin{equation*}
       \begin{split}
           P_{\bm{\alpha}}(K \mid \theta)
           =
           \Big(
           \prod_{a \in A}
             &\alpha_+^{|K(a) \setminus K_\theta(a)|}
             (1 - \alpha_-)^{|K(a) \cap K_\theta(a)|}
             \\
             &\quad (1 - \alpha_+)^{|B \setminus (K(a) \cup K_\theta(a))|}
             \alpha_-^{|K_\theta(a) \setminus K(a)|}
           \Big)
       \end{split}
    \end{equation*}
    Plugging in $\alpha_+ = \alpha_- = \beta$ and simplifying, one can obtain
    \[
       \begin{aligned}
       P_{\bm{\alpha}}(K \mid \theta)
           &=  c
               \prod_{a \in A}{
                \left(
                  \frac{\beta}{1 - \beta}
                \right)^{
                  |K(a) \symdiff K_\theta(a)|
                }
           }
       \end{aligned}
    \]
    where $X \symdiff Y = (X \setminus Y) \cup (Y \setminus X)$ is the
    symmetric difference of two sets $X$ and $Y$, and $c = (1
    - \beta)^{|A| \cdot |B|}$ is a positive constant that does not depend on
    $\theta$. Now, $P_{\bm{\alpha}}(K \mid \theta)$ is positive, and is maximal
    when its logarithm is. We have
    \[
       \begin{aligned}
           \log{P_{\bm{\alpha}}(K \mid \theta)}
           &=
              \log{c}
              +
              \log{\left(\frac{\beta}{1 - \beta}\right)}
              \sum_{a \in A}{
                  |K(a) \symdiff K_\theta(a)|
              } \\
           &=
              \log{c}
              +
              \log{\left(\frac{\beta}{1 - \beta}\right)}
              d(K, K_\theta)
       \end{aligned}
    \]
    Since $\log{c}$ is constant and $\beta < 1/2$ implies
    $\log{\left(\frac{\beta}{1 - \beta}\right)} < 0$, it follows that
    $\log{P_{\bm{\alpha}}(K \mid \theta)}$ is maximised exactly when $d(K,
    K_\theta)$ is minimised, which proves the result.
\end{proof}

This result characterises the MLE states for $K$ as those for which $K_\theta$
is the closest to $K$. As it turns out, the tournaments $K_\theta$ that arise
in this way are exactly those with the chain property.

\begin{lemma}
   \label{result:chain_iff_ktheta}

   An $m \times n$ tournament $K$ has the chain property if and only if $K =
   K_\theta$ for some $\theta \in \Theta_{m,n}$.

\end{lemma}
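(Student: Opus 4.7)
The plan is to prove the two directions separately. The backward direction is immediate: if $K = K_\theta$ for some state $\theta = \tuple{\bm{x}, \bm{y}}$, then by construction $K(a) = \{b \in B \mid x_a \ge y_b\}$, so for any $a, a' \in A$ with $x_a \le x_{a'}$ one has $K(a) \subseteq K(a')$. Since $\bm{x}$ takes values in the totally ordered set $\R$, the family $\{K(a) \mid a \in A\}$ is totally ordered by inclusion, so $\anle_K$ is a total preorder.

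The forward direction is the substantive part. Given a chain tournament $K$, I would construct a witnessing state explicitly. First, partition $A$ into equivalence classes $A_1 < \cdots < A_p$ under $\anle_K$, and $B$ into classes $B_1 < \cdots < B_q$ under $\bnle_K$ (where smaller index means weaker player on both sides). Using the chain property and its dual, the neighbourhood of any $a \in A_i$ is downward closed with respect to the $B$-ranking, so $K(A_i) = B_1 \cup \cdots \cup B_{\tau(i)}$ for some $\tau(i) \in \{0, 1, \ldots, q\}$, and $\tau$ is strictly increasing because the $A_i$ are pairwise distinct classes. Then set $x_a = \tau(i)$ for every $a \in A_i$ and $y_b = k - 1/2$ for every $b \in B_k$.

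Verifying $K_\theta = K$ is a one-line calculation, since $x_a \ge y_b$ reduces to $\tau(i) \ge k$ by integrality, which holds exactly when $b \in K(a)$. The main obstacle, and the reason for the $-1/2$ offset, is checking the explainability conditions \labelcref{eqn:state_condition_a} and \labelcref{eqn:state_condition_b}. Condition \labelcref{eqn:state_condition_a} reduces, whenever $\tau(i) < \tau(i')$, to finding $k \in \{\tau(i)+1, \ldots, \tau(i')\}$ with $B_k$ nonempty, which is immediate from $\tau(i') \le q$. Condition \labelcref{eqn:state_condition_b} is more delicate: for $k < k'$ we need a class $A_i$ with $\tau(i) \in \{k, \ldots, k'-1\}$. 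Here I would appeal to the definition of the $B$-classes: since $B_k$ and $B_{k'}$ are distinct classes with $k < k'$, we have $K^{-1}(B_k) \supsetneq K^{-1}(B_{k'})$, so there exists some $a$ that defeats some $b \in B_k$ but not some $b' \in B_{k'}$. The class $A_i$ containing $a$ then satisfies $k \le \tau(i) < k'$, providing the required witness.
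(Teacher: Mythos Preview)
Your proposal is correct. The backward direction matches the paper's argument essentially verbatim.

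For the forward direction, your construction differs from the paper's. The paper sets $x_a = |\{a' \in A : K(a') \subseteq K(a)\}|$ and $y_b = \min\{x_a : a \in K^{-1}(b)\}$ (or $1 + |A|$ when $K^{-1}(b) = \emptyset$), working entirely from the $A$-side preorder and never explicitly partitioning $B$. You instead exploit the rank structure on \emph{both} sides, setting $x_a$ to the number of $B$-classes defeated by $a$ and using the $-1/2$ offset to keep the $x$- and $y$-values interleaved. Your approach therefore relies on the dual form of the chain property (that $\bnle_K$ is a total preorder), which the paper states but does not invoke in this proof; in exchange, your construction is more symmetric and makes the verifications of \labelcref{eqn:state_condition_a} and \labelcref{eqn:state_condition_b} nearly parallel, whereas the paper handles the special case $K^{-1}(b) = \emptyset$ by an ad hoc value. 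Both routes are equally valid; the paper's is marginally more self-contained, while yours foreshadows the rank-interleaving viewpoint that reappears in \Cref{result:chain_def_ranks_characterisation} and \Cref{sec:interleaving}.
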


The proof of \Cref{result:chain_iff_ktheta} relies crucially on
\labelcref{eqn:state_condition_a} and \labelcref{eqn:state_condition_b} in the
definition of a state. Combining all the results so far we obtain our first
main result: the maximum likelihood operators for $\bm{\alpha} =
\tuple{\beta,\beta}$ are exactly the chain-minimal operators.

\begin{theorem}
   \label{result:mle_iff_chainmin_operator}
   Let $\bm{\alpha} = \tuple{\beta, \beta}$ for some $\beta < \frac{1}{2}$.
   Then $\phi$ is a maximum likelihood operator w.r.t $\bm{\alpha}$
   if and only if $\phi$ satisfies \axiomref{chain-min}.
\end{theorem}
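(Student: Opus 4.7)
The plan is to combine the two preceding lemmas, which together set up a very tight correspondence between states and chain tournaments. Lemma \ref{result:chain_iff_ktheta} tells me that $\{K_\theta : \theta \in \Theta_{m,n}\}$ is exactly $\ch_{m,n}$, and Lemma \ref{result:mle_hamming} tells me that $\theta$ is an MLE for $K$ if and only if $K_\theta$ lies at minimum Hamming distance from $K$ among all such $K_{\theta'}$. Combining these, $\theta$ is an MLE for $K$ if and only if $K_\theta \in \minch{K}$. So the MLE states of $K$ and the elements of $\minch{K}$ are in correspondence via $\theta \mapsto K_\theta$; this is the backbone of the proof.

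The bridging lemma I need, which is not explicitly stated earlier, is that for any state $\theta = \tuple{\bm{x}, \bm{y}}$ the ordering of $A$ induced by the skill vector $\bm{x}$ coincides with ${\anle_{K_\theta}}$, and similarly for $B$ with $\bm{y}$ and ${\bnle_{K_\theta}}$. One direction is immediate: if $x_a \le x_{a'}$ then $\{b : x_a \ge y_b\} \subseteq \{b : x_{a'} \ge y_b\}$, i.e.\ $K_\theta(a) \subseteq K_\theta(a')$. The converse is where the explainability conditions \labelcref{eqn:state_condition_a} and \labelcref{eqn:state_condition_b} in Definition \ref{def:stateworld} earn their keep: if $x_{a'} < x_a$, the condition produces some $b \in B$ with $x_{a'} < y_b \le x_a$, which exhibits $b \in K_\theta(a) \setminus K_\theta(a')$ and so rules out $a \anle_{K_\theta} a'$. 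I expect this step to be the main (and essentially only) obstacle; everything else is a matter of chasing definitions.

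With the bridging observation in hand, both directions of the theorem fall out. For the ($\Leftarrow$) direction, suppose $\phi$ satisfies \axiomref{chain-min} and fix $K$; pick $K' \in \minch{K}$ with $\phi(K) = ({\anle_{K'}}, {\bnle_{K'}})$. By Lemma \ref{result:chain_iff_ktheta}, write $K' = K_\theta$ for some $\theta = \tuple{\bm{x}, \bm{y}}$; since every chain tournament arises this way, minimality of $d(K, K_\theta)$ over $\Theta_{m,n}$ is equivalent to $K_\theta \in \minch{K}$, so by Lemma \ref{result:mle_hamming} $\theta$ is an MLE for $K$. The bridging observation then gives ${\anle_{K_\theta}} = (\ale^\phi_K)$ read off from $\bm{x}$, and similarly for $B$, so $\phi$ is an MLE operator.

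For the ($\Rightarrow$) direction, suppose $\phi$ is a maximum likelihood operator w.r.t.\ $\bm{\alpha}$, and for $K$ take an MLE $\theta = \tuple{\bm{x}, \bm{y}}$ witnessing $\phi(K)$. Lemma \ref{result:mle_hamming} gives $\theta \in \argmin_{\theta' \in \Theta_{m,n}} d(K, K_{\theta'})$, and then Lemma \ref{result:chain_iff_ktheta} upgrades this to $K_\theta \in \minch{K}$. The bridging observation identifies the $\bm{x}$-ordering with $\anle_{K_\theta}$ and the $\bm{y}$-ordering with $\bnle_{K_\theta}$, hence $\phi(K) = ({\anle_{K_\theta}}, {\bnle_{K_\theta}})$, verifying \axiomref{chain-min}.
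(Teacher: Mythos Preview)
Your proposal is correct and follows essentially the same approach as the paper: combine \Cref{result:mle_hamming} and \Cref{result:chain_iff_ktheta} to get that $\theta$ is an MLE for $K$ iff $K_\theta \in \minch{K}$, and then use the bridging fact that the skill-level orderings coincide with ${\anle_{K_\theta}}$ and ${\bnle_{K_\theta}}$ (which the paper records separately as \Cref{result:ktheta_ordering} in the appendix, with the same proof you sketch via \labelcref{eqn:state_condition_a} and \labelcref{eqn:state_condition_b}). The only cosmetic difference is that the paper phrases the conclusion as a single reformulation of \axiomref{chain-min} rather than two directions, but the content is identical.
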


\begin{proof}[Proof (sketch)]

    First note that by \Cref{result:mle_hamming}, a state $\theta$ is an MLE
    for an $m \times n$ tournament $K$ iff $K_\theta$ is closest to $K$ amongst
    all other tournaments $\{K_{\theta'} \mid \theta' \in \Theta_{m,n}\}$. But
    by \Cref{result:chain_iff_ktheta}, this set is exactly the $m \times n$
    tournaments with the chain property. It follows from the definition of
    $\minch{K}$ that $\theta$ is an MLE if and only if $K_\theta \in
    \minch{K}$. Consequently, $K' \in \minch{K}$ if and only if $K' = K_\theta$
    for some MLE $\theta$ for $K$.
    We see that \axiomref{chain-min} can be
    equivalently stated as follows: for all $K$ there exists an MLE
    $\theta$ such that $\phi(K) = ({\anle_{K_\theta}}, {\bnle_{K_\theta}})$.
    Using properties \labelcref{eqn:state_condition_a} and
    \labelcref{eqn:state_condition_b} in \Cref{def:stateworld} for $\theta$ it
    is straightforward to show that $a \anle_{K_\theta} a'$ iff $x_a \le
    x_{a'}$ and $b \bnle_{K_\theta} b'$ iff $y_b \le y_{b'}$ for all $a, a' \in
    A$, $b, b' \in B$ (where $\theta = \tuple{\bm{x}, \bm{y}}$). This means
    that the above reformulation of \axiomref{chain-min} coincides with the
    definition of a maximum likelihood operator, and we are done.
\end{proof}

Similar results can be obtained for
other limiting values of $\bm{\alpha}$. If $\alpha_+ = 0$ and $\alpha_- \in (0,
1)$ then the MLE operators correspond to \emph{chain completion}: finding
the minimum number of edge \emph{additions} required to make $G_K$ a chain graph. This
models situations where false positives never occur, although false negatives
may (e.g. numerical entry questions in the case where $A$ represents students
and $B$ exam questions~\cite{jiao2017algorithms}). Similarly, the case
$\alpha_- = 0$ and $\alpha_+ \in (0, 1)$ corresponds to \emph{chain deletion},
where edge additions are not allowed.

\section{Axiomatic analysis}
\label{sec:axiomatic_analysis}

Chain-minimal operators have theoretical backing in a probabilistic sense due
to the results of \Cref{sec:mle}, but are they appropriate ranking methods in
practise? To address this question we consider the \emph{normative} properties
of chain-minimal operators via the axiomatic method of social choice theory. We
formulate several axioms for bipartite tournament ranking
and assess whether they
are compatible with \axiomref{chain-min}. It will be seen that an important
\emph{anonymity} axiom fails for all chain-minimal operators; later in
\Cref{sec:match_preference_operators} we describe a scenario in which this is
acceptable and define a class of concrete operators for this case, and in
\Cref{sec:relaxing_chain_min} we relax the \axiomref{chain-min} requirement in
order to gain anonymity.

\subsection{The Axioms}

We will consider five axioms -- mainly adaptations of standard social choice
properties to the bipartite tournament setting.

\inlineheading{Symmetry Properties}
We consider two symmetry properties. The first is a classic \emph{anonymity}
axiom, which says that an operator $\phi$ should not be sensitive to the
`labels' used to identify participants in a tournament. Axioms of this form are
standard in social choice theory; a tournament version goes at least as far
back as~\cite{rubinstein1980ranking}.

We need some notation: for a tournament $K$ and permutations $\sigma: A \to A$,
$\pi: B \to B$, let $\sigma(K)$ and $\pi(K)$ denote the tournament obtained by
permuting the rows and columns of $K$ by $\sigma$ and $\pi$ respectively, i.e.
$[\sigma(K)]_{ab} = K_{\sigma^{-1}(a), b}$ and $[\pi(K)]_{ab} = K_{a,
\pi^{-1}(b)}$. Note that in the statement of the axioms we omit universal
quantification over $K$, $a, a' \in A$ and $b, b' \in B$ for
brevity.

\begin{axiom}[anon]
    Let $\sigma:A \to A$ and $\pi:B \to B$ be permutations. Then $a \ale_K^\phi
    a'$ iff $\sigma(a) \ale_{\pi(\sigma(K))}^\phi \sigma(a')$.
\end{axiom}

Our second axiom is specific to bipartite tournaments, and expresses a
\emph{duality} between the two sides $A$ and $B$: given the two sets of
conceptually disjoint entities participating in a bipartite tournament, it
should not matter which one we label $A$ and which one we label $B$. We need
the notion of a \emph{dual tournament}.

\begin{definition}

    The \emph{dual tournament} of $K$ is $\dual{K} = \bm{1} - K^\tr$, where
    $\bm{1}$ denotes the matrix consisting entirely of 1s.

\end{definition}

$\dual{K}$ is essentially the same tournament as $K$, but with the roles of $A$
and $B$ swapped. In particular, $A_K = B_\dual{K}$, $B_K = A_\dual{K}$ and
$K_{ab} = 1$ iff $\dual{K}_{ba} = 0$. Also note that $\dual{\dual{K}} = K$.
The duality axiom states that the ranking of the $B$s in $K$ is the same as the
$A$s in $\dual{K}$.

\begin{axiom}[dual]
    $b \ble_K^\phi b'$ iff $b \ale_\dual{K}^\phi b'$.
\end{axiom}

Whilst \axiomref{dual} is not necessarily a universally desirable property --
one can imagine situations where $A$ and $B$ are not fully abstract and should
not be treated symmetrically -- it is important to consider in any study of
bipartite tournaments. Note that \axiomref{dual} implies $a \ale_K^\phi
a'$ iff $a \ble_{\dual{K}}^\phi a'$, so that a \axiomref{dual}-operator can be
defined by giving the ranking for one of $A$ or $B$ only, and defining the
other by duality. This explains our choice to define \axiomref{anon} (and
subsequent axioms) solely in terms of the $A$ ranking: the analogous anonymity
constraint for the $B$ ranking follows from \axiomref{anon} together with
\axiomref{dual}.

\inlineheading{An Independence Property}
\emph{Independence axioms} play a crucial role in social choice. We present a
bipartite adaptation of a classic axiom introduced
in~\cite{rubinstein1980ranking}, which has subsequently been called
\emph{Independence of Irrelevant Matches}~\cite{gonzalez2014paired}.

\begin{axiom}[IIM]

    If $K_1, K_2$ are tournaments of the same size with identical $a$-th and
    $a'$-th rows, then $a \ale_{K_1}^\phi a'$ iff $a \ale_{K_2}^\phi a'$.

\end{axiom}

\axiomref{IIM} is a strong property, which says the relative ranking of $a$ and
$a'$ does not depend on the results of any match not involving $a$ or $a'$.
This axiom has been questioned for generalised
tournaments~\cite{gonzalez2014paired}, and a similar argument can be made
against it here: although each player in $A$ faces the same opponents, we may
wish to take the \emph{strength} of opponents into account, e.g. by rewarding
victories against highly-ranked players in $B$. Consequently we do not view
\axiomref{IIM} as an essential requirement, but rather introduce it to
facilitate comparison with our work and the existing tournament literature.

\inlineheading{Monotonicity Properties}
Our final axioms are monotonicity properties, which express the idea that
\emph{more victories are better}. The first axiom follows our original
intuition for constructing the natural ranking associated with a chain graph;
namely that $K(a) \subseteq K(a')$ indicates $a'$ has performed at least as
well as $a$.

\begin{axiom}[mon]
    If $K(a) \subseteq K(a')$ then $a \ale_K^\phi a'$.
\end{axiom}

Note that \axiomref{mon} simply says ${\ale_K^\phi}$ extends the (in general,
partial) preorder ${\anle_K}$.
Yet another standard axiom is \emph{positive responsiveness}.

\begin{axiom}[pos-resp]

    If $a \ale_K^\phi a'$ and $K_{a',b} = 0$ for some $b \in B$, then $a
    \alt_{K + \bm{1}_{a', b}}^\phi a'$, where $\bm{1}_{a', b}$ is the matrix
    with 1 in position $(a', b)$ and zeros elsewhere.

\end{axiom}

That is, adding an extra victory for $a$ should only improve its ranking, with
ties now broken in its favour. This version of positive responsiveness was
again introduced in~\cite{rubinstein1980ranking}, where together with
\axiomref{anon} and \axiomref{IIM} it characterises the \emph{points system}
ranking method for round-robin tournaments, which simply ranks players
according to the number of victories. The analogous operator in our framework
is $\phicount$, and it can be shown that $\phicount$ is uniquely characterised
by \axiomref{anon}, \axiomref{IIM}, \axiomref{pos-resp} and \axiomref{dual}.
Finally, note that \axiomref{pos-resp} also acts as a kind of
\emph{strategyproofness}: $a$ cannot improve its ranking by deliberately losing
a match. Specifically, if $K_{ab} = 1$ and $a \ale_K^\phi a'$, then
\axiomref{pos-resp} implies $a \alt_{K - \bm{1}_{ab}}^\phi a'$.

\subsection{Axiom Compatibility with \axiomref{chain-min}}


We come to analysing the compatibility of \axiomref{chain-min} with the axioms.
First, the negative results.

\begin{theorem}
    \label{result:chainmin_axiom_incompatibilities}

    There is no operator satisfying \axiomref{chain-min} and any of
    \axiomref{anon}, \axiomref{IIM} or \axiomref{pos-resp}.

\end{theorem}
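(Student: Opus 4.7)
The plan is to show each of the three incompatibilities separately, using a small concrete counterexample in each case.

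For \axiomref{anon}, I would take the $2 \times 2$ identity $K = \left[\begin{smallmatrix}1 & 0 \\ 0 & 1\end{smallmatrix}\right]$. A direct check of the four single-edge perturbations of $K$ shows that each is a chain tournament, so $\minch{K}$ consists exactly of these four chains at distance $1$, and in every one of them the neighbourhoods of $a_1$ and $a_2$ are strictly nested. Hence any chain-minimal $\phi$ is forced to rank $a_1$ strictly above or below $a_2$. But $K$ is fixed by the simultaneous swap $\sigma = \pi = (1\,2)$, so \axiomref{anon} would require $a_1 \aeq_K^\phi a_2$ --- contradicting the strict forced ordering.

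For \axiomref{IIM}, the plan is to exhibit two tournaments of the same size whose rows for $a_1, a_2$ coincide but whose unique closest chain tournaments impose opposite orderings on that pair. Concretely, take
\[
K_1 = \left[\begin{smallmatrix} 1 & 0 & 0 \\ 0 & 1 & 0 \\ 0 & 1 & 1 \end{smallmatrix}\right],
\qquad
K_2 = \left[\begin{smallmatrix} 1 & 0 & 0 \\ 0 & 1 & 0 \\ 1 & 0 & 1 \end{smallmatrix}\right].
\]
Enumerating the nine single-edge perturbations of each, one checks that $\minch{K_1}$ and $\minch{K_2}$ are both singletons --- obtained respectively by zeroing out row $1$ of $K_1$ and row $2$ of $K_2$ --- forcing the rankings $a_1 \alt a_2 \alt a_3$ in $K_1$ and $a_2 \alt a_1 \alt a_3$ in $K_2$. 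Since rows $1$ and $2$ coincide in $K_1$ and $K_2$, \axiomref{IIM} would require the same ordering of $a_1, a_2$ in both, yielding a contradiction.

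For \axiomref{pos-resp}, I would leverage the strategyproofness corollary noted just after the axiom: if $a \neq a'$, $K_{ab} = 1$ and $a \ale_K^\phi a'$, then $a \alt_{K - \bm{1}_{ab}}^\phi a'$. Consider the two chain tournaments
\[
K = \left[\begin{smallmatrix}1 & 0 \\ 1 & 1\end{smallmatrix}\right], \qquad
K' = \left[\begin{smallmatrix}1 & 1 \\ 0 & 1\end{smallmatrix}\right].
\]
Each is its own unique closest chain tournament, so \axiomref{chain-min} forces $a_1 \alt_K^\phi a_2$ and $a_2 \alt_{K'}^\phi a_1$. Applying strategyproofness to $K$ at cell $(2,1)$ gives $a_1 \alt_{K - \bm{1}_{21}}^\phi a_2$, and applied to $K'$ at cell $(1,2)$ gives $a_2 \alt_{K' - \bm{1}_{12}}^\phi a_1$. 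Since $K - \bm{1}_{21} = K' - \bm{1}_{12}$ is the $2 \times 2$ identity, $\phi$ would have to satisfy both $a_1 \alt a_2$ and $a_2 \alt a_1$ at this tournament --- impossible.

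The main obstacle is locating sufficiently compact counterexamples whose $\minch$-sets can be fully enumerated and which pin down a specific strict ordering. The \axiomref{pos-resp} case is the most delicate, since \axiomref{pos-resp} only directly constrains the $A$-ranking after edge additions; the key trick is to combine two chain tournaments that collapse to the same tournament after an edge deletion, so that strategyproofness extracts contradictory constraints on the ranking of the collapsed tournament.
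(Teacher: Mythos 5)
Your counterexamples for \axiomref{anon} and \axiomref{IIM} are exactly the ones used in the paper's own proof, and the reasoning for those two cases is correct. The problem is the \axiomref{pos-resp} part, which takes a different route and does not go through.

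The step ``applying strategyproofness to $K$ at cell $(2,1)$ gives $1 \alt_{K - \bm{1}_{21}}^\phi 2$'' is not a valid instance of the strategyproofness corollary. That corollary reads: if $K_{ab}=1$ and $a \ale_K^\phi a'$, then $a \alt_{K-\bm{1}_{ab}}^\phi a'$ --- the player demoted in the conclusion is the one who loses the edge. At cell $(2,1)$ of $K=\left[\begin{smallmatrix}1&0\\1&1\end{smallmatrix}\right]$ the edge belongs to player $2$, the required hypothesis $2 \ale_K^\phi 1$ is false (chain-min forces $1 \alt_K^\phi 2$), and your conclusion demotes player $1$, who lost no edge. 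More fundamentally, no contradiction can be extracted from this configuration: writing $L$ for the $2\times 2$ identity, \axiomref{pos-resp} applied at $L$ yields only that $1 \ale_L^\phi 2$ implies $1 \alt_{L+\bm{1}_{21}}^\phi 2$, and that $2 \ale_L^\phi 1$ implies $2 \alt_{L+\bm{1}_{12}}^\phi 1$; both conclusions are exactly the rankings that chain-min already forces at your $K$ and $K'$, so every possible ranking of $L$ is consistent with all the constraints. The paper's counterexample is built differently: it exhibits a $4\times 3$ tournament $K$ whose \emph{unique} closest chain tournament $K'$ assigns players $1$ and $2$ equal neighbourhoods, so chain-min forces the tie $1 \aeq_K^\phi 2$; then \axiomref{pos-resp} applied to the missing entry $K_{23}=0$ demands $1 \alt_{K+\bm{1}_{23}}^\phi 2$, but $K+\bm{1}_{23}=K'$ is itself a chain tournament, at which chain-min forces $1 \aeq^\phi 2$ again. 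The ingredient your construction is missing is a chain-min-forced \emph{tie} that \axiomref{pos-resp} must break strictly; two tournaments with opposite strict rankings collapsing onto a common predecessor cannot produce the clash, because \axiomref{pos-resp} constrains the ranking after an edge addition only in the direction those strict rankings already point.
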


The counterexample for \axiomref{anon} is particularly simple: take $K =
\left[\begin{smallmatrix} 1&0\\0&1 \end{smallmatrix}\right]$. Swapping the rows
and columns brings us back to $K$, so \axiomref{anon} implies $1, 2 \in A$ rank
equally. However, it is easily seen for every $K' \in \minch{K}$, either $K(1)
\subset K(2)$ or $K(2) \subset K(1)$, i.e no chain-minimal operator can rank 1
and 2 equally.

The MLE results of \Cref{sec:mle} provides informal
explanation for this result. For $K$ above to arise in the noise model of
\Cref{def:probdist} there must have been two `mistakes' (false positives or
false negatives). This is less likely than a single mistake from just one of
$1,2 \in A$, but the likelihood maximisation forces us to choose one or the
other. A similar argument explains the \axiomref{pos-resp} failure.

It is also worth noting that \axiomref{anon} only fails at the last step of
chain editing, where a single element of $\minch{K}$ is chosen. Indeed, the set
$\minch{K}$ itself \emph{does} exhibit the kind of symmetry one might expect:
we have $\minch{\pi(\sigma(K))} = \{\pi(\sigma(K')) \mid K' \in \minch{K}\}$.
This means that an operator which aggregates the rankings from \emph{all} $K'
\in \minch{K}$ -- e.g. any anonymous social welfare function --
would satisfy \axiomref{anon}. The other axioms are compatible with
\axiomref{chain-min}.

\begin{theorem}
    \label{result:chainmin_axiom_compatibilities}

    For each of \axiomref{dual} and \axiomref{mon}, there exists an operator
    satisfying \axiomref{chain-min} and the stated property.

\end{theorem}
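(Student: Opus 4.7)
My plan is to construct chain-minimal operators satisfying each of the two axioms separately.

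For \axiomref{dual}, the key observation is the duality identity $\minch{\dual{K}} = \{\dual{K'} \mid K' \in \minch{K}\}$, which follows from two facts: (i) $K'$ has the chain property if and only if $\dual{K'}$ does, and (ii) $d(\dual{K}, \dual{K'}) = d(K, K')$. I would then construct a selection $s(K) \in \minch{K}$ that commutes with duality, i.e.\ $s(\dual{K}) = \dual{s(K)}$. Concretely: fix any total order $\leq$ on the set of tournaments; for each duality orbit $\{K, \dual{K}\}$, let $s(K)$ be the lex-smallest element of $\minch{K}$ whenever $K \leq \dual{K}$, and otherwise set $s(K) = \dual{s(\dual{K})}$. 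A short computation shows $\bnle_{K'} = \anle_{\dual{K'}}$ for any tournament $K'$ (using $\dual{K'}(b) = A \setminus K'^{-1}(b)$). Setting $\phi(K) = (\anle_{s(K)}, \bnle_{s(K)})$ gives \axiomref{chain-min} by construction, and \axiomref{dual} follows from $\bnle_{s(K)} = \anle_{\dual{s(K)}} = \anle_{s(\dual{K})}$.

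For \axiomref{mon}, the core lemma is: for every $K$, there exists $K^* \in \minch{K}$ with $\anle_K \subseteq \anle_{K^*}$ and $\bnle_K \subseteq \bnle_{K^*}$. Granted this, the operator that picks such a $K^*$ as the witness for \axiomref{chain-min} also satisfies \axiomref{mon}. To prove the lemma, I would start from any $K' \in \minch{K}$ and parameterize its chain structure by the distinct row-neighborhoods $N_1 \subsetneq \cdots \subsetneq N_k \subseteq B$ and the assignment $f: A \to [k]$ with $K'(a) = N_{f(a)}$. A direct calculation yields the exchange inequality: for $K(a) \subseteq K(a')$ and $i < j$,
\[
   |K(a) \symdiff N_j| + |K(a') \symdiff N_i| - |K(a) \symdiff N_i| - |K(a') \symdiff N_j| = 2|(K(a') \setminus K(a)) \cap (N_j \setminus N_i)| \geq 0,
\]
so swapping any inversion $(a, a')$ in $f$ (where $K(a) \subseteq K(a')$ but $f(a) > f(a')$) never increases cost. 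Fixing a linear extension of $\anle_K$ and sorting $A$ into the level sizes $|f^{-1}(1)|, \ldots, |f^{-1}(k)|$ inherited from $K'$ then produces an $\anle_K$-respecting assignment with cost no greater than that of $K'$, hence still globally optimal. A symmetric argument handles the column-side; row and column re-assignments do not interfere, since row swaps act on each column-neighborhood as a bijection and so preserve the inclusion order $\bnle_{K^*}$, and vice versa.

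The main obstacle will be making the exchange argument fully rigorous for the \emph{partial} order $\anle_K$: a standard bubble-sort style termination argument works cleanly for total orders, but for $\anle_K$ one risks introducing new inversions on unrelated pairs while resolving a given one. Linearizing $\anle_K$ first (via any linear extension $\le$) sidesteps this: the canonical assignment that sorts $A$ by $\le$ and fills the levels in order automatically respects $\anle_K$, and its optimality within the chain structure $\{N_i\}$ follows by iteratively applying the exchange inequality to any putative cheaper assignment. Once this is established the remaining verifications — that $K^*$ lies in $\minch{K}$ and that both $\anle_K$- and $\bnle_K$-respect can be achieved simultaneously — are straightforward.
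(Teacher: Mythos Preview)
Your treatment of \axiomref{dual} is correct and essentially identical to the paper's: both use the duality identity $\minch{\dual{K}} = \{\dual{K'} \mid K' \in \minch{K}\}$ together with $\bnle_{K'} = \anle_{\dual{K'}}$, and both build a choice function that commutes with the dual by fixing a total order on $\K$ and selecting on one representative of each orbit $\{K,\dual{K}\}$.

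For \axiomref{mon}, your exchange inequality is exactly the paper's swapping lemma (their computation also yields $2|Y \cap Y'|$ with $Y = K(a') \setminus K(a)$ and $Y' = N_j \setminus N_i$). The gap is in how you apply it. Your proposed resolution --- fix a linear extension $\le$ of $\anle_K$, then take the assignment that sorts $A$ by $\le$ into the inherited level sizes --- does not work, because the exchange inequality only holds for $\anle_K$-\emph{comparable} pairs, whereas sorting by $\le$ moves incomparable pairs too. Concretely: take $A=\{1,2\}$, $B=\{1,2,3,4\}$, $K(1)=\{1,2\}$, $K(2)=\{3,4\}$. One element of $\minch{K}$ is $K'$ with $K'(1)=\{1,2\}$, $K'(2)=\emptyset$ (cost $2$), with chain structure $N_1=\emptyset \subsetneq N_2=\{1,2\}$ and $f(1)=2$, $f(2)=1$. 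Since $1,2$ are $\anle_K$-incomparable, any linear extension is admissible; with $1 < 2$ your ``sorted'' $f^*$ gives $K^*(1)=\emptyset$, $K^*(2)=\{1,2\}$, cost $6$, so $K^* \notin \minch{K}$. Thus the sorted assignment need not be optimal within the chain structure, and your claimed optimality ``by iteratively applying the exchange inequality'' cannot go through: the iterations would need to swap incomparable pairs.

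The paper's fix is not to jump to a canonical target but to swap only $\anle_K$-comparable inversions, processing $A$ in order of $|L(a,\anle_K)|$ and, for each $a$, swapping it (if necessary) with the $\anle_{K'}$-least element of its strict upper set $U(a,\anlt_K)$; an induction verifies that inclusions established at earlier steps survive each swap. A separate short step then forces $K(a)=K(a')\Rightarrow K^*(a)=K^*(a')$ by replacing each $\aneq_K$-class with its cheapest representative. Your non-interference remark about the $B$-side is plausible but is stronger than what \axiomref{mon} requires (it is stated only for the $A$-ranking), and the paper in fact leaves open whether \axiomref{chain-min}, \axiomref{mon} and \axiomref{dual} can be achieved simultaneously.
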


Despite the simplicity of \axiomref{mon},
\Cref{result:chainmin_axiom_compatibilities} is deceptively difficult to prove.
We describe operators satisfying \axiomref{chain-def} and \axiomref{dual} or \axiomref{mon}
non-constructively by first taking an \emph{arbitrary} chain-minimal operator
$\phi$, and using properties of the set $\minch{K}$ to produce $\phi'$
satisfying \axiomref{dual} or \axiomref{mon}. Note also that we have not yet
constructed an operator satisfying \axiomref{dual}, \axiomref{mon} and
\axiomref{chain-min} simultaneously, although we conjecture that such operators
do exist.

\section{Match-preference operators}
\label{sec:match_preference_operators}


The counterexample for \axiomref{chain-min} and \axiomref{anon} suggests that
chain-minimal operators require some form of tie-breaking mechanism when the
tournaments in $\minch{K}$ cannot be distinguished while respecting anonymity.
While this limits the use of chain-minimal operators as general purpose ranking
methods, it is not such a problem if additional information is available to
guide the tie-breaking. In this section we introduce a new class of operators
for this case.

The core idea is to single out a unique chain tournament close to $K$ by paying
attention to not only the \emph{number} of entries in $K$ that need to be
changed to produce a chain tournament, \emph{which} entries. Specifically, we
assume the availability of a total order on the set of matrix indices $\N
\times \N$ (the \emph{matches}) which indicates our willingness to change an
entry in $K$: the higher up $(a, b)$ is in the ranking, the more acceptable it
is to change $K_{ab}$ during chain editing.

This total order -- called the \emph{match-preference relation} -- is fixed for
all tournaments $K$; this means we are dealing with extra information about how
tournaments are \emph{constructed in matrix form}, not extra information about
any specific tournament $K$.

One possible motivation for such a ranking comes from cases where matches occur
at distinct points in time. In this case the matches occurring more recently
are (presumably) more representative of the players' \emph{current} abilities,
and we should therefore prefer to modify the outcome of old matches where
possible.

For the formal definition we need notation for the \emph{vectorisation} of a
tournament $K$: for a total order ${\trianglelefteq}$ on $\N \times \N$ and an
$m \times n$ tournament $K$, we write $\vect_{\trianglelefteq}(K)$ for the
vector in $\{0,1\}^{mn}$ obtained by collecting the entries of $K$ in the order
given by ${\trianglelefteq} \rs (A \times B)$,\footnotemark{} starting with the
minimal entry. That is, $\vect_{\trianglelefteq}(K) = (K_{a_1,b_1}, \ldots,
K_{a_{mn},b_{mn}})$, where $(a_1,b_1), \ldots, (a_{mn},b_{mn})$ is the unique
enumeration of $A \times B$ such that $(a_i,b_i) \trianglelefteq
(a_{i+1},b_{i+1})$ for each $i$.

\footnotetext{
    This denotes the restriction of ${\trianglelefteq}$ to $A \times B$, i.e.
    ${\trianglelefteq} \cap ((A \times B) \times (A \times B))$.
}

The operator corresponding to $\trianglelefteq$ is defined using the notion of
a \emph{choice function}: a function $\alpha$ which maps any tournament $K$ to
an element of $\minch{K}$. Any such function defines a chain-minimal operator
$\phi$ by setting $\phi(K) = ({\anle_{\alpha(K)}}, {\bnle_{\alpha(K)}})$.

\begin{definition}
   \label{def_matchpref_operator}

    Let $\trianglelefteq$ be a total order on $\N \times \N$. Define an
    operator $\phi_{\trianglelefteq}$ according to the choice function
    \begin{equation}
        \label{eqn:match_preference_alpha_definition}
        \alpha_{\trianglelefteq}(K)
        = \argmin_{K' \in \minch{K}}{
            \vect_{\trianglelefteq}(K \oplus K')
        }
    \end{equation}
    where $[K \oplus K']_{ab} = |K_{ab} - K_{ab}'|$, and the minimum is taken
    w.r.t the lexicographic ordering on $\{0,1\}^{|A| \cdot
    |B|}$.\footnotemark{} Operators generated in this way will be called
    \emph{match-preference operators}.

\end{definition}

\footnotetext{
    Note that $K \oplus K'$ is 1 in exactly the entries where $K$ and $K'$
    differ.
}

\begin{example}
    \label{ex:match_preference_example}

    Let $\trianglelefteq$ be the lexicographic order\footnotemark{} on $\N
    \times \N$ so that $\vect_{\trianglelefteq}(K \oplus K')$ is obtained by
    collecting the entries of $K \oplus K'$ row-by-row, from top to
    bottom and left to right.
    Take $K$ from \Cref{ex:minch}. Writing $K_1,\ldots,K_4$ for the elements of
    $\minch{K}$ in the order that they appear in \Cref{ex:minch} and setting
    $v_i = \vect_{\trianglelefteq}(K \oplus K_i)$, we have
    \begin{align*}
        v_1 = (0\bm{{\color{red}1}}00\ 0000\ \bm{{\color{red}1}}0000);
        & \quad \quad
        v_2 = (00\bm{{\color{red}1}}0\ 0000\ \bm{{\color{red}1}}0000) \\
        v_3 = (0000\ 0\bm{{\color{red}1}}00\ \bm{{\color{red}1}}0000);
        & \quad \quad
        v_4 = (0000\ 00\bm{{\color{red}1}}0\ \bm{{\color{red}1}}0000)
    \end{align*}
    The lexicographic minimum is the one with the 1 entries as far right as
    possible, which in this case is $v_4$. Consequently
    $\phi_{\trianglelefteq}$ ranks $K$ according to $K_4$, i.e. $1
    \alt_K^{\phi_{\trianglelefteq}} 2 \alt_K^{\phi_{\trianglelefteq}} 3$ and $1
    \beq_K^{\phi_{\trianglelefteq}} 3 \blt_K^{\phi_{\trianglelefteq}} 2
    \blt_K^{\phi_{\trianglelefteq}} 4$.

\end{example}

\footnotetext{
    That is, $(a, b) \trianglelefteq (a', b')$ iff $a < a'$ or ($a = a'$ and
    $b \le b'$).
}


To conclude the discussion of match-preference operators, we note that one can
compute $\alpha_{\trianglelefteq}(K)$ as the unique closest chain tournament to
$K$ w.r.t a \emph{weighted} Hamming distance, and thereby avoid the need to
enumerate $\minch{K}$ in full as per
\cref{eqn:match_preference_alpha_definition}.

\begin{theorem}
   \label{prop:matchpref_weightings}

    Let $\trianglelefteq$ be a total order on $\N \times \N$. Then for any $m,
    n \in \N$ there exists a function $w: [m] \times [n] \to \R_{\ge 0}$ such
    that for all $m \times n$ tournaments $K$:
    \begin{equation}
        \label{eqn:matchpref_argmin}
        \argmin_{K' \in \ch_{m,n}}{d_w(K, K')} = \{\alpha_{\trianglelefteq}(K)\}
    \end{equation}
    where $d_w(K, K') = \sum_{(a,b) \in [m] \times [n]}{w(a,b) \cdot |K_{ab} -
    K'_{ab}|}$.

\end{theorem}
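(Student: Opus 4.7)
The plan is to construct the weight function $w$ so that minimising $d_w$ implements the two-stage selection hidden inside $\alpha_{\trianglelefteq}$: first pick a Hamming-closest chain tournament, then break ties by lex-minimising the difference vector. The standard trick is a two-scale weighting, where a coarse scale dominates the count of changes and a fine scale, exponentially decreasing in the $\trianglelefteq$-rank, encodes the lex comparison.

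Concretely, let $(a_1,b_1) \trianglelefteq (a_2,b_2) \trianglelefteq \cdots \trianglelefteq (a_{mn},b_{mn})$ be the enumeration of $[m] \times [n]$ obtained by restricting $\trianglelefteq$, and define
\[
    w(a_i, b_i) = 2^{mn} + 2^{mn - i}.
\]
Then for any tournament $K'$,
\[
    d_w(K, K') = 2^{mn} \cdot d(K, K') + \sum_{i=1}^{mn} 2^{mn - i} [K \oplus K']_{(a_i, b_i)}.
\]
The second sum is exactly the integer whose binary expansion (high bit first) is $\vect_{\trianglelefteq}(K \oplus K')$, so it lies in $[0, 2^{mn} - 1]$.

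From here the argument splits cleanly. First I would show that every minimiser of $d_w$ over $\ch_{m,n}$ must lie in $\minch{K}$: because the second sum is strictly less than $2^{mn}$, any $K'$ with $d(K,K') > d(K,K'')$ gives $d_w(K,K') > d_w(K,K'')$, so Hamming-distance-minimisers strictly dominate. Hence the $d_w$-$\argmin$ is a subset of $\minch{K}$. Second, among $K' \in \minch{K}$ the coarse term $2^{mn} \cdot d(K,K')$ is constant, so minimising $d_w$ reduces to minimising the fine sum, which is a strictly monotone bijection between $\{0,1\}^{mn}$ and $\{0, \ldots, 2^{mn}-1\}$ that preserves the lex order on vectors (read high-index-first, i.e.\ from the minimal $\trianglelefteq$-position). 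Thus the $d_w$-minimiser inside $\minch{K}$ is exactly the lex-minimiser of $\vect_{\trianglelefteq}(K \oplus K')$, which by definition is $\alpha_{\trianglelefteq}(K)$. Uniqueness comes for free: different $K' \in \minch{K}$ give distinct difference vectors (since $K$ is fixed), hence distinct values of the fine sum.

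There is no real obstacle here beyond getting the scale separation right; the main thing to verify carefully is that the scaling constant $2^{mn}$ strictly exceeds the maximum possible value of the fine sum (which it does, since the fine sum is bounded by $\sum_{i=1}^{mn} 2^{mn-i} = 2^{mn}-1$), and that the bijection $(v_1,\ldots,v_{mn}) \mapsto \sum_i 2^{mn-i} v_i$ on $\{0,1\}^{mn}$ is strictly order-preserving with respect to the lex order used in \labelcref{eqn:match_preference_alpha_definition}. Both are elementary, so the theorem follows by combining the two reductions.
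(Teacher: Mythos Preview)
Your proof is correct and essentially identical to the paper's: the paper uses $w(a,b) = 1 + 2^{-p(a,b)}$, which is just your weight divided by $2^{mn}$, and then carries out the same two-case argument (outside $\minch{K}$ the coarse term dominates; inside, the fine term implements the lex comparison). Your interpretation of the fine sum as the integer with binary expansion $\vect_{\trianglelefteq}(K \oplus K')$ is a tidy shortcut for what the paper proves via a separate geometric-series lemma.
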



For example, the weights corresponding to $\trianglelefteq$ from
\Cref{ex:match_preference_example} and $m = 2$, $n = 3$ are
$
    w = \left[\begin{smallmatrix}
        1.5 & 1.25 & 1.125 \\
        1.0625 & 1.03125 & 1.015625
    \end{smallmatrix}\right]
$.

\section{Relaxing chain-min}
\label{sec:relaxing_chain_min}

Having studied chain-minimal operators in some detail, we turn to two remaining
problems: \axiomref{chain-min} is incompatible with \axiomref{anon}, and
computing a chain-minimal operator is \complexityclass{NP}-hard. In this
section we obtain both anonymity and tractability by relaxing the
\axiomref{chain-min} requirement to a property we call
\emph{chain-definability}. We go on to characterise the class of operators with
this weaker property via a greedy approximation algorithm, single out a
particularly intuitive instance, and revisit the axioms of
\Cref{sec:axiomatic_analysis}.

\subsection{Chain-definability}

The source of the difficulties with \axiomref{chain-min} lies in the
minimisation aspect of chain editing. A natural way to retain the spirit of
\axiomref{chain-min} without the complications is to require that $\phi(K)$
corresponds to \emph{some} chain tournament, not necessarily one closest to
$K$. We call this property \emph{chain-definability}.

\begin{axiom}[chain-def]
    For every $m \times n$ tournament $K$ there is $K' \in \ch_{m,n}$ such that
    $\phi(K) = ({\anle_{K'}}, {\bnle_{K'}})$.
\end{axiom}

Clearly \axiomref{chain-min} implies \axiomref{chain-def}. `Chain-definable'
operators can also be cast in the MLE framework of \Cref{sec:mle} as those
whose rankings correspond to \emph{some} (not necessarily MLE) state $\theta$.

At first glance it may seem difficult to determine whether a given pair of
rankings correspond to a chain tournament, since the number of such tournaments
grows rapidly with $m$ and $n$.
Fortunately, \axiomref{chain-def} can be characterised without reference to
chain tournaments by considering the number of \emph{ranks} of ${\ale_K^\phi}$
and ${\ble_K^\phi}$. In what follows $\ranks{\preceq}$ denotes the number of
ranks of a total preorder $\preceq$, i.e. the number of equivalence classes of
its symmetric part.

\begin{theorem}
    \label{result:chain_def_ranks_characterisation}

    $\phi$ satisfies \axiomref{chain-def} if and only if $|\ranks{\ale_K^\phi}
    - \ranks{\ble_K^\phi}| \le 1$ for every tournament $K$.
\end{theorem}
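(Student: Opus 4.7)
The plan is to prove both implications by exploiting the nested neighbourhood structure of chain tournaments.

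For the forward direction, suppose \axiomref{chain-def} holds, so $\phi(K) = ({\anle_{K'}}, {\bnle_{K'}})$ for some chain tournament $K'$. Let $r = \ranks{\anle_{K'}}$ and enumerate the distinct neighbourhoods $\{K'(a) : a \in A\}$ as $N_1 \subsetneq \cdots \subsetneq N_r$. By nesting, $K'^{-1}(b)$ depends only on the index $i^\star(b) = \min\{i : b \in N_i\}$ (with $i^\star(b) = r+1$ if $b \notin N_r$), so $\bnle_{K'}$ has at most $r+1$ classes. On the other hand, each $i \in \{1, \ldots, r\}$ is attained as some $i^\star(b)$ by picking any $b \in N_i \setminus N_{i-1}$ (with $N_0 = \emptyset$), so $\bnle_{K'}$ has at least $r$ classes. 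Thus $\ranks{\bnle_{K'}} \in \{r, r+1\}$. Writing $s = \ranks{\bnle_{K'}}$, the dual argument starting from the distinct $K'^{-1}(b)$'s yields $r \in \{s, s+1\}$, and together these give $|r - s| \le 1$.

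For the reverse direction, given rankings $\ale$, $\ble$ on $A, B$ with $|\ranks{\ale} - \ranks{\ble}| \le 1$, I would construct a chain tournament $K'$ realising them exactly. Writing $A_1 \prec \cdots \prec A_r$ and $B_1 \prec \cdots \prec B_s$ for the equivalence classes and $\rho, \tau$ for the corresponding class indices, set $K'_{ab} = 1$ iff $\rho(a) \ge \tau(b)$ in the cases $r \le s$, and iff $\rho(a) > \tau(b)$ in the case $r = s + 1$. A short case analysis on $r - s \in \{-1, 0, 1\}$ then shows that $K'(a) = B_1 \cup \cdots \cup B_{f(\rho(a))}$ for a strictly increasing map $f : \{1,\ldots,r\} \to \{0,1,\ldots,s\}$ (existence of such $f$ requires $r \le s+1$, while the dual requirement on $K'^{-1}$ needs $s \le r+1$, so the rank-gap bound is used in full). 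The strictly nested form of these neighbourhoods immediately establishes the chain property and shows $\anle_{K'} = {\ale}$; the analogous calculation for the $K'^{-1}(b)$'s gives $\bnle_{K'} = {\ble}$.

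The forward direction is essentially a counting argument and presents little difficulty. The main obstacle is the reverse construction: one must check not only that $K'$ has the chain property, but also that $\anle_{K'}$ and $\bnle_{K'}$ agree with the given preorders exactly rather than refining or coarsening them. This is where the three cases enumerated by $r - s$ become important, each requiring a slightly different choice of $K'$ so that all admissible nested neighbourhoods are used appropriately.
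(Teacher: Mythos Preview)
Your proposal is correct and follows essentially the same approach as the paper; in particular, your reverse-direction construction of $K'$ via the case split on $r - s$ coincides with the paper's (the paper's index-shift function $g$ is exactly your $f$, with $f(i)=i$ when $r \le s$ and $f(i)=i-1$ when $r=s+1$).

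One small wrinkle in your forward direction: the claim that each $i \in \{1,\ldots,r\}$ is attained as some $i^\star(b)$ fails when $N_1 = \emptyset$ (i.e.\ when some $a$ has $K'(a)=\emptyset$), since then $N_1 \setminus N_0 = \emptyset$ and no $b$ has $i^\star(b)=1$. This does not break the argument, however: the upper bound $s \le r+1$ holds regardless, and combined with the dual upper bound $r \le s+1$ (which you do invoke) you obtain $|r-s|\le 1$ without ever needing the lower bound $s \ge r$. The paper sidesteps this edge case by first stripping off the at-most-one rank on each side whose neighbourhood is empty, and then exhibiting a bijection between the remaining rank-sets; your $i^\star$ bookkeeping achieves the same end more directly once the redundant lower-bound claim is dropped.
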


\subsection{Interleaving Operators}
\label{sec:interleaving}

According to \Cref{result:chain_def_ranks_characterisation}, to construct a
chain-definable operator it is enough to ensure that the number of ranks of
$\ale_K^\phi$ and $\ble_K^\phi$ differ by at most one. A simple way to achieve
this is to iteratively select and remove the top-ranked players of $A$ and $B$
simultaneously, until one of $A$ or $B$ is exhausted. We call such operators
\emph{interleaving operators}. Closely related ranking methods have been
previously introduced for non-bipartite tournaments by
\citet{bouyssou2004monotonicity}.

Formally, our procedure is defined by two functions $f$ and $g$ which select
the next top ranks given a tournament $K$ and subsets $A' \subseteq A$, $B'
\subseteq B$ of the remaining players.

\begin{definition}
    \label{def:selectionfunction}

     An $\asymb$-\emph{selection function} is a mapping $f: \K \times 2^\N \times
     2^\N \to 2^\N$ such that for any tournament $K$, $A' \subseteq A$ and $B'
     \subseteq B$:
    \begin{inlinelist}
        \item \label{item:f_sel_1} $f(K, A', B') \subseteq A'$;
        \item \label{item:f_sel_2} If $A' \ne \emptyset$ then $f(K, A', B') \ne
              \emptyset$;
        \item \label{item:f_sel_3} $f(K, A', \emptyset) = A'$
    \end{inlinelist}.

    Similarly, a $\bsymb$-\emph{selection function} is a mapping $g: \K \times 2^\N
    \times 2^\N \to 2^\N$ such that
    \begin{inlinelist}
        \item \label{item:g_sel_1} $g(K, A', B') \subseteq B'$;
        \item \label{item:g_sel_2} If $B' \ne \emptyset$ then $g(K, A', B') \ne
              \emptyset$;
        \item \label{item:g_sel_3} $g(K, \emptyset, B') = B'$
    \end{inlinelist}.

\end{definition}

The corresponding interleaving operator ranks players according to how soon
they are selected in this way; the earlier the better.

\begin{definition}
    \label{def:interleaving}

    Let $f$ and $g$ be selection functions and $K$ a tournament. Write $A_0
    = A$, $B_0 = B$, and for $i \ge 0$:
    \[
        A_{i+1} = A_i \setminus f(K, A_i, B_i);
        \quad
        B_{i+1} = B_i \setminus g(K, A_i, B_i)
    \]
    For $a \in A$ and $b \in B$, write $r(a) = \max{\{ i \mid a \in A_i \}}$
    and $s(b) = \max{\{ i \mid b \in B_i \}}$.\footnotemark{} We define the
    corresponding \emph{interleaving operator} $\phi = \intop{f,g}$ by $a
    \ale_K^\phi a'$ iff $r(a) \ge r(a')$ and $b \ble_K^\phi b'$ iff $s(b) \ge
    s(b')$.

    \footnotetext{
        We show in the appendix that the recursive procedure eventually
        terminates with $A_i$ and $B_i$ becoming empty (and remaining so) after
        finitely many iterations, so $r$ and $s$ are well-defined.
    }

\end{definition}

Note that $A_i$ and $B_i$ are the players left remaining after $i$ applications
of $f$ and $g$, i.e. after removing the top $i$ ranks from both sides. Before
giving a concrete example, we note that interleaving is not just \emph{one} way
to satisfying \axiomref{chain-def}, it is the \emph{only} way.

\begin{theorem}
   \label{result:chaindef_iff_interleaving}

    An operator $\phi$ satisfies \axiomref{chain-def} if and only if $\phi =
    \intop{f,g}$ for some selection functions $(f, g)$.

\end{theorem}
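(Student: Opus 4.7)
The plan is to prove both directions. For the $(\Leftarrow)$ direction, I take an interleaving operator $\phi = \intop{f,g}$ and invoke \Cref{result:chain_def_ranks_characterisation}, so it suffices to show $|\ranks{\ale_K^\phi} - \ranks{\ble_K^\phi}| \le 1$. Let $k_A$ be the least $i$ with $A_i = \emptyset$ and $k_B$ analogously. By the non-emptiness conditions \labelcref{item:f_sel_2} and \labelcref{item:g_sel_2}, $A_{i+1} \subsetneq A_i$ whenever $A_i \ne \emptyset$, so the distinct values of $r(a)$ for $a \in A$ are exactly $\{0, 1, \ldots, k_A - 1\}$, giving $\ranks{\ale_K^\phi} = k_A$; symmetrically $\ranks{\ble_K^\phi} = k_B$. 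Then condition \labelcref{item:g_sel_3} ensures that once $A_i = \emptyset$, $g$ returns all of $B_i$ so $B_{i+1} = \emptyset$; symmetrically via \labelcref{item:f_sel_3}. Hence $|k_A - k_B| \le 1$.

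For the $(\Rightarrow)$ direction, given $\phi$ satisfying \axiomref{chain-def}, I construct selection functions by selecting top equivalence classes. Define $f(K, A', B')$ to be the set of $\ale_K^\phi$-maximal elements of $A'$ when $B' \ne \emptyset$, and all of $A'$ when $B' = \emptyset$; define $g$ dually. The three selection-function conditions are routine: inclusion holds by construction, the boundary clauses give \labelcref{item:f_sel_3} and \labelcref{item:g_sel_3} directly, and non-emptiness of the maxima follows from $\ale_K^\phi$ being a total preorder on a finite set. It then remains to verify $\intop{f,g} = \phi$: a straightforward induction shows that while $B_i \ne \emptyset$, $A_i$ is the union of the bottom $\ranks{\ale_K^\phi} - i$ rank classes of $\ale_K^\phi$, with each iteration peeling off the current top class, and symmetrically for $B$.

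The delicate step, and where \Cref{result:chain_def_ranks_characterisation} becomes essential, is reconciling the two sequences when one side is exhausted first. If, say, $k_B < k_A$, then at iteration $k_B + 1$ the fallback clause forces all of $A_{k_B}$ to be removed at once as a single rank block. For this to agree with $\phi$'s own bottom rank class, $A_{k_B}$ must already coincide with that single class, i.e.\ $k_A - k_B = 1$. This is exactly what \Cref{result:chain_def_ranks_characterisation} supplies via \axiomref{chain-def}; without it, the construction could conflate several rank classes on the longer side into one. Termination of the iteration, and hence well-definedness of $r$ and $s$, follows from finiteness of $A$ and $B$ together with the strict-shrinkage observation above.
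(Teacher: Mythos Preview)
Your proof is correct and follows essentially the same approach as the paper: both directions invoke the rank-count characterisation (\Cref{result:chain_def_ranks_characterisation}), and for $(\Rightarrow)$ the paper constructs the identical selection functions $f(K,A',B') = \max(A',{\ale_K^\phi})$ with the same fallback clause when $B' = \emptyset$. The only cosmetic difference is that the paper pads the shorter rank list with a dummy empty class so the induction runs uniformly to a common index $t$, whereas you treat the moment one side is exhausted as a separate case; one small point to tidy is that in your $(\Rightarrow)$ paragraph the symbols $k_A,k_B$ silently shift meaning from ``first-empty indices'' to ``rank counts of $\phi$'', which is harmless here but worth making explicit.
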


\Cref{result:chaindef_iff_interleaving} justifies our study of interleaving
operators, and provides a different perspective on chain-definability via the
selection functions $f$ and $g$. We come to an important example.

\begin{example}
    \label{ex:cardint}

    Define the \emph{cardinality-based interleaving operator} $\phicardint =
    \intop{f,g}$ where $f(K, A', B') = \argmax_{a \in A'}{|K(a) \cap B'|}$ and
    \\  
    $g(K, A', B') = \argmin_{b \in B'}{|K^{-1}(b) \cap A'|}$, so that the
    `winners' at each iteration are the $A$s with the most wins, and the $B$s
    with the least losses, when restricting to $A'$ and $B'$ only.  We take the
    $\argmin$/$\argmax$ to be the emptyset whenever $A'$ or $B'$ is empty.

    \Cref{tab:cardint_example} shows the iteration of the algorithm for a $4
    \times 5$ tournament $K$. In each row $i$ we show $K$ with the rows and
    columns of $A \setminus A_i$ and $B \setminus B_i$ greyed out, so as to
    make it more clear how the $f$ and $g$ values are
    calculated.\footnotemark{} For brevity we also write $f$ and $g$ in place
    of $f(K, A_i, B_i)$ and $g(K, A_i, B_i)$ respectively.

    The $r$ and $s$ values can be read off as 0, 2, 1, 3 for $A$ and 0, 3, 1,
    1, 2 for $B$, giving the ranking on $A$ as $4 \alt 2 \alt 3 \alt 1$, and
    the ranking on $B$ as $2 \blt 5 \blt 3 \beq 4 \blt 1$. Note also that each
    $f(K, A_i, B_i)$ is a rank of $\ale_K^\phi$ (and similar for $g(K, A_i,
    B_i)$), so the rankings can in fact be read off by looking at the $f$ and
    $g$ columns of \Cref{tab:cardint_example}.

    \footnotetext{
        Note that while $f$ and $g$ for $\phicardint$ are independent of the
		greyed out entries, we do not require this property for selection
        functions in general.
    }

\end{example}

\begin{table}
	\caption{Iteration of the interleaving algorithm for $\phicardint$}
    \label{tab:cardint_example}
    \footnotesize

	\def\g#1{{\color{lightgray}#1}}
	\def\r#1{\bm{{\color{red}#1}}}
	\def\kzero{
		\left[
			\begin{smallmatrix}
				1 & 1 & 1 & 1 & 0 \\
				0 & 1 & 0 & 0 & 1 \\
				0 & 1 & 0 & 1 & 1 \\
				0 & 1 & 1 & 0 & 0 \\
			\end{smallmatrix}
		\right]
	}
	\def\kone{
		\left[
			\begin{smallmatrix}
				\g{1} & \g{1} & \g{1} & \g{1} & \g{0} \\
				\g{0} & 1     & 0     & 0     & 1 \\
				\g{0} & 1     & 0     & 1     & 1 \\
				\g{0} & 1     & 1     & 0     & 0 \\
			\end{smallmatrix}
		\right]
	}
	\def\ktwo{
		\left[
			\begin{smallmatrix}
				\g{1} & \g{1} & \g{1} & \g{1} & \g{0} \\
				\g{0} & 1     & \g{0} & \g{0} & 1 \\
				\g{0} & \g{1} & \g{0} & \g{1} & \g{1} \\
				\g{0} & 1     & \g{1} & \g{0} & 0 \\
			\end{smallmatrix}
		\right]
	}
	\def\kthree{
		\left[
			\begin{smallmatrix}
				\g{1} & \g{1} & \g{1} & \g{1} & \g{0} \\
				\g{0} & \g{1} & \g{0} & \g{0} & \g{1} \\
				\g{0} & \g{1} & \g{0} & \g{1} & \g{1} \\
				\g{0} & 1     & \g{1} & \g{0} & \g{0} \\
			\end{smallmatrix}
		\right]
	}

	\def\kpzero{
		\left[
			\begin{smallmatrix}
				1 & 1 & 1 & 1 & \r{1} \\
				0 & 1 & 0 & 0 & 1     \\
				0 & 1 & 0 & 1 & 1     \\
				0 & 1 & 1 & 0 & 0     \\
			\end{smallmatrix}
		\right]
	}
	\def\kpone{
		\left[
			\begin{smallmatrix}
				1 & 1 & 1     & 1 & \r{1} \\
				0 & 1 & 0     & 0 & 1     \\
				0 & 1 & \r{1} & 1 & 1     \\
				0 & 1 & \r{0} & 0 & 0     \\
			\end{smallmatrix}
		\right]
	}

	\def\es{\emptyset}

	\begin{tabular}{ccccccc}
\toprule
$i$ & $K$        & $A_i$         & $B_i$           & $f$         & $g$       & $K'_i$    \\
\midrule
$0$ & $\kzero$   & $\{1,2,3,4\}$ & $\{1,2,3,4,5\}$ & $\{1\}$     & $\{1\}$   & $\kpzero$ \\[2mm]
$1$ & $\kone$    & $\{2,3,4\}$   & $\{2,3,4,5\}$   & $\{3\}$     & $\{3,4\}$ & $\kpone$  \\[2mm]
$2$ & $\ktwo$    & $\{2,4\}$     & $\{2,5\}$       & $\{2\}$     & $\{5\}$   & -         \\[2mm]
$3$ & $\kthree$  & $\{4\}$       & $\{2\}$         & $\{4\}$     & $\{2\}$   & -         \\[2mm]
$4$ & -          & $\es$         & $\es$    	   & $\es$       & $\es$     & -         \\
\bottomrule
	\end{tabular}
\end{table}

The interleaving algorithm can also be seen as a greedy algorithm for
converting $K$ into a chain graph directly. Indeed, by setting the
neighbourhood of each $a \in f(K, A_i, B_i)$ to $B_i$, and removing each $b \in
g(K, A_i, B_i)$ from the neighbourhoods of all $a \in A_{i+1}$, we eventually
obtain a chain graph. We show this process in the $K'_i$ column of
\Cref{tab:cardint_example}, where only three entries need to be
changed.\footnotemark{} The selection functions $f$ and $g$ can therefore be
seen as \emph{heuristics} with the goal of finding a chain graph `close' to
$K$.

\footnotetext{
    In this example $\minch{K}$ contains a single tournament a distance of 2
    from $K$, so $\phicardint$ makes one more change than necessary.
}

The operator $\phicardint$ from \Cref{ex:cardint} uses simple cardinality-based
heuristics, and can be seen as a chain-definable version of $\phicount$ (which
is not chain-definable). It is also the bipartite counterpart to repeated
applications of Copeland's rule \cite{bouyssou2004monotonicity}. Note that
$f(K, A_i, B_i)$ and $g(K, A_i, B_i)$ can be computed in $O(N^2)$ time at each
iteration $i$, where $N = |A| + |B|$. Since there cannot be more than $N$
iterations, it follows that the rankings of $\phicardint$ can be computed in
$O(N^3)$ time.

\subsection{Axiom Compatibility}

We now revisit the axioms of \Cref{sec:axiomatic_analysis} in relation to
chain-definable operators in general and $\phicardint$ specifically. Firstly,
the weakening of \axiomref{chain-min} pays off: \axiomref{chain-def} is
compatible with all our axioms.

\begin{theorem}
    \label{result:chaindef_axiom_compatibilities}

    For each of \axiomref{anon}, \axiomref{dual}, \axiomref{IIM},
    \axiomref{mon} and \axiomref{pos-resp}, there exists an operator satisfying
    \axiomref{chain-def} and the stated property.

\end{theorem}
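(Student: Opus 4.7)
The plan is to construct chain-definable operators for each of the five axioms; different operators may serve different axioms, since the theorem only requires existence in each case. I will exhibit two: a trivial operator $\phi_\top$ handling \axiomref{anon}, \axiomref{dual}, \axiomref{IIM}, and \axiomref{mon}; and a count-based operator $\phi^*$ handling \axiomref{pos-resp}.

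Define $\phi_\top(K) = (A \times A, B \times B)$, the universal preorders on both sides. This is chain-definable since $\phi_\top(K) = ({\anle_{\bm{0}}}, {\bnle_{\bm{0}}})$ for the all-zero chain tournament $\bm{0} \in \ch_{m,n}$, whose neighborhoods are all empty, so both $\anle_{\bm{0}}$ and $\bnle_{\bm{0}}$ are universal. The four listed axioms hold trivially for $\phi_\top$: the conclusion $a \ale a'$ (or $b \ble b'$) required by each axiom is automatically satisfied by a universal preorder.

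For \axiomref{pos-resp}, define $\phi^*$ as follows. Given $K$, enumerate $B$ as $b_{(1)}, \ldots, b_{(n)}$ in descending order of $|K^{-1}(b)|$, breaking ties by index. Construct a chain tournament $K'$ by $K'(a) = \{b_{(1)}, \ldots, b_{(|K(a)|)}\}$, and set $\phi^*(K) = ({\anle_{K'}}, {\bnle_{K'}})$. The neighborhoods $K'(a)$ are nested by $|K(a)|$, making $K'$ a chain tournament, so \axiomref{chain-def} holds by construction. The key feature is that $a \ale_K^{\phi^*} a'$ iff $|K(a)| \leq |K(a')|$, since $K'(a) \subseteq K'(a')$ iff $|K(a)| \leq |K(a')|$ by the nested construction.

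To verify \axiomref{pos-resp}, suppose $a \ale_K^{\phi^*} a'$ and $K_{a',b} = 0$, and let $K'' = K + \bm{1}_{a',b}$. Then $|K''(a)| = |K(a)| \leq |K(a')| < |K(a')| + 1 = |K''(a')|$, so by the key feature applied to $K''$ we have $a \alt_{K''}^{\phi^*} a'$ strictly. The main obstacle throughout is \axiomref{pos-resp}, because chain-definability imposes a structural constraint on the ranks of $\ale$ and $\ble$ (Theorem~\ref{result:chain_def_ranks_characterisation}) that can cause unavoidable ties even between players with different win-counts; the construction sidesteps this by letting the $A$-ranking track the cardinalities $|K(a)|$ directly, so that each added win strictly increases the winner's cardinality and hence its rank, regardless of how the $B$-sort shifts between $K$ and $K''$.
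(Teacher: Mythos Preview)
Your proof is correct. The overall structure matches the paper's: use a trivial operator for the easy axioms and a cardinality-based chain tournament for \axiomref{pos-resp}. There are two minor differences worth noting.

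First, for \axiomref{dual} and \axiomref{mon} the paper invokes \Cref{result:chainmin_axiom_compatibilities} (since \axiomref{chain-min} implies \axiomref{chain-def}), thereby importing the nontrivial swap-and-fix constructions from that proof. Your observation that the all-zero operator $\phi_\top$ already handles \axiomref{dual} and \axiomref{mon} is simpler and self-contained; you avoid the dependency on the earlier, harder result entirely.

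Second, for \axiomref{pos-resp} the paper uses the identity ordering on $B$, setting $K'(a) = \{1,\ldots,|K(a)|\}$, whereas you first sort $B$ by $|K^{-1}(b)|$. The sort is unnecessary for the argument---as you correctly note, the $A$-ranking depends only on the cardinalities $|K(a)|$, regardless of how $B$ is ordered---so the paper's version is marginally cleaner, but both work for the same reason.
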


\balance
Unfortunately, these cannot all hold at the same time. Indeed, taking
$
    K = \left[\begin{smallmatrix}
        0 & 0 & 1 & 1 \\
        0 & 1 & 0 & 1
    \end{smallmatrix}\right]^\tr
$
and assuming \axiomref{anon} and \axiomref{pos-resp}, the ranking on $A$ is
fully determined as $1 \alt 2 \aeq 3 \alt 4$, and $\ranks{\ale_K^\phi} = 3$.
However, \axiomref{anon} with \axiomref{dual} implies the ranking of $B$ is
flat, i.e.  $\ranks{\ble_K^\phi} = 1$. This contradicts \axiomref{chain-def} by
\Cref{result:chain_def_ranks_characterisation}, yielding the following
impossibility result.

\begin{theorem}
    \label{result:chaindef_impossibility}

    There is no operator satisfying \axiomref{chain-def}, \axiomref{anon},
    \axiomref{dual} and \axiomref{pos-resp}.

\end{theorem}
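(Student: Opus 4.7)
The plan is to exploit the counterexample tournament $K$ sketched in the paragraph preceding the statement. Writing $K$ as the $4 \times 2$ matrix whose rows are $(0,0), (0,1), (1,0), (1,1)$, the goal is to use \axiomref{anon} and \axiomref{pos-resp} to pin down the $A$-ranking as $1 \alt 2 \aeq 3 \alt 4$ (three ranks), use \axiomref{anon} together with \axiomref{dual} to collapse the $B$-ranking to a single rank, and then invoke \Cref{result:chain_def_ranks_characterisation} to contradict \axiomref{chain-def}.

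First I would establish $2 \aeq_K^\phi 3$. The permutation $\sigma = (2\ 3)$ on $A$ combined with $\pi = (1\ 2)$ on $B$ leaves $K$ invariant: swapping rows $2$ and $3$ produces middle rows $(1,0), (0,1)$, which are restored by swapping the two columns. \axiomref{anon} then gives $2 \ale_K^\phi 3$ iff $3 \ale_K^\phi 2$, hence equality. Next I would obtain the strict inequalities $1 \alt_K^\phi 2$ and $3 \alt_K^\phi 4$ by a standard Rubinstein-style argument: for the former, consider the auxiliary tournament $K_0$ obtained from $K$ by replacing row $2$ with $(0,0)$, so that rows $1$ and $2$ of $K_0$ coincide. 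Applying \axiomref{anon} with the row swap $(1\ 2)$ and the identity on $B$ gives $1 \aeq_{K_0}^\phi 2$, and since $K = K_0 + \bm{1}_{2,2}$, \axiomref{pos-resp} yields $1 \alt_K^\phi 2$. A symmetric construction replacing row $4$ by $(1,0)$ gives $3 \alt_K^\phi 4$, forcing $\ranks{\ale_K^\phi} = 3$.

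For the $B$-ranking I would use the fact, noted in \Cref{sec:axiomatic_analysis}, that \axiomref{anon} together with \axiomref{dual} yields the analogous anonymity property on the $B$-side, namely $b \ble_K^\phi b'$ iff $\pi(b) \ble_{\pi(\sigma(K))}^\phi \pi(b')$. The permutation pair $(\sigma, \pi) = ((2\ 3), (1\ 2))$ from the first step already fixes $K$, so the same symmetry argument yields $1 \beq_K^\phi 2$, i.e.\ $\ranks{\ble_K^\phi} = 1$. Since $|3 - 1| = 2 > 1$, \Cref{result:chain_def_ranks_characterisation} shows that $\phi$ cannot satisfy \axiomref{chain-def}, delivering the desired contradiction. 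The only delicate steps are verifying the invariance $\pi(\sigma(K)) = K$ and choosing the correct auxiliary tournaments for the two applications of \axiomref{pos-resp}; everything else is a direct substitution into the axioms and an appeal to \Cref{result:chain_def_ranks_characterisation}.
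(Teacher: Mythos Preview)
Your proposal is correct and follows essentially the same route as the paper's proof: the same tournament $K$, the same auxiliary tournaments for the two applications of \axiomref{pos-resp}, the same symmetry $(\sigma,\pi)=((2\ 3),(1\ 2))$ to force $2 \aeq_K^\phi 3$, and the same appeal to \Cref{result:chain_def_ranks_characterisation} for the contradiction. The only cosmetic difference is that for the $B$-ranking the paper applies \axiomref{anon} directly to $\dual{K}$ and then transfers via \axiomref{dual}, whereas you first package \axiomref{anon}+\axiomref{dual} into a derived $B$-side anonymity statement and then apply it; these are equivalent formulations of the same step.
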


For the specific operator $\phicardint$ we have the following.

\begin{theorem}
    \label{result:phicardint_axioms}

    $\phicardint$ satisfies \axiomref{chain-def}, \axiomref{anon},
    \axiomref{dual} and \axiomref{mon}, and does not satisfy \axiomref{IIM} or
    \axiomref{pos-resp}.

\end{theorem}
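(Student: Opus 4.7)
The theorem has six clauses: four positive properties and two impossibilities. My plan is to prove the positive ones by structural arguments about the selection functions defining $\phicardint$ together with induction on the iteration index $i$, and to establish the impossibilities via small hand-crafted counterexamples. Note first that \axiomref{chain-def} is immediate: $\phicardint$ is by construction an interleaving operator, so \Cref{result:chaindef_iff_interleaving} applies directly.

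For \axiomref{anon}, the key fact is that $f$ and $g$ are equivariant under relabelling: for permutations $\sigma$ of $A$ and $\pi$ of $B$ one has $f(\pi(\sigma(K)), \sigma(A'), \pi(B')) = \sigma(f(K, A', B'))$ and similarly for $g$, since both depend only on the cardinalities $|K(a) \cap B'|$ and $|K^{-1}(b) \cap A'|$, which are invariant under joint relabelling. A short induction on $i$ then yields $A_i(\pi(\sigma(K))) = \sigma(A_i(K))$ and the analogous statement for $B_i$, so the $r$- and $s$-values transform covariantly with the permutations, giving \axiomref{anon}. For \axiomref{dual} the identity $|\dual{K}(b) \cap A'| = |A'| - |K^{-1}(b) \cap A'|$ converts the $\argmin$ defining $g$ on $K$ into the $\argmax$ defining $f$ on $\dual{K}$; a matching induction then shows $A_i^{\dual{K}} = B_i^K$ and $B_i^{\dual{K}} = A_i^K$, from which $r_{\dual{K}}$ coincides with $s_K$ on $A_K = B_{\dual{K}}$, giving \axiomref{dual}.

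For \axiomref{mon}, suppose $K(a) \subseteq K(a')$ and let $p = r(a)$; I aim to show $r(a') \le p$. The only non-trivial case is $a' \in A_p$. Since $a \in \argmax_{a'' \in A_p}|K(a'') \cap B_p|$ (by the definition of $r(a) = p$) and $K(a) \subseteq K(a')$ forces $|K(a) \cap B_p| \le |K(a') \cap B_p|$, we conclude that $a'$ is also a maximiser, so $a' \in f(K, A_p, B_p)$ and hence $r(a') \le p$. This gives $a \ale_K^\phi a'$, as required.

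For the impossibilities the main obstacle is that the interleaving dynamics typically behave monotonically under the relevant perturbations, so the counterexamples must exploit the coupling between $f$ and $g$. For \axiomref{IIM} I will give two $3 \times 4$ tournaments agreeing on their first two rows but differing in the third row, engineered so that the $g$-selection at iteration~$0$ removes different columns; the surviving sets $B_1$ then induce different $|K(a) \cap B_1|$ versus $|K(a') \cap B_1|$ comparisons for $a = 1$, $a' = 2$, flipping their relative ranking. For \axiomref{pos-resp} the example I have in mind is the sparse tournament $K = \left[\begin{smallmatrix}1&1&0&0&0&0\\0&0&1&0&0&0\\0&0&0&1&1&1\end{smallmatrix}\right]$, in which players $1$ and $2$ tie at rank $1$; adding a win for player $2$ at column $1$ (which already lies in $K(1)$) causes column~$1$ to survive the first $g$-elimination (its loss count now being $2$ rather than $1$), but inside the resulting $B_1 = \{1\}$ both players still have exactly one win, so they remain tied, violating \axiomref{pos-resp}.
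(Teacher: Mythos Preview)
Your arguments for \axiomref{chain-def}, \axiomref{anon}, \axiomref{dual} and \axiomref{mon} are correct and match the paper's approach essentially line for line; the paper merely packages the equivariance and duality conditions on $f,g$ and the ``maximiser-closed-upward'' property for \axiomref{mon} into a preliminary lemma on interleaving operators, but the content is the same as what you wrote.

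The interesting divergence is \axiomref{pos-resp}. You construct a direct counterexample, and it is correct: in your $K$ all six columns have loss count $1$, so $g$ empties $B$ at iteration $0$ and players $1,2$ tie; after adding the win $(2,1)$, column $1$ now has loss count $2$ and survives into $B_1=\{1\}$, but since both players beat it they still tie. The paper instead dispatches \axiomref{pos-resp} in one line by invoking \Cref{result:chaindef_impossibility}: having already established \axiomref{chain-def}, \axiomref{anon} and \axiomref{dual}, the impossibility result forbids \axiomref{pos-resp}. Your route is self-contained and gives explicit insight into \emph{how} the failure occurs; the paper's route is shorter and highlights that the failure is structural rather than incidental.

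For \axiomref{IIM} your mechanism is exactly right --- vary the third row so that $g$ removes different columns at step $0$, leaving different $B_1$'s that reverse the $|K(1)\cap B_1|$ versus $|K(2)\cap B_1|$ comparison --- but you have not actually exhibited the matrices. The paper does this with a $3\times 3$ pair
\[
K_1=\begin{bmatrix}1&0&0\\0&1&0\\0&1&1\end{bmatrix},\qquad
K_2=\begin{bmatrix}1&0&0\\0&1&0\\1&0&1\end{bmatrix},
\]
which implements precisely your plan (the surviving $B_1$ is $\{2\}$ in $K_1$ and $\{1\}$ in $K_2$). You should include concrete matrices; the $3\times 4$ size you mention is unnecessary.
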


Note that \axiomref{anon} \emph{is} satisfied. This makes $\phicardint$ an
important example of a well-motivated, tractable, chain-definable
and anonymous operator, meeting the criteria outlined at the start of
this section.

\section{Related Work}
\label{sec:related_work}

\inlineheading{On chain graphs}
Chain graphs were originally introduced by \citet{yannakakis1981computing}, who
proved that \emph{chain completion} -- finding the minimum number of edges that
when added to a bipartite graph form a chain graph -- is
\complexityclass{NP}-complete. Hardness results have subsequently been obtained
for chain \emph{deletion}~\cite{natanzon2001complexity} (where only edge
deletions are allowed) and chain \emph{editing}~\cite{drange2015threshold}
(where both additions and deletions are allowed). We refer the reader to the
work of \citet{jiao2017algorithms} and \citet{drange2015threshold} for a more
detailed account of this literature.
Outside of complexity theory, chain graphs have been studied for their spectral
properties in \cite{andelic_2015,ghorbani2017spectral}, and the more general
notion of a \emph{nested colouring} was introduced in \cite{cook2015nested}.

\inlineheading{On tournaments in social choice}
Tournaments have important applications in the design of voting rules, where an
alternative $x$ beats $y$ in a pairwise comparison if a majority of voters
prefer $x$ to $y$.  Various \emph{tournament solutions} have been proposed,
which select a set of `winners' from a given tournament.\footnote{Note that a
ranking, such as we consider in this paper, induces a set of winners by taking
the maximally ranked players.}
Of particular relevance to our work are the \emph{Slater set} and
\emph{Kemeney's rule} \cite{brandt2016a}, which find minimal sets of edges to
invert in the tournament graph such that the beating relation becomes a total
order.\footnotemark{}
These methods are intuitively similar to chain editing: both
involve making minimal changes to the tournament until some property is
satisfied. A rough analogue to the Slater set in our framework is the union of
the top-ranked players from each $K' \in \minch{K}$. Solutions based on the
covering relation -- such as the \emph{uncovered} and \emph{Banks} set
\cite{brandt2016a} -- also bear similarity to chain editing.

\footnotetext{
    Note that like chain editing, Kemeny's rule also admits a maximum
    likelihood characterisation~\cite{elkind2016rationalizations}.
}

Finally, note that directed versions of chain graphs (obtained by orienting
edges from $A$ to $B$ and adding missing edges from $B$ to $A$) correspond to
\emph{acyclic tournaments}, and a topological sort of $A$ becomes a
linearisation of the chain ranking $\anle_K$. This suggests a connection
between chain deletion and the standard \emph{feedback arc set} problem for
removing cycles and obtaining a ranking.

\inlineheading{On generalised tournaments}
A \emph{generalised} tournament~\cite{gonzalez2014paired} is a pair $(X, T)$,
where $X = [t]$ for some $t \in \N$ and $T \in \R_{\ge 0}^{t \times t}$ is a
non-negative $t \times t$ matrix with $T_{ii} = 0$ for all $i \in X$. In this
formalism each encounter between a pair of players $i$ and $j$ is represented
by \emph{two} numbers: $T_{ij}$ and $T_{ji}$. This allows one to model both
intensities of victories and losses (including draws) via the difference
$T_{ij} - T_{ji}$, and the case where a comparison is not available (where
$T_{ij} = T_{ji} = 0$).

Any $m \times n$ bipartite tournament $K$ has a natural generalised tournament
representation via the $(m + n) \times (m + n)$ \emph{anti-diagonal block
matrix}
$
    T = \left[\begin{smallmatrix}
        0 & K \\
        \dual{K} & 0
    \end{smallmatrix}\right]
$, where the top-left and bottom-right blocks are the $m \times m$ and $n
\times n$ zero matrices respectively.  However, such anti-diagonal block
matrices are often excluded in the generalised tournament literature due to an
assumption of \emph{irreducibility}, which requires that the directed graph
corresponding to $T$ is strongly connected. This is not the case in general for
$T$ constructed as above, which means not all existing tournament
operators (and tournament axioms) are well-defined for bipartite
inputs.\footnotemark{} Consequently, bipartite tournaments are a special case
of generalised tournaments \emph{in principle}, but not in practise.

\footnotetext{
    We note that \citet{slutzki2005ranking} side-step the reducibility issue by
    decomposing $T$ into irreducible components and ranking each separately,
    although their methods may give only \emph{partial} orders.
}

\section{Conclusion}
\label{sec:conclusion}

\inlineheading{Summary}
In this paper we studied chain editing, an interesting problem from
computational complexity theory, as a ranking mechanism for bipartite
tournaments. We analysed such mechanisms from a probabilistic viewpoint via the
MLE characterisation, and in axiomatic terms. To resolve both the failure of an
important anonymity axiom and \complexityclass{NP}-hardness, we weakened the
chain editing requirement to one of \emph{chain definability}, and
characterised the resulting class of operators by the intuitive interleaving
algorithm.

\inlineheading{Limitations and future work}
The hardness of chain editing remains a limitation of our approach. A possible
remedy is to look to one of the numerous variant problems that are
polynomial-time solvable~\cite{jiao2017algorithms}; determining their
applicability to ranking is an interesting topic for future work. One could
develop approximation algorithms for chain editing, possibly based on existing
approximations of chain completion \cite{natanzon2000polynomial}. The
interleaving operators of \Cref{sec:interleaving} go in this direction, but we
did not yet obtain any theoretical or experimental bounds on the approximation
ratio.

A second limitation of our work lies in the assumptions of the probabilistic
model; namely that the true state of the world can be reduced to vectors of
numerical skill levels which totally describe the tournament participants. This
assumption may be violated when the competitive element of a tournament is
\emph{multi-faceted}, since a single number cannot represent multiple
orthogonal components of a player's capabilities. Nevertheless, if skill levels
are taken as \emph{aggregations} of these components, chain editing may prove
to be a useful, albeit simplified, model.

Finally, there is room for more detailed axiomatic investigation. In this paper
we have stuck with fairly standard social choice axioms and performed
preliminary analysis. However, the indirect nature of the comparisons in a
bipartite tournament presents unique challenges; new axioms may need to be
formulated to properly evaluate bipartite ranking methods in a normative
sense.

\ifdefined\thisistheprerint
\begin{acks}
We thank the anonymous AAMAS reviewers for their helpful comments.
\end{acks}
\else
\fi






\bibliographystyle{ACM-Reference-Format}
\bibliography{references}


\appendix
\section{Proofs}

This appendix contains proofs that were omitted (or only sketched) in the main
paper.

\subsection{Proof of \Cref{result:mle_hamming}}

The proof of \Cref{result:mle_hamming} requires a lemma of its own.

\begin{lemma}
   \label{result:probexpression}

   Let $K$ be an $m \times n$ tournament, $\bm{\alpha} \in [0,1]^2$ and
   $\theta \in \Theta_{m,n}$. Then
   \begin{equation*}
      \begin{split}
          P_{\bm{\alpha}}(K \mid \theta)
          =
          \prod_{a \in A}
            &\alpha_+^{|K(a) \setminus K_\theta(a)|}
            (1 - \alpha_-)^{|K(a) \cap K_\theta(a)|}
            \\
            &\quad (1 - \alpha_+)^{|B \setminus (K(a) \cup K_\theta(a))|}
            \alpha_-^{|K_\theta(a) \setminus K(a)|}
      \end{split}
   \end{equation*}
\end{lemma}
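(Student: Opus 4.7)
The plan is to unfold the product definition of $P_{\bm{\alpha}}(K \mid \theta)$ and regroup the factors by rewriting the conditions $K_{ab} = 0/1$ and $x_a \gtrless y_b$ as set-membership statements in terms of $K(a)$ and $K_\theta(a)$.

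First, I would start from the definition in \Cref{def:probdist}, rewriting
\[
    P_{\bm{\alpha}}(K \mid \theta) = \prod_{a \in A} \prod_{b \in B} P_{\bm{\alpha}}(X_{ab} = K_{ab} \mid \theta).
\]
It then suffices to prove that for each fixed $a \in A$, the inner product over $b \in B$ equals the corresponding factor in the statement.

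Fix $a \in A$. The key observation is that the four cases in \labelcref{eqn:probdist_random_var_one} and \labelcref{eqn:probdist_random_var_zero} induce a partition of $B$ into four blocks, determined by the pair $(K_{ab}, \mathbf{1}[x_a \ge y_b])$. Using $K(a) = \{b \in B \mid K_{ab} = 1\}$ and $K_\theta(a) = \{b \in B \mid x_a \ge y_b\}$, I would identify the blocks as:
\begin{itemize}
    \item $K_{ab} = 1$ and $x_a < y_b$ (false positive): $b \in K(a) \setminus K_\theta(a)$, contributing $\alpha_+$;
    \item $K_{ab} = 1$ and $x_a \ge y_b$ (true positive): $b \in K(a) \cap K_\theta(a)$, contributing $1 - \alpha_-$;
    \item $K_{ab} = 0$ and $x_a < y_b$ (true negative): $b \in B \setminus (K(a) \cup K_\theta(a))$, contributing $1 - \alpha_+$;
    \item $K_{ab} = 0$ and $x_a \ge y_b$ (false negative): $b \in K_\theta(a) \setminus K(a)$, contributing $\alpha_-$.
\end{itemize}

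Since these four sets partition $B$, the inner product $\prod_{b \in B} P_{\bm{\alpha}}(X_{ab} = K_{ab} \mid \theta)$ splits accordingly into a product of the four constants, each raised to the cardinality of the corresponding block. Taking the product over $a \in A$ yields the claimed expression. There is no real obstacle here; the only care needed is the bookkeeping to verify that the four blocks exhaust $B$ disjointly, which is immediate from the Boolean decomposition on $(K_{ab}, \mathbf{1}[x_a \ge y_b])$.
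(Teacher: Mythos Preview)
Your proposal is correct and follows essentially the same approach as the paper: both fix $a \in A$, partition $B$ into the four blocks determined by the pair $(K_{ab},\mathbf{1}[x_a\ge y_b])$, identify these blocks with $K(a)\setminus K_\theta(a)$, $K(a)\cap K_\theta(a)$, $B\setminus(K(a)\cup K_\theta(a))$, and $K_\theta(a)\setminus K(a)$, and then read off the exponents. The paper's version is only notationally different (it names the blocks $B_1,\ldots,B_4$ first and then matches them to the cases), so there is nothing substantive to add.
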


\begin{proof}
    Write $p_{ab,K}$ for $P_{\bm{\alpha}}(X_{ab} = K_{ab} \mid \theta)$.
    Expanding the product in \cref{def:probdist}, we have
    \[
       P_{\bm{\alpha}}(K \mid \theta)
       = \prod_{a \in A}{\prod_{b \in B}{p_{ab,K}}}
    \]

    Let $a \in A$. Note that $B$ can be written as the disjoint
    union $B = B_1 \cup B_2 \cup B_3 \cup B_4$, where
    \[
       \begin{aligned}
          B_1 &= K(a) \setminus K_\theta(a) \\
          B_2 &= K(a) \cap K_\theta(a) \\
          B_3 &= B \setminus (K(a) \cup K_\theta(a)) \\
          B_4 &= K_\theta(a) \setminus K(a)
       \end{aligned}
    \]
    Recall that $b \in K_\theta(a)$ iff $x_a \ge y_b$
    (where $\theta = \tuple{\bm{x}, \bm{y}}$).  It follows that

    \begin{itemize}
        \item $b \in B_1$ iff $K_{ab} = 1$ and $x_a < y_b$
        \item $b \in B_2$ iff $K_{ab} = 1$ and $x_a \ge y_b$
        \item $b \in B_3$ iff $K_{ab} = 0$ and $x_a < y_b$
        \item $b \in B_4$ iff $K_{ab} = 0$ and $x_a \ge y_b$
    \end{itemize}

    Note that this correspond exactly to the four cases in
    \labelcref{eqn:probdist_random_var_one} and
    \labelcref{eqn:probdist_random_var_zero} which define $p_{ab, K}$; we have
    \[
        p_{ab,K} = \begin{cases}
            \alpha_+,& b \in B_1 \\
            1 - \alpha_-,& b \in B_2 \\
            1 - \alpha_+,& b \in B_3 \\
            \alpha_-,& b \in B_4
        \end{cases}
    \]
    Consequently
    \[
       \begin{aligned}
           \prod_{b \in B}{p_{ab,K}}
           &=
               \left(\prod_{b \in B_1}{\alpha_+}\right)
               \left(\prod_{b \in B_2}{(1-\alpha_-)}\right)
               \left(\prod_{b \in B_3}{(1-\alpha_+)}\right)
               \left(\prod_{b \in B_4}{\alpha_-}\right)
           \\
           &= \alpha_+^{|B_1|} (1-\alpha_-)^{|B_2|} (1-\alpha_+)^{|B_3|}
              \alpha_-^{|B_4|} \\
           &= \alpha_+^{|K(a) \setminus K_\theta(a)|}
              (1-\alpha_-)^{|K(a) \cap K_\theta(a)|}
              \\
           &\quad \quad (1-\alpha_+)^{|B \setminus (K(a) \cup K_\theta(a))|}
              \alpha_-^{|K_\theta(a) \setminus K(a)|}
       \end{aligned}
    \]
    Taking the product over all $a \in A$ we reach the desired
    expression for $P_{\bm{\alpha}}(K \mid \theta)$.
\end{proof}

\begin{proof}[Proof of \Cref{result:mle_hamming}]
    Let $\theta \in \Theta_{m,n}$. From \cref{result:probexpression} we get
    \[
       \begin{aligned}
       P_{\bm{\alpha}}(K \mid \theta)
       = \prod_{a \in A}
            &\beta^{
              |K(a) \setminus K_\theta(a)| + |K_\theta(a) \setminus K(a)|
            }
            \\
            &\quad (1 - \beta)^{
              |K(a) \cap K_\theta(a)|
              + |B \setminus (K(a) \cup K_\theta(a))|
            }
       \end{aligned}
    \]
    Note that
    \[
       |K(a) \setminus K_\theta(a)| + |K_\theta(a) \setminus K(a)|
       =
       |K(a) \symdiff K_\theta(a)|
    \]
    \[
       |K(a) \cap K_\theta(a)|
       + |B \setminus (K(a) \cup K_\theta(a))|
       =
       |B| - |K(a) \symdiff K_\theta(a)|
    \]
    and so
    \[
       \begin{aligned}
           P_{\bm{\alpha}}(K \mid \theta)
           &= \prod_{a \in A}{
                \beta^{
                  |K(a) \symdiff K_\theta(a)|
                }
                (1 - \beta)^{
                  |B| - |K(a) \symdiff K_\theta(a)|
                }
           } \\
           &= \prod_{a \in A}{
                \left(
                  \frac{\beta}{1 - \beta}
                \right)^{
                  |K(a) \symdiff K_\theta(a)|
                }
                (1 - \beta)^{|B|}
           } \\
           &= \underbrace{
                (1 - \beta)^{|A| \cdot |B|}
              }_{=c}
               \prod_{a \in A}{
                \left(
                  \frac{\beta}{1 - \beta}
                \right)^{
                  |K(a) \symdiff K_\theta(a)|
                }
           } \\
           &= c
              \prod_{a \in A}{
               \left(
                 \frac{\beta}{1 - \beta}
               \right)^{
                 |K(a) \symdiff K_\theta(a)|
               }
           }
       \end{aligned}
    \]
    where $c$ is a positive constant that does not depend on $\theta$. Now,
    $P_{\bm{\alpha}}(K \mid \theta)$ is positive, and is maximal when its
    logarithm is. We have
    \[
       \begin{aligned}
           \log{P_{\bm{\alpha}}(K \mid \theta)}
           &=
              \log{c}
              + \sum_{a \in A}{
                  |K(a) \symdiff K_\theta(a)|
                  \log{\left(\frac{\beta}{1 - \beta}\right)}
              } \\
           &=
              \log{c}
              +
              \log{\left(\frac{\beta}{1 - \beta}\right)}
              \sum_{a \in A}{
                  |K(a) \symdiff K_\theta(a)|
              } \\
           &=
              \log{c}
              +
              \log{\left(\frac{\beta}{1 - \beta}\right)}
              d(K, K_\theta)
       \end{aligned}
    \]
    Noting that $\beta < 1/2$ implies $\log{\left(\frac{\beta}{1 -
    \beta}\right)} < 0$, it follows that for any $\theta, \theta' \in
    \Theta_{m,n}$:
    \[
       \begin{aligned}
          P_{\bm{\alpha}}(K \mid \theta)
              \ge P_{\bm{\alpha}}(K \mid \theta')
          &\iff \log{P_{\bm{\alpha}}(K \mid \theta)}
              - \log{P_{\bm{\alpha}}(K \mid \theta')} \ge 0 \\
          &\iff \underbrace{\log{\left(\frac{\beta}{1 - \beta}\right)}}_{<0}
              \left[
                d(K, K_{\theta}) - d(K, K_{\theta'})
              \right]
              \ge 0 \\
          &\iff d(K, K_{\theta}) \le d(K, K_{\theta'})
       \end{aligned}
    \]
    which proves the result.
\end{proof}

\subsection{Proof of \Cref{result:chain_iff_ktheta}}

We need a preliminary result.

\begin{lemma}
   \label{result:ktheta_ordering}

   Let $\theta = \tuple{\bm{x}, \bm{y}} \in \Theta_{m,n}$. Then for all
   $a, a' \in A$ and $b, b' \in B$:

   \begin{enumerate}
       \item $K_\theta(a) \subseteq K_\theta(a')$ iff $x_a \le
             x_{a'}$
       \item $K_\theta^{-1}(b) \supseteq K_\theta^{-1}(b')$ iff $y_b
             \le y_{b'}$.
   \end{enumerate}
\end{lemma}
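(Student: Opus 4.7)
The plan is to prove both claims by handling each direction separately, with one direction following directly from the definition of $K_\theta$ and the other requiring the explainability conditions \labelcref{eqn:state_condition_a} and \labelcref{eqn:state_condition_b} that constrain a state of the world. The two parts are dual, so I will describe part 1 in detail and note that part 2 follows by an entirely analogous argument with the roles of $\bm{x}$ and $\bm{y}$ swapped and inclusions reversed.

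For the easy ($\Leftarrow$) direction of part 1, I would assume $x_a \le x_{a'}$ and take an arbitrary $b \in K_\theta(a)$. By definition this means $x_a \ge y_b$, hence $x_{a'} \ge x_a \ge y_b$, and so $b \in K_\theta(a')$. This direction relies only on the definition $[K_\theta]_{ab} = 1 \iff x_a \ge y_b$ and transitivity of $\le$.

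The ($\Rightarrow$) direction is where the state condition does real work. I would proceed by contraposition: suppose $x_a > x_{a'}$, and produce some $b \in K_\theta(a) \setminus K_\theta(a')$ witnessing $K_\theta(a) \not\subseteq K_\theta(a')$. Applying \labelcref{eqn:state_condition_a} to the pair $(a', a)$ yields a $b \in B$ with $x_{a'} < y_b \le x_a$. Then $x_a \ge y_b$ gives $b \in K_\theta(a)$, while $x_{a'} < y_b$ gives $b \notin K_\theta(a')$, which is the required contradiction. The only subtlety here is making sure the indices are plugged into \labelcref{eqn:state_condition_a} in the correct order, since that condition is phrased as $x_a < x_{a'} \implies \exists b \ldots$ and I need the separating $b$ for the reverse strict inequality.

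Part 2 is the mirror image: the easy direction uses that $x_a \ge y_{b'} \ge y_b$ whenever $y_b \le y_{b'}$ and $a \in K_\theta^{-1}(b')$, and the hard direction uses \labelcref{eqn:state_condition_b} applied to $(b', b)$ under the assumption $y_{b'} < y_b$, producing $a \in A$ with $y_{b'} \le x_a < y_b$ and hence $a \in K_\theta^{-1}(b') \setminus K_\theta^{-1}(b)$. There is no real obstacle; the whole proof is a direct unwinding of definitions, and the only thing to be careful about is tracking the strict vs. non-strict inequalities so that the witness $b$ (respectively $a$) produced by the state condition lands on the correct side of the relevant threshold.
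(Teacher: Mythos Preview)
Your proposal is correct and matches the paper's proof essentially line for line: the paper also handles the $(\Leftarrow)$ direction by the direct chain $y_b \le x_a \le x_{a'}$, and the $(\Rightarrow)$ direction by assuming $x_a > x_{a'}$ and invoking \labelcref{eqn:state_condition_a} on the pair $(a',a)$ to extract a separating $b$ with $x_{a'} < y_b \le x_a$. Part (2) is likewise dismissed in the paper with ``shown similarly,'' exactly as you do.
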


\begin{proof}

    We prove (1); (2) is shown similarly. Let $a, a' \in A$. First suppose $x_a
    \le x_{a'}$. Let $b \in K_\theta(a)$. Then $y_b \le x_a \le x_{a'}$, so $b
    \in K_\theta(a')$ also. This shows $K_\theta(a) \subseteq K_\theta(a')$.

    Now suppose $K_\theta(a) \subseteq K_\theta(a')$. For the sake of
    contradiction, suppose $x_a > x_{a'}$. By \labelcref{eqn:state_condition_a}
    in the definition of a state (\cref{def:stateworld}), there is $b \in B$
    such that $x_{a'} < y_b \le x_{a}$. But this means $b \in K_\theta(a)
    \setminus K_\theta(a')$, which contradicts $K_\theta(a) \subseteq
    K_\theta(a')$. Thus (1) is proved.
\end{proof}

\begin{proof}[Proof of \Cref{result:chain_iff_ktheta}]
    The ``if'' direction follows from \cref{result:ktheta_ordering} part (1):
    if $\theta = \tuple{\bm{x}, \bm{y}}$ and $a, a' \in A$ then
    either $x_a \le x_{a'}$ -- in which case $K_\theta(a)
    \subseteq K_\theta(a')$ -- or $x_{a'} < x_a$ -- in
    which case $K_\theta(a') \subseteq K_\theta(a)$. Therefore $K_\theta$ has
    the chain property.

    For the ``only if'' direction, suppose $K$ has the chain property.
    Define $\theta = \tuple{\bm{x}, \bm{y}}$ by

    \[
       \begin{aligned}
           x_a &= |\{a' \in A \mid K(a') \subseteq K(a)\}| \\
           y_b &= \begin{cases}
              \min\{x_a \mid a \in K^{-1}(b)\}
                  ,& K^{-1}(b) \ne \emptyset \\
              1 + |A|,& K^{-1}(b) = \emptyset
           \end{cases}
       \end{aligned}
    \]

    It is easily that since the neighbourhood-subset relation ${\anle_K}$ is a
    total preorder, we have $K(a) \subseteq K(a')$ if and only if $x_a \le
    x_{a'}$.  First we show that $K_\theta = K$ by showing that $K_{ab} = 1$ if
    and only if $[K_\theta]_{ab} = 1$. Suppose $K_{ab} = 1$. Then $a \in
    K^{-1}(b)$, so $y_b = \min\{x_{a'} \mid a' \in K^{-1}(b)\} \le x_a$ and
    consequently $[K_\theta]_{ab} = 1$.

    Now suppose $[K_\theta]_{ab} = 1$. Then $x_a \ge y_b$.  We must have
    $K^{-1}(b) \ne \emptyset$; otherwise $y_b = 1 + |A|
    > |A| \ge x_a$. We can therefore take $\hat{a} \in \argmin_{a'
    \in K^{-1}(b)}{x_{a'}}$. By definition of $y_b$, $x_{\hat{a}} = y_b \le
    x_a$. But $x_{\hat{a}} \le x_a$ implies $K(\hat{a}) \subseteq K(a)$; since
    $\hat{a} \in K^{-1}(b)$ this gives $b \in K(\hat{a})$ and $b \in K(a)$,
    i.e. $K_{ab} = 1$. This completes the claim that $K = K_\theta$.

    It only remains to show that $\theta$ satisfies conditions
    \labelcref{eqn:state_condition_a} and \labelcref{eqn:state_condition_b} of
    \cref{def:stateworld}. For \labelcref{eqn:state_condition_a}, suppose $x_a
    < x_{a'}$. Then $K(a) \subset K(a')$, i.e there is $b \in K(a') \setminus
    K(a) = K_\theta(a') \setminus K_\theta(a)$. But $b \in K_\theta(a')$ gives
    $y_b \le x_{a'}$, and $b \not\in K_\theta(a)$ gives $x_a < y_b$; this shows
    that \labelcref{eqn:state_condition_a} holds.

    For \labelcref{eqn:state_condition_b}, suppose $y_b < y_{b'}$. Clearly
    $K^{-1}(b) \ne \emptyset$ (otherwise $y_b = 1 + |A|$ is maximal). Thus
    there is $a \in K^{-1}(b)$ such that $y_b = x_a$. This of course means $x_a
    < y_{b'}$; in particular we have $y_b \le x_a < y_{b'}$ as required for
    \labelcref{eqn:state_condition_b}.

    We have shown that $K = K_\theta$ and that $\theta \in \Theta_{m,n}$, and the
    proof is complete.
\end{proof}

\subsection{Proof of \Cref{result:mle_iff_chainmin_operator}}

\begin{proof}

    First we show that for any $m, n \in \N$ and any $m \times n$ tournament
    $K$ it holds that $\theta$ is an MLE state for $K$ if and only if $K_\theta
    \in \minch{K}$.

    Indeed, fix some $m, n$ and $K$. Write $\K_{\Theta_{m,n}} = \{K_\theta \mid
    \theta \in \Theta_{m,n}\}$. By \cref{result:mle_hamming}, $\theta$ is an MLE
    if and only if $d(K, K_\theta) \le d(K, K_{\theta'})$ for all $\theta' \in
    \Theta_{m,n}$, i.e. $K_\theta \in \argmin_{K' \in \K_{\Theta_{m,n}}}{d(K,
    K')}$. But by \cref{result:chain_iff_ktheta}, $\K_{\Theta_{m,n}}$ is just
    $\ch_{m,n}$, the set of all $m \times n$ tournaments with the chain
    property. We see that $\argmin_{K' \in \K_{\Theta_{m,n}}}{d(K, K')} =
    \argmin_{K' \in \ch_{m,n}}{d(K, K')} = \minch{K}$ by definition of
    $\minch{K}$. This shows that $\theta$ is an MLE iff $K_\theta \in
    \minch{K}$.

    Now, by definition, $\phi$ satisfies \axiomref{chain-min} iff for every
    tournament $K$ there is $K' \in \minch{K}$ such that $\phi(K) =
    ({\anle_{K'}}, {\bnle_{K'}})$. Using \cref{result:chain_iff_ktheta} and the
    above result, $K' \in \minch{K}$ if and only if $K' = K_\theta$ for some
    MLE $\theta$ for $K$. We see that \axiomref{chain-min} can be equivalently
    stated as follows: for all tournament $K$ there exists an MLE $\theta$ such
    that $\phi(K) = ({\anle_{K_\theta}}, {\bnle_{K_\theta}})$. But by
    \cref{result:ktheta_ordering} we have $a \anle_{K_\theta} a'$ iff $x_a \le
    x_{a'}$ and $b \bnle_{K_\theta} b'$ iff $y_b \le y_{b'}$ (where $\theta =
    \tuple{\bm{x}, \bm{y}}$). The above reformulation of \axiomref{chain-min}
    now coincides with the definition of a maximum likelihood operator, and we
    are done.
\end{proof}

\subsection{Proof of \Cref{result:chainmin_axiom_incompatibilities}}

\begin{proof}

    We take each axiom in turn. Let $\phi$ be any operator satisfying
    \axiomref{chain-min}.

    \axiomref{anon:} Consider $K = \left[\begin{smallmatrix} 1&0\\0&1
    \end{smallmatrix}\right]$, and define permutations $\sigma = \pi = (1\ 2)$,
    i.e. the permutations which simply swap 1 and 2. It is easily seen that
    $\pi(\sigma(K)) = K$. Supposing $\phi$ satisfied \axiomref{anon}, we would
    get $1 \ale_K^\phi 2$ iff $\sigma(1) \ale_{\pi(\sigma(K))}^\phi \sigma(2)$
    iff $2 \ale_K^\phi 1$, which implies $1 \aeq_K^\phi 2$.
    On the other hand, we have
    \[
        \minch{K} = \left\{
           \left[\begin{smallmatrix}
               1 & \color{red}{1} \\
               0 & 1
           \end{smallmatrix}\right],
           \left[\begin{smallmatrix}
               1 & 0 \\
               \color{red}{1} & 1
           \end{smallmatrix}\right],
           \left[\begin{smallmatrix}
               1 & 0 \\
               0 & \color{red}{0}
           \end{smallmatrix}\right],
           \left[\begin{smallmatrix}
               \color{red}{0} & 0 \\
               0 & 1
           \end{smallmatrix}\right]
        \right\}
    \]
    Since $\phi$ satisfies \axiomref{chain-min} and $1, 2 \in A$ rank equally
    in ${\ale_K^\phi}$, there must be $K' \in \minch{K}$ such that 1 and 2 rank
    equally in ${\anle_{K'}}$, i.e. $K'(1) = K'(2)$. But clearly there is no
    such $K'$; all tournaments in $\minch{K}$ have distinct first and second
    rows. Hence $\phi$ cannot satisfy \axiomref{anon}.

    \axiomref{IIM:} Suppose $\phi$ satisfies \axiomref{chain-min} and
    \axiomref{IIM}. Write
    \[
         K_1 = \left[\begin{smallmatrix}
            1 & 0 & 0 \\
            0 & 1 & 0 \\
            0 & 1 & 1
         \end{smallmatrix}\right]
         , \quad
         K_2 = \left[\begin{smallmatrix}
            1 & 0 & 0 \\
            0 & 1 & 0 \\
            1 & 0 & 1
         \end{smallmatrix}\right]
    \]
    Note that the first and second rows of $K_1$ and $K_2$ are identical, so by
    \axiomref{IIM} we have $1 \ale_{K_1}^\phi 2$ iff $1 \ale_{K_2}^\phi 2$.
    Both tournaments have a unique closest chain tournament requiring changes
    to only a single entry:
    \[
        \minch{K_1} = \left\{
            \left[\begin{smallmatrix}
               \color{red}{0} & 0 & 0 \\
               0 & 1 & 0 \\
               0 & 1 & 1
            \end{smallmatrix}\right]
        \right\}
        , \quad
        \minch{K_2} = \left\{
            \left[\begin{smallmatrix}
               1 & 0 & 0 \\
               0 & \color{red}{0} & 0 \\
               1 & 0 & 1
            \end{smallmatrix}\right]
        \right\}
    \]
    Write ${K_1}'$ and ${K_2}'$ for these nearest chain tournaments
    respectively. By \axiomref{chain-min}, we must have $\phi(K_i) =
    ({\anle_{{K_i}'}}, {\bnle_{{K_i}'}})$. In particular, $1
    \alt_{K_1}^\phi 2$ and $2 \alt_{K_2}^\phi 1$. But this contradicts
    \axiomref{IIM}, and we are done.

    \axiomref{pos-resp:} Suppose $\phi$ satisfies \axiomref{chain-min} and
    \axiomref{pos-resp}, and consider
    \[
        K = \left[\begin{smallmatrix}
            1 & 1 & 1 \\
            1 & 1 & 0 \\
            0 & 0 & 1 \\
            0 & 0 & 1
        \end{smallmatrix}\right]
    \]
    $K$ has a unique closest chain tournament $K'$:
    \[
        \minch{K} = \{K'\} = \left\{
            \left[\begin{smallmatrix}
            1 & 1 & 1 \\
            1 & 1 & \color{red}{1} \\
            0 & 0 & 1 \\
            0 & 0 & 1
        \end{smallmatrix}\right]
        \right\}
    \]
    \axiomref{chain-min} therefore implies $\phi(K) = ({\anle_{K'}},
    {\bnle_{K'}})$.  Note that $K'(1) = K'(2)$, so we have $1 \aeq_K^\phi 2$.
    In particular, $1 \ale_K^\phi 2$. Since $K_{23} = 0$, we may apply
    \axiomref{pos-resp} to get $1 \alt_{K + \bm{1}_{23}}^\phi 2$.  But $K +
    \bm{1}_{23}$ is just $K'$. Since the chain property already holds for
    $K'$, we have $\minch{K'} = \{K'\}$ and consequently
    \[
        \phi(K + \bm{1}_{23})
        = \phi(K')
        = ({\anle_{K'}}, {\bnle_{K'}})
        = \phi(K)
    \]
    so in fact $1 \aeq_{K + \bm{1}_{23}}^\phi 2$, contradicting
    \axiomref{pos-resp}.
\end{proof}

\subsection{Proof of \Cref{result:chainmin_axiom_compatibilities}}

For ease of presentation we establish the compatibility of \axiomref{chain-min}
with \axiomref{dual} and \axiomref{mon} separately.

\begin{proposition}
    \label{result:chainmin_dual_compatibility}
    There exists an operator $\phi$ satisfying \axiomref{chain-min} and
    \axiomref{dual}.
\end{proposition}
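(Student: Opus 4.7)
The plan is to construct $\phi$ via a choice function $\alpha$ that assigns each tournament $K$ some $\alpha(K) \in \minch{K}$ in a way compatible with dualisation, i.e.\ $\alpha(\dual{K}) = \dual{\alpha(K)}$, and then to set $\phi(K) = ({\anle_{\alpha(K)}}, {\bnle_{\alpha(K)}})$. Three observations drive the construction. First, no tournament is self-dual: if $K = \dual{K}$ and $K$ is square then $K_{ii} + K_{ii} = 1$ on the diagonal, which is impossible for 0/1 entries, and otherwise $K$ and $\dual{K}$ have incompatible shapes. Second, dualisation is an isometry that preserves the chain property, so $K' \in \minch{K}$ if and only if $\dual{K'} \in \minch{\dual{K}}$. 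Third, unpacking definitions gives $\dual{K'}(b) = A \setminus K'^{-1}(b)$, whence $\dual{K'}(b) \subseteq \dual{K'}(b')$ if and only if $K'^{-1}(b) \supseteq K'^{-1}(b')$.

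By the first observation, the set $\K$ of all tournaments partitions into disjoint two-element pairs $\{K, \dual{K}\}$. Fixing any well-ordering of $\K$, I would select the smaller element of each such pair as a canonical representative. For each representative $K$ I would pick an arbitrary $K^* \in \minch{K}$ (non-empty since $\ch_{m,n}$ is finite and non-empty), set $\alpha(K) = K^*$, and define $\alpha(\dual{K}) = \dual{K^*}$. The second observation guarantees $\dual{K^*} \in \minch{\dual{K}}$, so $\alpha$ is a bona fide choice function and the resulting $\phi$ satisfies \axiomref{chain-min} by construction.

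To verify \axiomref{dual} I would then chase definitions: $b \ble_K^\phi b'$ iff $\alpha(K)^{-1}(b) \supseteq \alpha(K)^{-1}(b')$, while $b \ale_{\dual{K}}^\phi b'$ iff $\alpha(\dual{K})(b) \subseteq \alpha(\dual{K})(b')$. Substituting $\alpha(\dual{K}) = \dual{\alpha(K)}$ and applying the third observation collapses these to the same condition, giving the required equivalence. The main obstacle is not really technical but conceptual: one must isolate the (easy but essential) fact that dual pairs are disjoint so that the pairing construction has no self-dual fixed points requiring special treatment, and then carefully match the $\bnle$-ranking induced by a chain tournament with the $\anle$-ranking of its dual via the complementation identity for neighbourhoods.
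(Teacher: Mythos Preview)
Your proposal is correct and follows essentially the same approach as the paper: both partition $\K$ into dual pairs (using $K \ne \dual{K}$), pick a canonical representative from each pair via a total order, define the choice function arbitrarily on representatives and extend by duality, and then verify \axiomref{dual} using the identities ${\bnle_K} = {\anle_{\dual{K}}}$ and $\minch{\dual{K}} = \{\dual{K'} \mid K' \in \minch{K}\}$. The only cosmetic difference is that the paper packages your second and third observations into a separate lemma before the main proof.
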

\begin{proposition}
    \label{result:chainmin_mon_compatibility}
    There exists an operator $\phi$ satisfying \axiomref{chain-min} and
    \axiomref{mon}.
\end{proposition}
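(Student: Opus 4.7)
The plan is to build $\phi$ by choosing, for each tournament $K$, an appropriate $K_\phi \in \minch{K}$ and setting $\phi(K) = ({\anle_{K_\phi}}, {\bnle_{K_\phi}})$. Any such $\phi$ satisfies \axiomref{chain-min} automatically; the work lies in choosing $K_\phi$ so that \axiomref{mon} holds, that is, $K(a) \subseteq K(a')$ implies $K_\phi(a) \subseteq K_\phi(a')$ for every $a, a' \in A$.

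The main technical step will be the following existence lemma: for every tournament $K$, there is $K' \in \minch{K}$ such that whenever $K(a) \subseteq K(a')$, also $K'(a) \subseteq K'(a')$. Granted the lemma, the operator constructed above satisfies both axioms (the analogous statement for the $B$-ranking is automatic from the chain property of $K'$ since no monotonicity condition is imposed there).

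My proof of the lemma would start with an arbitrary $K^* \in \minch{K}$ and successively modify it using two elementary moves. First, a \emph{row-swap}: if $K(a) \subsetneq K(a')$ but $K^*(a) \supsetneq K^*(a')$ -- the only way monotonicity can fail for this pair, since $K^*$ is chain -- then exchanging rows $a$ and $a'$ in $K^*$ yields another $K^{**} \in \minch{K}$. This is verified by a direct computation of $\Delta = d(K, K^{**}) - d(K, K^*)$: all column contributions cancel except over $b \in K^*(a) \symdiff K^*(a')$, and a short case analysis shows the residual contribution equals $2\sum_{b \in K^*(a)\setminus K^*(a')}(K_{ab}-K_{a'b})$, which is non-positive because the assumption $K(a) \subseteq K(a')$ gives $K_{ab} \le K_{a'b}$ pointwise. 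Second, an \emph{equivalence-class move}: if $K(a) = K(a')$, one replaces both rows of $K^*$ with whichever of $K^*(a), K^*(a')$ is Hamming-closer to the common row $K(a)$. Chainness is preserved because we merely duplicate an existing neighborhood in the chain, and the total distance does not increase by the minimum choice of target.

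The main obstacle will be combining these moves into a terminating procedure. A row-swap can destroy the invariant produced by the equivalence-class move, since other rows with $K$-neighborhood equal to $K(a)$ may no longer share a common image. I would resolve this by working at the level of the quotient $A/{\equiv_K}$, where $\equiv_K$ identifies rows with equal $K$-neighborhoods, and by performing swaps on entire $\equiv_K$-classes simultaneously. For classes of equal cardinality this reduces to the row-swap lemma applied in parallel and is distance-preserving; for classes of unequal cardinality one needs a rearrangement-inequality-style argument showing that, when $K(a) \subsetneq K(a')$ but the current class-level assignments violate inclusion, exchanging those two class-level neighborhoods cannot increase the weighted distance $\sum_C |C|\, d_H(K(C), N_C)$. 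A potential function counting the number of violating class-pairs (totally ordered under $\anle_K / \equiv_K$) strictly decreases with each such exchange, yielding termination in a $K' \in \minch{K}$ of the required form.
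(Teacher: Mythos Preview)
Your two elementary moves match the paper's: the row-swap is \Cref{result:chainmin_mon_swapping} (with the same $2|Y\cap Y'|$ computation), and the equivalence-class replacement is \Cref{result:chainmin_mon_extend_full}. The difference is the order of combination. You normalise equivalence classes first and then attempt class-level swaps; the paper does the reverse---it first performs individual row-swaps to repair all \emph{strict} violations $K(a)\subset K(a')$, via an induction ordered by $|L(a,{\anle_K})|$, and only afterwards normalises the $\aneq_K$-classes, checking that this second step preserves the strict monotonicity already achieved. That order sidesteps precisely the interference you flag, and is arguably the cleaner route.

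Your route can also be completed, but the step you waved through as a ``rearrangement-inequality-style argument'' is a real gap, and the missing argument is not a rearrangement inequality. With $B_1\subset B_2$ the $K$-neighbourhoods of two violating classes and $N_1\supsetneq N_2$ their current $K^*$-neighbourhoods, the class-level swap changes the total distance by $-|C_1|\,p-|C_2|\,q$, where $p=|B_1\symdiff N_1|-|B_1\symdiff N_2|$ and $q=|B_2\symdiff N_2|-|B_2\symdiff N_1|$; the row-swap identity only gives $p+q\ge 0$, which says nothing about the weighted combination when $|C_1|\ne|C_2|$. What actually closes the gap is two further appeals to minimality: replacing only $C_1$'s neighbourhood by $N_2$, or only $C_2$'s by $N_1$, each preserves the chain property (you are merely duplicating an existing neighbourhood in the chain) and hence cannot decrease distance, yielding $p\le 0$ and $q\le 0$; combined with $p+q\ge 0$ this forces $p=q=0$, so the class-swap is distance-neutral. (A minor further point: $A/{\equiv_K}$ is only \emph{partially} ordered under the induced relation, not totally ordered as your parenthetical suggests, so the inversion-count termination argument also needs a little more care than you indicate.)
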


It is clear that these two propositions will together prove
\Cref{result:chainmin_axiom_compatibilities}. For
\Cref{result:chainmin_dual_compatibility} we use the following result.

\begin{lemma}
    \label{result:chainmin_dual_lemma}
    Let $K$ be a tournament. Then
    \begin{enumerate}
        \item ${\bnle_K} = {\anle_{\dual{K}}}$ \label{item:dual_lemma_nle}
        \item $K' \in \minch{K}$ if and only if $\dual{K'} \in
              \minch{\dual{K}}$
              \label{item:dual_lemma_minch}
    \end{enumerate}
\end{lemma}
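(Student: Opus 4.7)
The plan is to unpack both claims from the definition of $\dual{K}$ and reduce each to a routine observation. Recall $[\dual{K}]_{ba} = 1 - K_{ab}$, so when we pass from $K$ to $\dual{K}$ the two sides are swapped (what was $A$ in $K$ is the $B$-side of $\dual{K}$ and vice-versa) and each entry is flipped.

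For part (\ref{item:dual_lemma_nle}), I would compute the neighbourhoods of elements of $B$ viewed as players on the $A$-side of $\dual{K}$. For $b \in B$,
\[
\dual{K}(b) = \{a \in A \mid [\dual{K}]_{ba} = 1\} = \{a \in A \mid K_{ab} = 0\} = A \setminus K^{-1}(b).
\]
Hence $\dual{K}(b) \subseteq \dual{K}(b')$ iff $K^{-1}(b') \subseteq K^{-1}(b)$, i.e.\ iff $b \bnle_K b'$. This immediately gives ${\bnle_K} = {\anle_\dual{K}}$.

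For part (\ref{item:dual_lemma_minch}) I would split the argument into two invariances. First, the Hamming distance is invariant under dualisation: since $[\dual{K}]_{ba} \ne [\dual{K'}]_{ba}$ iff $K_{ab} \ne K'_{ab}$, a direct index-counting argument gives $d(\dual{K}, \dual{K'}) = d(K, K')$. Second, the chain property is preserved: $K'$ is a chain tournament iff $\dual{K'}$ is. This follows from the symmetry in \Cref{def:chain_graph} (nesting of neighbourhoods on one side is equivalent to nesting on the other), together with the observation that $\bnle_{K'}$ being a total preorder is equivalent to $\anle_{\dual{K'}}$ being a total preorder, by part (\ref{item:dual_lemma_nle}).

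Combining the two, the map $K' \mapsto \dual{K'}$ is a distance-preserving bijection from $m \times n$ chain tournaments to $n \times m$ chain tournaments, so $\mindist{\dual{K}} = \mindist{K}$, and $K'$ achieves the minimum distance to $K$ among chain tournaments iff $\dual{K'}$ achieves the minimum distance to $\dual{K}$. Neither step looks difficult; the only thing to be careful about is keeping track of which indices live on which side after the transpose, so the main (very mild) obstacle is just bookkeeping — making sure the bijection between $A$- and $B$-sides in $\dual{K}$ is applied consistently when translating between $K^{-1}(b)$ and $\dual{K}(b)$.
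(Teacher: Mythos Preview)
Your proposal is correct and follows essentially the same route as the paper: the same neighbourhood computation for part~(\ref{item:dual_lemma_nle}), and for part~(\ref{item:dual_lemma_minch}) the same two ingredients (Hamming distance is preserved under dualisation, and the chain property is preserved) assembled into a bijection argument. The one minor difference is that the paper, rather than citing the ``nesting on one side iff nesting on the other'' claim from the body, proves directly within the lemma that $\anle_{K'}$ total implies $\bnle_{K'}$ total; you may wish to do the same, since that claim is stated but not proven earlier.
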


\begin{proof}
    Fix an $m \times n$ tournament $K$.

    (\labelcref{item:dual_lemma_nle})
    Note that for any $b \in B$, we have $K^{-1}(b) = A \setminus \dual{K}(b)$.
    Indeed, for any $a \in A = A_K = B_{\dual{K}}$,
    \begin{align*}
        a \in K^{-1}(b)
        &\iff K_{ab} = 1 \\
        &\iff 1 - K_{ab} = 0 \\
        &\iff \dual{K}_{ba} = 0 \\
        &\iff a \notin \dual{K}(b)
    \end{align*}
    This means that for any $b, b' \in B$,
    \begin{align*}
        b \bnle_K b'
        &\iff K^{-1}(b) \supseteq K^{-1}(b') \\
        &\iff A \setminus \dual{K}(b) \supseteq A \setminus \dual{K}(b') \\
        &\iff \dual{K}(b) \subseteq \dual{K}(b') \\
        &\iff b \anle_{\dual{K}} b'
    \end{align*}
    so ${\bnle_K} = {\anle_{\dual{K}}}$.

    (\labelcref{item:dual_lemma_minch})
    (⇒) Suppose $K' \in \minch{K}$. First we show that $\dual{K'}$ has the
    chain property. It is sufficient to show that ${\bnle_{K'}}$ is a total
    preorder,\footnotemark{} since part (\labelcref{item:dual_lemma_nle}) then
    implies ${\anle_{\dual{K'}}}$ is a total preorder and $\dual{K'}$ has the
    chain property by definition.

    \footnotetext{
        Note that we claim this holds for any $K'$ with the chain property
        in the body of the paper, but this has not yet been proven.
    }

    Since ${\bnle_{K'}}$ always has reflexivity and transitivity, we only need
    to show the totality property. Let $b, b' \in B$ and suppose $b
    \not\bnle_{K'} b'$. We must show $b' \bnle_{K'} b$, i.e. $(K')^{-1}(b')
    \supseteq (K')^{-1}(b)$. To that end, let $a \in (K')^{-1}(b)$.

    Since $(K')^{-1}(b) \not\supseteq (K')^{-1}(b')$, there is some $\hat{a}
    \in (K')^{-1}(b')$ with $\hat{a} \notin (K')^{-1}(b)$. That is, $b' \in
    K'(\hat{a})$ but $b \notin K'(\hat{a})$. Since $b \in K'(a)$, we have
    $K'(a) \not\subseteq K'(\hat{a})$. By the chain property for $K'$, we get
    $K'(\hat{a}) \subset K'(a)$. Finally, this means $b' \in K'(\hat{a})
    \subseteq K'(a)$, i.e $a \in (K')^{-1}(b')$. This shows $b' \bnle_{K'} b$
    as required.

    It remains to show that $d(\dual{K}, \dual{K'})$ is minimal. Since every
    tournament is the dual of its dual, any $n \times m$ chain tournament is of
    the form $\dual{K''}$ for an $m \times n$ tournament $K''$. The above
    argument shows that the chain property is preserved by taking the dual, so
    that $K''$ has the chain property also. Since $K' \in \minch{K}$, we have
    $d(K, K'') \ge d(K, K')$. It is easily verified that the Hamming distance
    is also preserved under duals, so
    \[
        d(\dual{K}, \dual{K'})
        = d(K, K')
        \le d(K, K'')
        = d(\dual{K}, \dual{K''})
    \]
    We have shown that $\dual{K'}$ is as close to $\dual{K}$ as any other $n
    \times m$ tournament with the chain property, which shows $\dual{K'} \in
    \minch{\dual{K}}$ as required.

    (⇐) Suppose $\dual{K'} \in \minch{\dual{K}}$. By the `only if' statement
    above, we have $\dual{\dual{K'}} \in \minch{\dual{\dual{K}}}$. But
    $\dual{\dual{K}} = K$ and $\dual{\dual{K'}} = K'$, so $K' \in \minch{K}$ as
    required.
\end{proof}

\begin{proof}[Proof of \Cref{result:chainmin_dual_compatibility}]
    Let $\phi$ be an arbitrary operator satisfying \axiomref{chain-min}. Then
    there is a function $\alpha: \K \to \K$ such that $\phi(K) =
    ({\anle_{\alpha(K)}}, {\bnle_{\alpha(K)}})$ and $\alpha(K) \in \minch{K}$
    for all tournaments $K$. We will construct a new function $\alpha'$, based
    on $\alpha$, such that $\alpha'(\dual{K}) = \dual{\alpha'(K)}$.

    Let $\ll$ be a total order on the set of all tournaments
    $\K$.\footnotemark{} Write
    \[
        T = \{K \in \K \mid K \ll \dual{K}\}
    \]
    Note that since $K \ne \dual{K}$ for all $K$, exactly one of $K$ and
    $\dual{K}$ lies in $T$. Informally, we view the tournaments in $T$ as
    somehow `canonical', and those in $\K \setminus T$ as the dual of a
    canonical tournament. We use this notion to define $\alpha'$:
    \[
        \alpha'(K) = \begin{cases}
            \alpha(K),& K \in T \\
            \dual{\alpha(\dual{K})},& K \notin T
        \end{cases}
    \]
    First we claim $\alpha'(K) \in \minch{K}$ for all $K$. Indeed, if $K \in T$
    then $\alpha'(K) = \alpha(K) \in \minch{K}$ by the assumption on $\alpha$.
    Otherwise, $\alpha(\dual{K}) \in \minch{\dual{K}}$, so
    \Cref{result:chainmin_dual_lemma} part (\labelcref{item:dual_lemma_minch})
    implies $\alpha'(K) = \dual{\alpha(\dual{K})} \in \minch{\dual{\dual{K}}} =
    \minch{K}$.

    Next we show $\dual{\alpha'(K)} = \alpha'(\dual{K})$. First suppose $K \in
    T$. Then $\alpha'(K) = \alpha(K)$ and $\dual{K} \notin T$, so
    $\alpha'(\dual{K}) = \dual{\alpha(\dual{\dual{K}})} = \dual{\alpha(K)} =
    \dual{\alpha'(K)}$ as required. Similarly, if $K \notin T$ then $\dual{K}
    \in T$, so $\alpha'(\dual{K}) = \alpha(\dual{K})$, and $\alpha'(K) =
    \dual{\alpha(\dual{K})} = \dual{\alpha'(\dual{K})}$. Taking the dual of
    both sides, we get $\dual{\alpha'(K)} = \alpha'(\dual{K})$.

    Finally, define a new operator $\phi'$ by $\phi'(K) =
    ({\anle_{\alpha'(K)}}, {\bnle_{\alpha'(K)}})$. Since $\alpha'(K) \in
    \minch{K}$ for all $K$, $\phi'$ satisfies \axiomref{chain-min}. Moreover,
    using \Cref{result:chainmin_dual_lemma} part
    (\labelcref{item:dual_lemma_nle}) and the fact that $\dual{\alpha'(K)} =
    \alpha'(\dual{K})$, for any tournament $K$ and $b, b' \in B$ we have
    \begin{align*}
        b \ble_K^{\phi'} b'
        &\iff b \bnle_{\alpha'(K)} b' \\
        &\iff b \anle_{\dual{\alpha'(K)}} b' \\
        &\iff b \anle_{\alpha'(\dual{K})} b' \\
        &\iff b \ble_{\dual{K}}^{\phi'} b'
    \end{align*}
    which shows $\phi'$ also satisfies \axiomref{dual}.
    \footnotetext{
        Note that $\K$ is countable, so such an order can be easily
        constructed. Alternatively, one could use the axiom of choice and
        appeal to the well-ordering theorem to obtain $\ll$.
    }
\end{proof}

Next we prove \Cref{result:chainmin_mon_compatibility}. We will proceed in
three stages. First, \Cref{result:chainmin_mon_swapping} shows that if $K(a_1)
\subseteq K(a_2)$ and $K' \in \minch{K}$ is some closest chain tournament with
the reverse inclusion $K'(a_2) \subseteq K'(a_1)$, then swapping $a_1$ and
$a_2$ in $K'$ yields obtain another closest chain tournament $K'' \in
\minch{K}$. Next, we show in \Cref{result:chainmin_mon_extend_strict_part} that
by performing successive swaps in this way, we can find $K' \in \minch{K}$ such
that $K'(a_1) \subseteq K'(a_2)$ whenever $K(a_1) \subset K(a_2)$ (note the
strict inclusion). Finally, we modify this $K'$ in
\Cref{result:chainmin_mon_extend_full} to additionally satisfy $K'(a_1) =
K'(a_2)$ whenever $K(a_1) = K(a_2)$.  This shows that there always exist an
element of $\minch{K}$ extending the neighbourhood-subset relation $\anle_K$,
and consequently it is possible to satisfy \axiomref{chain-min} and
\axiomref{mon} simultaneously.

\begin{definition}
    Let $K$ be a tournament and $a_1, a_2 \in A$. We denote by
    $\swap{K}{a_1}{a_2}$ the tournament obtained by swapping the $a_1$ and
    $a_2$-th rows of $K$, i.e.
    \[
        [\swap{K}{a_1}{a_2}]_{ab} = \begin{cases}
            K_{a_1,b},& a = a_2 \\
            K_{a_2,b},& a = a_1 \\
            K_{a,b},& a \notin \{a_1,a_2\}
        \end{cases}
    \]
\end{definition}

\begin{lemma}
    \label{result:chainmin_mon_swapping}
    Suppose $K(a_1) \subseteq K(a_2)$ and $K' \in \minch{K}$ is such that
    $K'(a_2) \subseteq K'(a_1)$. Then $\swap{K'}{a_1}{a_2} \in \minch{K}$.
\end{lemma}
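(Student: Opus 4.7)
The plan is to verify two things for $K'' = \swap{K'}{a_1}{a_2}$: (i) that $K''$ still has the chain property, and (ii) that $d(K, K'') \le d(K, K')$. Since $K' \in \minch{K}$ already achieves the minimum, the inequality in (ii) must in fact be an equality, yielding $K'' \in \minch{K}$.

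For (i), note that the chain property of a tournament depends only on the collection of neighbourhoods $\{K'(a) \mid a \in A\}$, not on how those neighbourhoods are indexed by the elements of $A$. Swapping rows $a_1$ and $a_2$ merely relabels which row corresponds to which neighbourhood, so $K''$ trivially inherits the chain property from $K'$.

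The bulk of the argument is (ii). Since the rows outside $\{a_1, a_2\}$ are untouched, we have
\begin{equation*}
    d(K, K') - d(K, K'') = \bigl(|S_1 \symdiff T_1| + |S_2 \symdiff T_2|\bigr) - \bigl(|S_1 \symdiff T_2| + |S_2 \symdiff T_1|\bigr),
\end{equation*}
where $S_i = K(a_i)$ and $T_i = K'(a_i)$ (noting that $K''(a_1) = T_2$ and $K''(a_2) = T_1$). Using $|X \symdiff Y| = |X| + |Y| - 2|X \cap Y|$, the cardinalities cancel and the expression reduces to
\begin{equation*}
    2\bigl(|S_1 \cap T_2| + |S_2 \cap T_1| - |S_1 \cap T_1| - |S_2 \cap T_2|\bigr).
\end{equation*}
I would show this quantity is non-negative by a pointwise argument: for each $b \in B$, compare the two indicator sums using the hypotheses $S_1 \subseteq S_2$ and $T_2 \subseteq T_1$. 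A short case analysis on whether $b$ lies in $S_1$ and $T_2$ shows that each $b$ contributes non-negatively, and strictly so exactly when $b \in (S_2 \setminus S_1) \cap (T_1 \setminus T_2)$. Hence $d(K, K') - d(K, K'') \ge 0$.

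The main obstacle is essentially just the combinatorial identity above; once the cancellation is carried out, the monotonicity falls out immediately from the hypotheses. Putting everything together: $K''$ is a chain tournament, and $d(K, K'') \le d(K, K') = \mindist{K}$, so $K'' \in \minch{K}$, as required.
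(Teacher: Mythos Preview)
Your proposal is correct and follows essentially the same structure as the paper's proof: both reduce to the two affected rows, express $d(K,K') - d(K,K'')$ in terms of symmetric differences, and show this equals $2\,|(S_2 \setminus S_1) \cap (T_1 \setminus T_2)| \ge 0$ (the paper writes this as $2|Y \cap Y'|$ with $Y = K(a_2)\setminus K(a_1)$, $Y' = K'(a_1)\setminus K'(a_2)$). The only stylistic difference is that you verify the identity algebraically via $|X \symdiff Y| = |X|+|Y|-2|X\cap Y|$ and a pointwise indicator check, whereas the paper expands the four symmetric differences over a Venn diagram of $X, Y, X', Y'$; both routes land on the identical quantity.
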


\def\xset{(-1, 0) circle (1cm)}
\def\yset{(1, 0) circle (1cm)}
\def\xprimeset{(0, 1) circle (1cm)}
\def\yprimeset{(0, -1) circle (1cm)}
\newcommand{\setboundaries}{
    \draw \xset node {$X$};
    \draw \yset node {$Y$};
    \draw \xprimeset node {$X'$};
    \draw \yprimeset node {$Y'$};
}
\newcommand{\thiswithoutthose}[2]{
    \begin{scope}[even odd rule]
        \clip #2 (-2, -2) rectangle (2, 2);
        \fill[orange!60] #1;
    \end{scope}
}

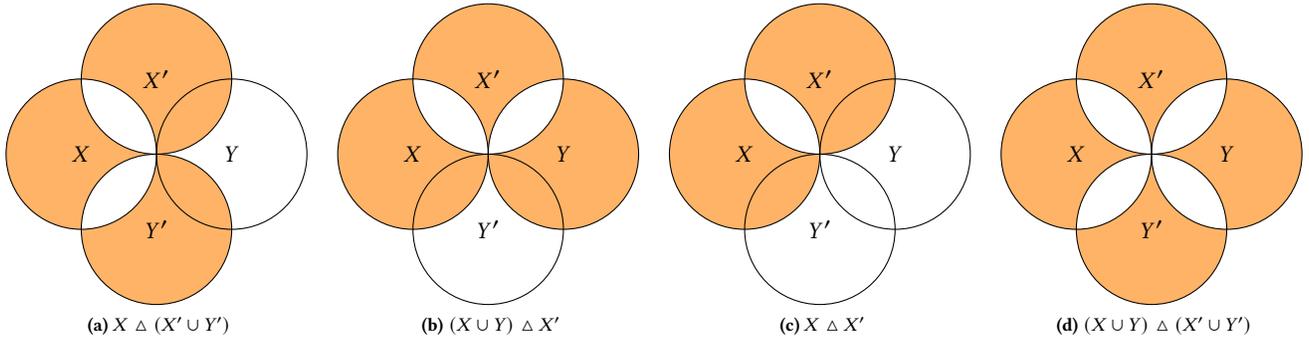
\begin{figure*}
    \centering
    \subfloat[$X \symdiff (X' \cup Y')$]{%
            \begin{tikzpicture}
                \thiswithoutthose{\xset}{\xprimeset \yprimeset}
                \thiswithoutthose{\xprimeset}{\xset}
                \thiswithoutthose{\yprimeset}{\xset}
                \setboundaries
            \end{tikzpicture}
    }
    \quad
    \subfloat[$(X \cup Y) \symdiff X'$]{%
            \begin{tikzpicture}
                \thiswithoutthose{\xset}{\xprimeset}
                \thiswithoutthose{\yset}{\xprimeset}
                \thiswithoutthose{\xprimeset}{\xset \yset}
                \setboundaries
            \end{tikzpicture}
    }
    \quad
    \subfloat[$X \symdiff X'$]{%
            \begin{tikzpicture}
                \thiswithoutthose{\xset}{\xprimeset}
                \thiswithoutthose{\xprimeset}{\xset}
                \setboundaries
            \end{tikzpicture}
    }
    \quad
    \subfloat[$(X \cup Y) \symdiff (X' \cup Y')$]{%
            \begin{tikzpicture}
                \thiswithoutthose{\xset}{\xprimeset \yprimeset}
                \thiswithoutthose{\yset}{\xprimeset \yprimeset}
                \thiswithoutthose{\xprimeset}{\xset \yset}
                \thiswithoutthose{\yprimeset}{\xset \yset}
                \setboundaries
            \end{tikzpicture}
    }
    \Description{Depictions of the sets in \Cref{eqn:swap_lemma_symdiffs}}
    \caption{Depictions of the sets in \Cref{eqn:swap_lemma_symdiffs}}
    \label{fig:swap_lemma_venns}

\end{figure*}

\begin{proof}
    Write $K'' = \swap{K'}{a_1}{a_2}$. It is clear that $K''$ has the chain
    property since $K'$ does. Since $K' \in \minch{K}$, we have $d(K, K'') \ge
    d(K, K')$. We will show that $d(K, K'') \le d(K, K')$ also, which implies
    $d(K, K'') = d(K, K') = \mindist{K}$ and thus $K'' \in \minch{K}$.

    To that end, observe that for any tournament $\hat{K}$,
    \[
        d(K, \hat{K}) = \sum_{a \in A}{|K(a) \symdiff \hat{K}(a)|}
    \]
    Noting that $K'(a) = K''(a)$ for $a \notin \{a_1,a_2\}$ and $K''(a_1) =
    K'(a_2)$, $K''(a_2) = K'(a_1)$, we have
    \begin{align*}
        d(K, K') - d(K, K'')
        &= \sum_{i \in \{1,2\}}{\left(
            |K(a_i) \symdiff K'(a_i)| - |K(a_i) \symdiff K''(a_i)|
        \right)} \\
        &= |K(a_1) \symdiff K'(a_1)| - |K(a_1) \symdiff K'(a_2)| \\
        &\quad + |K(a_2) \symdiff K'(a_2)| - |K(a_2) \symdiff K'(a_1)|
    \end{align*}

    To simplify notation, write $X = K(a_1)$, $X' = K'(a_2)$, $Y = K(a_2)
    \setminus K(a_1)$ and $Y' = K'(a_1) \setminus K'(a_2)$ so that
    \begin{align*}
        K(a_1) = X;
            &\quad\quad
        K(a_2) = X \cup Y \\
        K'(a_1) = X' \cup Y';
            &\quad\quad
        K'(a_2) = X'
    \end{align*}
    and $X \cap Y = X' \cap Y' = \emptyset$. Rewriting the above we have
    \begin{align*}
        d(K, K') - d(K, K'')
        &= |K(a_1) \symdiff K'(a_1)|
           + |K(a_2) \symdiff K'(a_2)|
            \\
        &  \quad
           - |K(a_1) \symdiff K'(a_2)|
           - |K(a_2) \symdiff K'(a_1)| \\
        &= |X \symdiff (X' \cup Y')|
           + |(X \cup Y) \symdiff X'| \\
        &  \quad
           -
           |X \symdiff X'|
           - |(X \cup Y) \symdiff (X' \cup Y')|
           \addtocounter{equation}{1}\tag{\theequation}
           \label{eqn:swap_lemma_symdiffs}
    \end{align*}

    Each of the symmetric differences in \cref{eqn:swap_lemma_symdiffs} are
    depicted in \Cref{fig:swap_lemma_venns}. Note that each of these sets can
    be expressed as a union of the 8 disjoint subsets of $X \cup Y \cup X' \cup
    Y'$ shown in the figure. Expanding the symmetric differences in
    \cref{eqn:swap_lemma_symdiffs} and consulting \Cref{fig:swap_lemma_venns},
    it can be seen that most terms cancel out, and in fact we are left with
    \[
        d(K, K') - d(K, K'') = 2|Y \cap Y'|  \ge 0
    \]
    This shows that $d(K, K'') \le d(K, K')$, and the proof is complete.
\end{proof}

\begin{notation}
    For a relation $R$ on a set $X$ and $x \in X$, write
    \[ U(x, R) = \{y \in X \mid x \mathrel{R} y\} \]
    \[ L(x, R) = \{y \in X \mid y \mathrel{R} x\} \]
    for the upper- and lower-sets of $x$ respectively.
\end{notation}

\begin{lemma}
    \label{result:chainmin_mon_extend_strict_part}
    For any tournament $K$ there is $K' \in \minch{K}$ such that for all $a \in
    A$:
    \[
        U(a, {\anlt_K}) \subseteq U(a, {\anle_{K'}})
    \]
    That is, $K(a) \subset K(a')$ implies $K'(a) \subseteq K'(a')$ for all $a,
    a' \in A$.
\end{lemma}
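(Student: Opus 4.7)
The plan is to leverage the swap lemma (\Cref{result:chainmin_mon_swapping}): starting from some $K' \in \minch{K}$, whenever a `bad' pair $(a_1, a_2)$ exists -- meaning $K(a_1) \subsetneq K(a_2)$ but $K'(a_2) \subsetneq K'(a_1)$ -- we may exchange the $a_1$- and $a_2$-th rows of $K'$ to obtain another element of $\minch{K}$ in which this particular pair is no longer bad. To avoid a tricky termination argument (swapping can in principle create new bad pairs involving other $a_3$), I would instead pass directly to an extremal element of $\minch{K}$ with respect to a carefully chosen potential, so that the swap lemma can be invoked at the level of a single contradiction.

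Concretely, define the potential
\[
\Phi(K') = \sum_{a \in A} |K(a)| \cdot |K'(a)|,
\]
and let $K^* \in \argmax_{K' \in \minch{K}} \Phi(K')$. This maximum exists because $\minch{K}$ is finite. I claim $K^*$ is the witness sought by the lemma.

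Suppose for contradiction that $K^*$ fails the required property, so there exist $a_1, a_2 \in A$ with $K(a_1) \subsetneq K(a_2)$ but $K^*(a_1) \not\subseteq K^*(a_2)$. Since $K^*$ has the chain property, ${\anle_{K^*}}$ is total, and the failure of $K^*(a_1) \subseteq K^*(a_2)$ forces $K^*(a_2) \subsetneq K^*(a_1)$. The hypotheses of the swap lemma are then satisfied, so $K^{**} := \swap{K^*}{a_1}{a_2}$ also lies in $\minch{K}$. A direct computation (all terms with $a \notin \{a_1, a_2\}$ cancel) gives
\[
\Phi(K^{**}) - \Phi(K^*) = \bigl(|K(a_2)| - |K(a_1)|\bigr)\bigl(|K^*(a_1)| - |K^*(a_2)|\bigr),
\]
which is strictly positive since both inclusions $K(a_1) \subsetneq K(a_2)$ and $K^*(a_2) \subsetneq K^*(a_1)$ are strict. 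This contradicts the maximality of $\Phi(K^*)$, so no bad pair exists, yielding the desired inclusion $U(a, {\anlt_K}) \subseteq U(a, {\anle_{K^*}})$ for every $a \in A$.

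The substantive obstacle is really in the choice of the potential: a naive counting of `bad pairs' is not monotone under swaps, since exchanging $a_1$ and $a_2$ can introduce new bad pairs involving a third row $a_3$ sandwiched in the chain. Weighting $|K'(a)|$ by $|K(a)|$ precisely aligns the swap with a strictly improving direction because strict increase in $|K(\cdot)|$ and strict decrease in $|K'(\cdot)|$ on the swapped pair combine multiplicatively; this lets a single extremal choice replace what would otherwise be a delicate iterative argument.
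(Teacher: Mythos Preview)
Your argument is correct and genuinely different from the paper's. The paper proceeds by induction on an enumeration $a_1,\ldots,a_m$ of $A$ sorted by $|L(a_i,{\anle_K})|$: at step $i+1$ it picks a $\anle_{K_i}$-minimal element $\hat{a}$ of $U(a_{i+1},{\anlt_K})$, swaps $a_{i+1}$ with $\hat{a}$ if necessary via \Cref{result:chainmin_mon_swapping}, and then runs a three-case analysis to check that the inclusions established at earlier stages survive the swap. Your extremal argument sidesteps this entirely: by maximising the bilinear potential $\Phi(K')=\sum_a |K(a)|\,|K'(a)|$ over the finite set $\minch{K}$, a single application of the swap lemma to any bad pair yields the rearrangement inequality $\Phi(K^{**})-\Phi(K^*)=(|K(a_2)|-|K(a_1)|)(|K^*(a_1)|-|K^*(a_2)|)>0$, an immediate contradiction. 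The gain is that you never need to track what a swap does to \emph{other} pairs, which is exactly the delicate part of the paper's induction; the cost is that your proof is non-constructive in a slightly stronger sense (one must search all of $\minch{K}$ for the maximiser rather than perform at most $m$ swaps), though of course neither argument is meant to be algorithmic. Your closing remark about why a naive bad-pair count fails is apt and nicely motivates the choice of $\Phi$.
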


\begin{proof}
    Write $A = \{a_1,\ldots,a_m\}$, ordered such that $|L(a_1, {\anle_K})| \le
    \cdots \le |L(a_m, {\anle_K})|$. We will show by induction that for each $0
    \le i \le m$ there is $K_i \in \minch{K}$ such that:
    \[
        1 \le j \le i
        \implies
        U(a_j, {\anlt_K}) \subseteq U(a_j, {\anle_{K_i}})
        \tag{$\ast$}
        \label{eqn:mon_lemma_induction}
    \]
    The result follows by taking $K' = K_m$.

    The case $i=0$ is vacuously true, and we may take $K_0$ to be an arbitrary
    member of $\minch{K}$. For the inductive step, suppose
    \labelcref{eqn:mon_lemma_induction} holds for some $0 \le i < m$. If
    $U(a_{i+1}, {\anlt_K}) = \emptyset$ then we may set $K_{i+1} = K_i$, so
    assume that $U(a_{i+1}, {\anlt_K})$ is non-empty. Take some $\hat{a} \in
    \min(U(a_{i+1}, {\anlt_K}), {\anle_{K_i}})$. Then $\hat{a}$ has (one of)
    the smallest neighbourhoods in $K_i$ amongst those in $A$ with a strictly
    larger neighbourhood than $a_{i+1}$ in $K$.

    If $K_i(a_{i+1}) \subseteq K_i(\hat{a})$ then we claim
    \labelcref{eqn:mon_lemma_induction} holds with $K_{i+1} = K_i$. Indeed, for
    $j < i + 1$ the inclusion in \labelcref{eqn:mon_lemma_induction} holds
    since it does for $K_i$. For $j = i+1$, let $a \in U(a_{i+1}, {\anlt_K})$.
    The definition of $\hat{a}$ implies $K_i(a) \not\subset K_i(\hat{a})$;
    since $K_i$ has the chain property this means $K_i(\hat{a}) \subseteq
    K_i(a)$. Consequently $K_i(a_{i+1}) \subseteq K_i(\hat{a}) \subseteq
    K_i(a)$, i.e. $a \in U(a_{i+1}, {\anle_{K_i}}) = U(a_{i+1},
    {\anle_{K_{i+1}}})$ as required.

    For the remainder of the proof we therefore suppose $K_i(a_{i+1})
    \not\subseteq K_i(\hat{a})$. The chain property for $K_i$ gives
    $K_i(\hat{a}) \subset K_i(a_{i+1})$. Since $K_i \in \minch{K}$ and
    $K(a_{i+1}) \subset K(\hat{a})$, we may apply
    \Cref{result:chainmin_mon_swapping}. Set $K_{i+1} =
    \swap{K_i}{a_{i+1}}{\hat{a}} \in \minch{K}$. The inclusion in
    \labelcref{eqn:mon_lemma_induction} is easy to show for $j=i+1$: if $a \in
    U(a_{i+1}, {\anlt_K})$ then either $a = \hat{a}$ -- in which case
    $K_{i+1}(a_{i+1}) \subset K_{i+1}(a)$ by construction -- or $a \ne \hat{a}$
    and $K_{i+1}(a_{i+1}) = K_i(\hat{a}) \subseteq K_i(a) = K_{i+1}(a)$. In
    either case $a \in U(a_{i+1}, {\anle_{K_{i+1}}})$ as required.

    Now suppose $1 \le j < i + 1$. First note that due to our assumption on the
    ordering of $\{a_1,\ldots,a_m\}$, we have $a_j \ne \hat{a}$ (indeed, if
    $a_j = \hat{a}$ then $K(a_{i+1}) \subset K(a_j)$ and $|L(a_j, {\anlt_K})| >
    |L(a_{i+1}, {\anlt_K})|$). Since $a_j \ne a_{i+1}$ also, $a_j$ was not
    involved in the swapping in the construction of $K_{i+1}$, and consequently
    $K_{i+1}(a_j) = K_i(a_j)$. Let $a \in U(a_j, {\anlt_K})$. We must show that
    $K_{i+1}(a_j) \subseteq K_{i+1}(a)$. We consider cases.

    \textbf{Case 1:} $a = \hat{a}$. Using the fact that
    \labelcref{eqn:mon_lemma_induction} holds for $K_i$ we have
    \[
        K_{i+1}(a_j)
        = K_i(a_j)
        \subseteq K_i(\hat{a})
        \subset K_i(a_{i+1})
        = K_{i+1}(\hat{a})
    \]

    \textbf{Case 2:} $a = a_{i+1}$. Here $K(a_j) \subset K(a_{i+1}) \subset
    K(\hat{a})$, i.e. $\hat{a} \in U(a_j, {\anlt_K})$. Applying the inductive
    hypothesis again we have
    \[
        K_{i+1}(a_j)
        = K_i(a_j)
        \subseteq K_i(\hat{a})
        = K_{i+1}(a_{i+1})
    \]

    \textbf{Case 3:} $a \notin \{\hat{a}, a_{i+1}\}$. Here neither $a_j$ nor
    $a$ were involved in the swap, so $K_{i+1}(a_j) = K_i(a_j) \subseteq K_i(a)
    = K_{i+1}(a)$.

    By induction, the proof is complete.
\end{proof}

\begin{lemma}
    \label{result:chainmin_mon_extend_full}
    Let $K$ be a tournament and suppose $K' \in \minch{K}$ is such that $U(a,
    {\anlt_K}) \subseteq U(a, {\anle_{K'}})$ for all $a \in A$. Then there is
    $K'' \in \minch{K}$ such that ${\anle_K} \subseteq {\anle_{K''}}$.


\end{lemma}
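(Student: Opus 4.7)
The plan is to construct $K''$ from $K'$ by ``leveling'' the $K'$-neighborhoods within each equivalence class of the relation $a \sim a'$ iff $K(a) = K(a')$. Let $C_1, \ldots, C_r$ denote the equivalence classes, and write $X_i$ for the common value of $K(a)$ on $C_i$. For each $i$ pick a representative $a_i^* \in \argmin_{a \in C_i}|X_i \symdiff K'(a)|$, and define $K''$ by $K''(a) = K'(a_i^*)$ whenever $a \in C_i$.

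First I will verify $K'' \in \minch{K}$. The chain property is immediate: the family $\{K''(a) : a \in A\}$ is a subfamily of $\{K'(a) : a \in A\}$, which $K'$ already totally orders by inclusion. For the Hamming distance, the minimality of $a_i^*$ gives the averaging inequality
\[
    |C_i| \cdot |X_i \symdiff K'(a_i^*)| \le \sum_{a \in C_i}|X_i \symdiff K'(a)|
\]
for each $i$. Summing over $i$ and using $K(a) = X_i$ on $C_i$ yields $d(K, K'') \le d(K, K')$; since $K' \in \minch{K}$, equality must hold, so $K'' \in \minch{K}$.

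Next I will check that ${\anle_K} \subseteq {\anle_{K''}}$. Suppose $K(a) \subseteq K(a')$. If $K(a) = K(a')$, then $a, a'$ lie in a common class $C_i$ and $K''(a) = K'(a_i^*) = K''(a')$. Otherwise $K(a) \subsetneq K(a')$, with $a \in C_i$ and $a' \in C_j$ for distinct classes, so $X_i \subsetneq X_j$ and in particular $K(a_i^*) \subsetneq K(a_j^*)$, i.e. $a_j^* \in U(a_i^*, {\anlt_K})$. The hypothesis $U(a_i^*, {\anlt_K}) \subseteq U(a_i^*, {\anle_{K'}})$ then yields $K'(a_i^*) \subseteq K'(a_j^*)$, hence $K''(a) \subseteq K''(a')$, as required.

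The only real subtlety is that the representatives chosen for different classes must still be comparable in $K'$ in the correct direction; this is exactly what the hypothesis on $K'$ delivers, so nothing more is needed beyond the bookkeeping above.
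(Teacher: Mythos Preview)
Your proof is correct and follows essentially the same approach as the paper: partition $A$ into classes with equal $K$-neighborhoods, choose in each class a representative minimizing the symmetric difference with the $K'$-neighborhood, and set $K''$ to copy that representative's $K'$-row across the class. The verification of the chain property, the distance inequality, and the inclusion ${\anle_K} \subseteq {\anle_{K''}}$ all match the paper's argument (your chain-property justification via ``subfamily of a chain'' is in fact slightly slicker).
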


\begin{proof}
    Let $A_1, \ldots, A_t \subseteq A$ be the equivalence classes of
    ${\aneq_K}$, the symmetric part of ${\anle_K}$. Note that $a \aneq_K a'$
    iff $K(a) = K(a')$, so we can associate each $A_i$ with a neighbourhood
    $B_i \subseteq B$ such that $K(a) = B_i$ whenever $a \in A_i$.

    Our aim is to select a single element from each equivalence class $A_i$,
    which we denote by $f(A_i)$, and modify $K'$ to set the neighbourhood of
    each $a \in A_i$ to $K'(f(A_i))$. To that end, construct a function $f:
    \{A_1,\ldots,A_t\} \to A$ such that
    \[
        f(A_i) \in \argmin_{a \in A_i}{|B_i \symdiff K'(a)|} \in A_i
    \]
    Define $K''$ by $K''_{ab} = K'_{f([a]), b}$, where $[a]$ denotes the
    equivalence class of $a$. Then $K''(a) = K'(f([a]))$ for all $a$.

    Next we show that $K'' \in \minch{K}$. Note that $K''$ has the chain
    property, since $a_1 \anle_{K''} a_2$ iff $f([a_1]) \anle_{K'} f([a_2])$,
    and $f([a_1]), f([a_2])$ are guaranteed to be comparable with respect to
    ${\anle_{K'}}$ since $K'$ has the chain property. To show $d(K, K'')$ is
    minimal, observe that
    \begin{align*}
        d(K, K'')
        &= \sum_{a \in A}{|K(a) \symdiff K''(a)|} \\
        &= \sum_{i=1}^{t}{
            \sum_{a \in A_i}{
                |B_i \symdiff K'(f(A_i))|
            }
        }
    \end{align*}
    By definition of $f$, we have $|B_i \symdiff K'(f(A_i))| \le |B_i \symdiff
    K'(a)|$ for all $a \in A_i$. Consequently
    \begin{align*}
        d(K, K'')
        &\le \sum_{i=1}^{t}{
            \sum_{a \in A_i}{
                |B_i \symdiff K'(a)|
            }
        } \\
        &= d(K, K') \\
        &= \mindist{K}
    \end{align*}
    which implies $K'' \in \minch{K}$.

    We are now ready to prove the result. Suppose $a \anle_K a'$ i.e. $K(a)
    \subseteq K(a')$. If $K(a) = K(a')$ then $[a] = [a']$, so
    \[
        K''(a) = K'(f([a])) = K'(f([a'])) = K''(a')
    \]
    and in particular $K''(a) \subseteq K''(a')$. If instead $K(a) \subset
    K(a')$, then $K(f([a])) = K(a) \subset K(a') = K(f([a']))$, i.e.  $f([a])
    \anlt_K f([a'])$. By the assumption on $K'$ in the statement of the lemma,
    this means $f([a]) \anle_{K'} f([a'])$, and so
    \[
        K''(a) = K'(f([a])) \subseteq K'(f([a'])) = K''(a')
    \]
    In either case $K''(a) \subseteq K''(a')$, i.e. $a \anle_{K''} a'$. Since
    $a, a'$ were arbitrary, this shows that ${\anle_K} \subseteq {\anle_{K''}}$
    as required.
\end{proof}

The pieces are now in place to prove \Cref{result:chainmin_mon_compatibility}

\begin{proof}[Proof of \Cref{result:chainmin_mon_compatibility}]
    For any tournament $K$, write
    \[
        \minchmon{K} = \{
            K' \in \minch{K} \mid {\anle_K} \subseteq {\anle_{K'}}
        \}
    \]
    By \Cref{result:chainmin_mon_extend_strict_part} and
    \Cref{result:chainmin_mon_extend_full}, $\minchmon{K}$ is non-empty. Let
    $\ll$ be any total order on the set $\K$ of all tournaments.
    Define a function $\alpha: \K \to \K$ by
    \[
        \alpha(K) = \min(\minchmon{K}, {\ll}) \in \minchmon{K}
    \]
    Note that the minimum is unique since ${\ll}$ is a total order. Defining an
    operator $\phi$ by $\phi(K) = ({\anle_{\alpha(K)}}, {\bnle_{\alpha(K)}})$,
    we see that $\phi$ satisfies \axiomref{chain-min} and \axiomref{mon}, as
    required.
\end{proof}

\subsection{Proof of \Cref{prop:matchpref_weightings}}

The following preliminary result is required.

\begin{lemma}
   \label{result:powertwo}

   Let $k$ and $l$ be integers with $1 \le k \le l$. Then
   \[ \sum_{i=k}^{l}{2^{-i}} < 2^{-(k - 1)} \]

\end{lemma}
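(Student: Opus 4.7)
The plan is to recognise this as a bounded geometric series and apply the standard closed-form formula. First I would factor out $2^{-k}$ to rewrite the sum as
\[
    \sum_{i=k}^{l} 2^{-i} = 2^{-k} \sum_{j=0}^{l-k} 2^{-j},
\]
so that the inner sum is a truncated geometric series with ratio $1/2$.

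Next, I would apply the standard geometric series identity
\[
    \sum_{j=0}^{l-k} 2^{-j} = \frac{1 - 2^{-(l-k+1)}}{1 - 2^{-1}} = 2\bigl(1 - 2^{-(l-k+1)}\bigr),
\]
which substituted back gives
\[
    \sum_{i=k}^{l} 2^{-i} = 2^{-(k-1)}\bigl(1 - 2^{-(l-k+1)}\bigr).
\]

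Finally, since $k \le l$ implies $l - k + 1 \ge 1$, the quantity $2^{-(l-k+1)}$ is strictly positive, so the factor $1 - 2^{-(l-k+1)}$ is strictly less than $1$. This yields the desired strict inequality $\sum_{i=k}^{l} 2^{-i} < 2^{-(k-1)}$. There is no real obstacle here; the only mild subtlety is to keep the inequality strict, which relies on the finiteness of $l$ (so that the subtracted term does not vanish). An alternative route would be to bound the finite sum by the infinite tail $\sum_{i=k}^{\infty} 2^{-i} = 2^{-(k-1)}$ and observe that the omitted positive terms make the inequality strict, but the closed-form computation above is cleaner.
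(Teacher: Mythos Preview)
Your proof is correct and takes essentially the same approach as the paper: both evaluate the finite geometric series in closed form and observe that a strictly positive term is subtracted from $2^{-(k-1)}$. The only cosmetic difference is that you factor out $2^{-k}$ before applying the formula, whereas the paper writes the sum as a difference of two partial sums starting at $i=0$; both routes land on $2^{-(k-1)} - 2^{-l}$.
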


\begin{proof}

    This follows from the formula for the sum of a finite geometric
    series:
    \[
        \sum_{i=0}^{n-1}{r^i} = \frac{1-r^n}{1-r}
    \]
    which holds for all $r \ne 1$. In this case we have
    \begin{align*}
        \sum_{i=k}^{l}{2^{-i}}
        &= \sum_{i=0}^{l}{2^{-i}} - \sum_{i=0}^{k-1}{2^{-i}} \\
        &= \sum_{i=0}^{l}{\left(\frac{1}{2}\right)^i}
           -
           \sum_{i=0}^{k - 1}{\left(\frac{1}{2}\right)^i} \\
        &= \frac{
               1 - \left(\frac{1}{2}\right)^{l+1}
           }{
               1 - \left(\frac{1}{2}\right)
           }
           -
           \frac{
               1 - \left(\frac{1}{2}\right)^k
           }{
               1 - \left(\frac{1}{2}\right)
           } \\
        &= 2 \left(
           2^{-k} - 2^{-(l+1)}
        \right) \\
        &= 2^{-(k-1)} - \underbrace{2^{-l}}_{> 0} \\
        &< 2^{-(k-1)}
    \end{align*}
    as required.
\end{proof}

\begin{proof}[Proof of \Cref{prop:matchpref_weightings}]

    Let $\trianglelefteq$ be a total order on $\N \times \N$ and let $m, n \in
    \N$. For $a \in [m]$ and $b \in [n]$, write
    \[
        p(a,b)
        =
        1 + |\{(a',b') \in [m] \times [n] : (a',b') \vartriangleleft (a,b)\}|
    \]
    for the `position' of $(a,b)$ in ${\trianglelefteq} \rs ([m] \times [n])$
    (where 1 corresponds to the minimal pair). Define $w$ by
    \[
        w(a,b) = 1 + 2^{-p(a,b)}
    \]
    If we abuse notation slightly and view $w$ as an $m \times n$ matrix, we
    have, by construction, $\vect_{\trianglelefteq}(w) = (1 +
    2^{-1},\ldots,1 + 2^{-mn})$. Noting that $|K_{ab} - K'_{ab}| = [K \oplus
    K']_{ab}$ for any tournaments $K, K'$, and letting $\dotprod$ denote the
    dot product, it is easy to see that
    \begin{align*}
        d_w(K, K')
        &= \vect_{\trianglelefteq}(w)
            \dotprod \vect_{\trianglelefteq}(K \oplus K') \\
        &= (1 + 2^{-1},\ldots,1 + 2^{-mn})
           \dotprod
           \vect_{\trianglelefteq}(K \oplus K') \\
        &= d(K, K') + \bm{x} \dotprod \vect_{\trianglelefteq}(K \oplus K')
    \end{align*}
    where $\bm{x} = (2^{-1},\ldots,2^{-mn})$ and $d(K, K')$ is the
    unweighted Hamming distance. In particular, since $\bm{x}$ and
    $\vect_{\trianglelefteq}(K \oplus K')$ are non-negative, we have $d_w(K, K')
    \ge d(K, K')$.

    Now, we will show that for any $m \times n$ tournament $K$ and $K' \in
    \ch_{m,n}$ with $K' \ne \alpha_{\trianglelefteq}(K)$ we have $d_w(K,
    \alpha_{\trianglelefteq}(K)) < d_w(K, K')$. Since
    $\alpha_{\trianglelefteq}(K) \in \minch{K} \subseteq \ch_{m,n}$ by
    definition, this will show that $\alpha_{\trianglelefteq}(K)$ is the
    unique minimum in \Cref{eqn:matchpref_argmin}, as required.

    So, let $K$ be an $m \times n$ tournament and $K' \in \ch_{m,n}$. To ease
    notation, write $v = \vect_{\trianglelefteq}(K \oplus
    \alpha_{\trianglelefteq}(K))$ and $v' = \vect_{\trianglelefteq}(K \oplus K')$.
    There are two cases.

    \textbf{Case 1:} $K' \not\in \minch{K}$. In this case we have $d(K, K')
    \ge \mindist{K} + 1$, and
    \begin{align*}
       d_w(K, \alpha_{\trianglelefteq}(K))
       &=
           \underbrace{d(K, \alpha_{\trianglelefteq}(K))}_{= \mindist{K}}
           +
           \bm{x} \dotprod v \\
       &= \mindist{K} + \sum_{i=1}^{mn}{2^{-i} \cdot \underbrace{v_i}_{\le 1}}
       \\
       &\le \mindist{K} + \underbrace{\sum_{i=1}^{mn}{2^{-i}}}_{< 2^{-0} = 1}
       \\
       &< \mindist{K} + 1 \\
       &\le d(K, K') \\
       &\le d_w(K, K')
    \end{align*}
    where \Cref{result:powertwo} was applied in the 4th step. This shows $d_w(K,
    \alpha_{\trianglelefteq}(K)) < d_w(K, K')$, as required.

    \textbf{Case 2:} $K \in \minch{K}$. In this case we have
    \begin{align*}
       d(K, \alpha_{\trianglelefteq}(K)) - d(K, K')
       &= (\mindist{K} + \bm{x} \dotprod v)
          -
          (\mindist{K} + \bm{x} \dotprod v') \\
       &= \bm{x} \dotprod (v - v')
    \end{align*}

    Now, since $K' \in \minch{K}$, $v'$ appears as one of the vectors over
    which the $\argmin$ is taken in
    \Cref{eqn:match_preference_alpha_definition}. By definition of
    $\alpha_{\trianglelefteq}$ we therefore know that $v$ strictly precedes
    $v'$ with respect to the lexicographic order on $\{0,1\}^{mn}$.
    Consequently there is $j \ge 1$ such that $v_i = v'_i$ for $i < j$ and $v_j
    < v'_j$. That is, $v_j = 0$ and $v'_j = 1$. This means
    \begin{align*}
       d(K, \alpha_{\trianglelefteq}(K)) - d(K, K')
       &= \bm{x} \dotprod (v - v') \\
       &= \sum_{i=1}^{mn}{2^{-i}(v_i - v'_i)} \\
       &= \sum_{i=1}^{j-1}{2^{-i}\underbrace{(v_i - v'_i)}_{=0}}
          +
          \sum_{i=j}^{mn}{2^{-i}(v_i - v'_i)} \\
       &= 2^{-j}\underbrace{(v_j - v'_j)}_{=-1}
          + \sum_{i=j + 1}^{mn}{
               2^{-i}
               \underbrace{(v_i - v'_i)}_{\le 1}
            } \\
       &\le -2^{-j}
            + \sum_{i=j+1}^{mn}{
                  2^{-i}
              } \\
       &< -2^{-j} + 2^{-j} \\
       &= 0
    \end{align*}
    where \Cref{result:powertwo} was applied in the second to last step.
    Again, this shows $d_w(K, \alpha_{\trianglelefteq}(K)) < d_w(K, K')$, and
    the proof is complete.
\end{proof}

\subsection{Proof of \Cref{result:chain_def_ranks_characterisation}}

\begin{proof}

    First we set up some notation. For a total preorder $\preceq$ on a set $Z$
    and $z \in Z$, write $[z]_{\preceq}$ for the rank of ${\preceq}$ containing
    $z$, i.e. the equivalence class of $z$ in the symmetric closure of
    ${\preceq}$:
    \[
        [z]_{\preceq}
        = \{z' \in Z \mid z \preceq z' \text{ and } z' \preceq z\}
    \]
    Also note that $\preceq$ can be extended to a total order on the ranks by
    setting $[z]_{\preceq} \le [z']_{\preceq}$ iff $z \preceq z'$.

    (⇒) We start with the `only if' statement of the theorem. Suppose
    $\phi$ satisfies \axiomref{chain-def}, and let $K$ be a tournament. We need
    to show that $|\ranks{\ale_K^\phi} - \ranks{\ble_K^\phi}| \le 1$.

    By chain-definability, there is $K'$ with the chain property such that $a
    \ale_K^\phi a'$ iff $K'(a) \subseteq K'(a')$ and $b \ble_K^\phi b'$ iff
    $(K')^{-1}(b) \supseteq (K')^{-1}(b')$. Write
    \[ \mathcal{X} = \{ [a]_{\ale_K^\phi} \mid a \in A, K'(a) \ne \emptyset \} \]
    \[ \mathcal{Y} = \{ [b]_{\ble_K^\phi} \mid b \in B, (K')^{-1}(b) \ne \emptyset \} \]
    for the set of ranks in each of the two orders, excluding those who have
    empty neighbourhoods in $K'$. Note that $[a]_{\ale_K^\phi} =
    [a']_{\ale_K^\phi}$ if and only if $K'(a) = K'(a')$ (and similar for $B$).

    We will show that $|\mathcal{X}| = |\mathcal{Y}|$. Enumerate $\mathcal{X} =
    \{X_1,\ldots,X_s\}$ and $\mathcal{Y} = \{Y_1,\ldots,Y_t\}$, ordered such
    that $X_1 < \cdots < X_s$ and $Y_1 < \cdots < Y_t$. First we show
    $|\mathcal{X}| \le |\mathcal{Y}|$.

    For each $1 \le i \le s$, the $a_i$ be an arbitrary element of $X_i$. Then
    $a_1 \alt_K^\phi \cdots \alt_K^\phi a_s$, so $\emptyset \subset K'(a_1)
    \subset \cdots \subset K'(a_s)$. Since these inclusions are strict, we can
    choose $b_1,\ldots,b_s \in B$ such that $b_1 \in K'(a_1)$ and $b_{i+1} \in
    K'(a_{i+1}) \setminus K'(a_i)$ for $1 \le i < s$.

    It follows that $a_i \in (K')^{-1}(b_i) \setminus (K')^{-1}(b_{i+1})$, and
    thus $(K')^{-1}(b_i) \not\subseteq (K')^{-1}(b_{i+1})$. Since $K'$ has
    the chain property, this means $(K')^{-1}(b_{i+1}) \subset
    (K')^{-1}(b_i)$, i.e. $b_i \blt_K^\phi b_{i+1}$.

    We now have $b_1 \blt_K^\phi \cdots \blt_K^\phi b_s$; a chain of $s$ strict
    inequalities in $\ble_K^\phi$. The corresponding ranks $[b_1], \ldots,
    [b_s]$ are all distinct and lie inside $\mathcal{Y}$. But now we have found
    $s = |\mathcal{X}|$ distinct elements of $\mathcal{Y}$, so $|\mathcal{X}|
    \le |\mathcal{Y}|$ as promised.

    Repeating this argument with the roles of $\mathcal{X}$ and $\mathcal{Y}$
    interchanged, we find that $|\mathcal{Y}| \le |\mathcal{X}|$ also, and
    therefore $|\mathcal{X}| = |\mathcal{Y}|$.

    To conclude, note that $\ranks{\ale_K^\phi} \in \{|\mathcal{X}|,
    |\mathcal{X}| + 1\}$, since there can exist at most one rank which was
    excluded from $\mathcal{X}$ (namely, those $a \in A$ with $K'(a) =
    \emptyset$). For identical reasons, $\ranks{\ble_K^\phi} \in
    \{|\mathcal{Y}|, |\mathcal{Y}| + 1\}$. Since $|\mathcal{X}| =
    |\mathcal{Y}|$, it is clear that $\ranks{\ale_K^\phi}$ and
    $\ranks{\ble_K^\phi}$ can differ by at most one, as required.

    (⇐) Now we prove the `if' statement. Let $K$ be a tournament. We have
    $|\ranks{\ale_K^\phi} - \ranks{\ble_K^\phi}| \le 1$, and must show there is
    tournament $K'$ with the chain property such that $\phi(K) = ({\anle_{K'}},
    {\bnle_{K'}})$.

    Let $X_1 < \cdots < X_s$ and $Y_1 < \cdots < Y_t$ be the ranks of
    $\ale_K^\phi$ and $\ble_K^\phi$ respectively. By hypothesis $|s - t| \le
    1$. Define $g: \{1,\ldots,s\} \to \{0,\ldots,t\}$ by
    \[
        g(i) = \begin{cases}
            i,& s \in \{t-1, t\} \\
            i - 1,& s = t + 1
        \end{cases}
    \]
    Not that the two cases above cover all possibilities, since $|s - t| \le
    1$. For $i \in [s]$, write
    \[
        N_i = \bigcup_{0 \le j \le g(i)}{Y_j}
    \]
    where $Y_0 := \emptyset$. Note that $g(i+1) = g(i) + 1$, and consequently
    \[
        N_{i+1}
        = \bigcup_{j \le g(i) + 1}{Y_j}
        = N_i \cup Y_{g(i) + 1}
        = N_i \cup Y_{g(i + 1)}
    \]
    Since $g(i+1) > 0$ we have $Y_{g(i+1)} \ne \emptyset$, and thus $N_{i+1}
    \supset N_i$ for all $i < s$.

    Now, for any $a \in A$, let $p(a) \in [s]$ be the unique integer such that
    $a \in X_{p(a)}$; such $p(a)$ always exists since $\{X_1,\ldots,X_s\}$ is a
    partition of $A$. Note that due to the assumption on the ordering of the
    $X_i$, we have $a \ale_K^\phi a'$ if and only if $p(a) \le p(a')$.

    Let $K'$ be the unique tournament such that $K'(a) = N_{p(a)}$ for each $a
    \in A$. Since $N_1 \subset \cdots \subset N_p$, we have
    \begin{equation}
        \label{eqn:ale_k_phi_iff_anle_kprime}
        \begin{aligned}
            a \ale_K^\phi a'
            &\iff p(a) \le p(a') \\
            &\iff N_{p(a)} \subseteq N_{p(a')} \\
            &\iff K'(a) \subseteq K'(a') \\
            &\iff a \anle_{K'} a'
        \end{aligned}
    \end{equation}
    i.e. ${\ale_K^\phi} = {\anle_{K'}}$. Since ${\ale_K^\phi}$ is a total
    preorder, this shows that $K'$ has the chain property.

    It only remains to show that ${\ble_K^\phi} = {\bnle_{K'}}$. First note
    that if $a \in X_i$ and $b \in Y_j$, the fact that $\{Y_1,\ldots,Y_t\}$ are
    disjoint implies
    \begin{align*}
        a \in (K')^{-1}(b)
        &\iff b \in K'(a) = N_i = \bigcup_{0 \le k \le g(i)}{Y_k} \\
        &\iff j \le g(i)
    \end{align*}
    Hence $(K')^{-1}(b)$ only depends on $j$: every $b \in Y_j$ shares the same
    neighbourhood $M_j$, given by
    \[
        M_j = \bigcup_{i \in [s] :\ g(i) \ge j}{X_i}
    \]
    Note that if $1 \le j < t$,
    \begin{align*}
        M_j
        &= \bigcup_{i \in [s] :\ g(i) \ge j}{X_i} \\
        &= \left(\bigcup_{i \in [s] :\ g(i) \ge j + 1}{X_i}\right)
            \cup
            \left(\bigcup_{i \in g^{-1}(j)}{X_i}\right) \\
        &= M_{j+1} \cup \bigcup_{i \in g^{-1}(j)}{X_i}
    \end{align*}
    Since $1 \le j < t$ we have
    \[
        g^{-1}(j) = \begin{cases}
            \{j\},& s \in \{t-1,t\} \\
            \{j+1\},& s = t + 1
        \end{cases}
    \]
    In particular $g^{-1}(j) \ne \emptyset$, which means $\bigcup_{i \in
    g^{-1}(j)}{X_i} \ne \emptyset$ and thus $M_j \supset M_{j+1}$
    for all $1 \le j < t$.

    Finally, since $(K')^{-1}(b) = M_j$ for $b \in Y_j$ and $M_1 \supset \cdots
    \supset M_t$, an argument almost identical to
    \labelcref{eqn:ale_k_phi_iff_anle_kprime} shows that ${\ble_K^\phi} =
    {\bnle_{K'}}$.

    We have shown that $\phi(K) = ({\anle_{K'}}, {\bnle_{K'}})$ and that $K'$
    has the chain property, and the proof is therefore complete.
\end{proof}

\subsection{Proof that the interleaving procedure eventually terminates}

\begin{proposition}
    \label{prop:interleaving_terminates}

    Let $(f,g)$ be selection functions. Fix a tournament $K$ and let $A_i,
    B_i$ ($i \ge 0$) be as in \Cref{def:interleaving}. Then there are $j, j'
    \ge 1$ such that $A_j = \emptyset$ and $B_{j'} = \emptyset$. Moreover,
    there is $t \ge 1$ such that both $A_t = B_t = \emptyset$.

\end{proposition}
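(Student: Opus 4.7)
The plan is to prove the three claims by a simple strict-descent argument on the cardinalities $|A_i|$ and $|B_i|$, and then combine them using property~\labelcref{item:f_sel_3} of $f$ and property~\labelcref{item:g_sel_3} of $g$, which transfer emptiness from one side to the other.

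First I would establish \emph{monotonicity and absorption of emptiness}. Since $A_{i+1} = A_i \setminus f(K,A_i,B_i) \subseteq A_i$, the sequence $(A_i)_{i \ge 0}$ is nested, and likewise for $(B_i)_{i \ge 0}$. Moreover, once $A_i = \emptyset$ we have $A_{i+1} \subseteq A_i = \emptyset$, so the empty state is absorbing; similarly for $B$.

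Next I would prove \emph{strict descent until exhaustion}: as long as $A_i \ne \emptyset$, property~\labelcref{item:f_sel_2} of a selection function gives $f(K,A_i,B_i) \ne \emptyset$, and since $f(K,A_i,B_i) \subseteq A_i$ by~\labelcref{item:f_sel_1}, we get $|A_{i+1}| < |A_i|$. Because $|A_0| = |A|$ is finite, there must exist some least $j \ge 1$ with $A_j = \emptyset$. By the symmetric argument applied to $g$, there exists a least $j' \ge 1$ with $B_{j'} = \emptyset$. This handles the first part of the statement.

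Finally, I would argue the \emph{simultaneous termination} step. Without loss of generality assume $j \le j'$ (the other case is analogous). Then $A_j = \emptyset$, and applying property~\labelcref{item:g_sel_3} of $g$ gives $g(K, A_j, B_j) = g(K, \emptyset, B_j) = B_j$, hence $B_{j+1} = B_j \setminus B_j = \emptyset$. Combined with absorption from the first step, $A_{j+1} = \emptyset$ as well, so taking $t = j+1$ yields $A_t = B_t = \emptyset$, as required.

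There is no real obstacle here; the only subtlety is to make sure that the finite strict-descent argument is applied before invoking~\labelcref{item:f_sel_3} and~\labelcref{item:g_sel_3} to synchronise the two sides, since those latter properties only activate once one side has already become empty.
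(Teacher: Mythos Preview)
Your proof is correct and follows essentially the same strict-descent argument as the paper. The only difference is in the final step: the paper simply takes $t = \max\{j, j'\}$ using absorption, whereas you invoke property~\labelcref{item:g_sel_3} to force $B_{j+1} = \emptyset$ once $A_j = \emptyset$; this is valid but unnecessary for the statement as written (though it does implicitly yield the stronger fact $|j - j'| \le 1$, which the paper uses later in the proof of \Cref{result:chaindef_iff_interleaving}).
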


\begin{proof}

    Suppose $i \ge 0$ and $A_i \ne \emptyset$. Then properties
    \labelcref{item:f_sel_1} and \labelcref{item:f_sel_2} for $f$ in
    \Cref{def:selectionfunction} imply that $\emptyset \subset f(K, A_i, B_i)
    \subseteq A_i$, and consequently $A_{i+1} = A_i \setminus f(K, A_i, B_i)
    \subset A_i$.

    Supposing that $A_j \ne \emptyset$ for all $j \ge 0$, we would have $A_0
    \supset A_1 \supset A_2 \supset \cdots$ which clearly cannot be the case
    since each $A_j$ lies inside $A$ which is a finite set. Hence there is $j
    \ge 1$ such that $A_j = \emptyset$. Moreover, since $A_j \supseteq A_{j+1}
    \supseteq A_{j+2} \supseteq \cdots$, we have $A_k = \emptyset$ for all $k
    \ge j$.

    An identical argument with $g$ shows that there is $j' \ge 1$ such that
    $B_{j'} = \emptyset$ and $B_k = \emptyset$ for all $k \ge j'$.

    Taking $t = \max\{j, j'\}$, we have $A_t = B_t = \emptyset$ as required.
\end{proof}

\subsection{Proof of \Cref{result:chaindef_iff_interleaving}}

\begin{proof}

    Throughout the proof we will refer to a pair of total preorders
    $({\ale}, {\ble})$ as `chain-definable' if there is a chain tournament $K$
    such that ${\ale} = {\anle_K}$ and ${\ble} = {\bnle_K}$.

    (⇐) First we prove the `if' direction. Let $\phi = \intop{f,g}$ be an
    interleaving operator with selection functions $(f, g)$, and fix a
    tournament $K$. We will show that $\phi(K)$ is chain-definable.

    As per \Cref{prop:interleaving_terminates}, let $j, j' \ge 1$ be the
    minimal integers such that $A_j = \emptyset$ and $B_{j'} = \emptyset$. Then
    we have $A_0 \supset \cdots \supset A_{j-1} \supset A_j = \emptyset$ and
    $B_0 \supset \cdots \supset B_{j'-1} \supset B_{j'} = \emptyset$.

    Recall that, for $a \in A$, we have by definition $r(a) = \max\{i \mid a
    \in A_i\}$, which is the unique integer such that $a \in A_{r(a)} \setminus
    A_{{r(a)}+1}$. Since $a \ale_K^\phi a'$ iff $r(a) \ge r(a')$, it follows
    that the non-empty sets $A_0 \setminus A_1, \ldots, A_{j-1} \setminus A_j$
    form the ranks of the total preorder ${\ale_K^\phi}$ (that is, the
    equivalence classes of the symmetric closure ${\aeq_K^\phi}$). Thus,
    ${\ale_K^\phi}$ has $j$ ranks. An identical argument shows that
    ${\ble_K^\phi}$ has $j'$ ranks.

    It follows from \Cref{result:chain_def_ranks_characterisation} that $\phi(K)$
    is chain-definable if and only if $|j - j'| \le 1$.  If $j = j'$ this is
    clear. Suppose $j < j'$.  Then $A_j = \emptyset$ and $B_j \ne \emptyset$.
    By property \labelcref{item:g_sel_3} for $g$ in
    \Cref{def:selectionfunction}, we have $g(K, A_j, B_j) = g(K, \emptyset,
    B_j) = B_j$. But this means $B_{j+1} = B_j \setminus g(K, A_j, B_j) = B_j
    \setminus B_j = \emptyset$.  Consequently $j' = j+1$, and $|j - j'| = |-1|
    = 1$

    If instead $j > j'$, then a similar argument using property
    \labelcref{item:f_sel_3} for $f$ in \Cref{def:selectionfunction} shows
    that $j = j' + 1$, and we have $|j - j'| = |1| = 1$.

    Hence $|j - j'| \le 1$ in all cases, and $\phi(K)$ is chain-definable as
    required.

    (⇒) Now for the `only if' direction. Suppose $\phi$ satisfies
    \axiomref{chain-def}. We will define $f, g$ such that $\phi = \intop{f,g}$.
    The idea behind the construction is straightforward: since $f$ and $g$ pick
    off the next-top-ranked $A$s and $B$s at each iteration, simply define
    $f(K, A_i, B_i)$ as the maximal elements of $A_i$ with respect to the
    existing ordering ${\ale_K^\phi}$ ($g$ will be defined similarly). The
    interleaving algorithm will then select the ranks of ${\ale_K^\phi}$ and
    ${\ble_K^\phi}$ one-by-one; the fact that $\phi(K)$ is chain-definable
    ensures that we select \emph{all} the ranks before the iterative procedure
    ends. The formal details follow.

    Fix a tournament $K$. By \Cref{result:chain_def_ranks_characterisation},
    $|\ranks{{\ale_K^\phi}} - \ranks{{\ble_K^\phi}}| \le 1$. Taking $t =
    \max\{\ranks{{\ale_K^\phi}}, \ranks{{\ble_K^\phi}}\}$, we can write $X_1,
    \ldots, X_t \subseteq A$ and $Y_1, \ldots, Y_t \subseteq B$ for the ranks
    of ${\ale_K^\phi}$ and ${\ble_K^\phi}$ respectively, possibly with $X_1 =
    \emptyset$ if $\ranks{{\ble_K^\phi}} = 1 + \ranks{{\ale_K^\phi}}$ or $Y_1 =
    \emptyset$ if $\ranks{{\ale_K^\phi}} = 1 + \ranks{{\ble_K^\phi}}$. Note
    that $X_i, Y_i \ne \emptyset$ for $i > 1$. Assume these sets are ordered
    such that $a \ale_K^\phi a'$ iff $i \le j$ whenever $a \in X_i$ and $a' \in
    X_j$ (and similar for the $Y_i$). Also note that the $X_i \cap X_j =
    \emptyset$ for $i \ne j$ (and similar for the $Y_i$).

    Now set \footnotemark{}
    \[
        f(K, A', B') = \begin{cases}
           \max(A', {\ale_K^\phi}),& B' \ne \emptyset \\
           A',& B' = \emptyset
        \end{cases}
    \]
    \[
        g(K, A', B') = \begin{cases}
            \max(B', {\ble_K^\phi}),& A' \ne \emptyset \\
            B',& A' = \emptyset
        \end{cases}
    \]

    \footnotetext{
        Here $\max(Z, {\preceq}) = \{z \in Z \mid \not\exists z' \in Z : z
        \prec z'\}$, for any set $Z$ and a total preorder ${\preceq}$ on $Z$
        (with strict part ${\prec}$).
    }

    It is not difficult to see that $f$ and $g$ satisfy the conditions of
    \Cref{def:selectionfunction} for selection functions. We claim that for
    with $A_i, B_i$ denoting the interleaving sets for $K$ and $(f, g)$, for
    all $0 \le i \le t$ we have
    \begin{equation}
        \label{eqn:ai_bi_unions}
        A_i = \bigcup_{j=1}^{t - i}{X_j},
        \quad
        B_i = \bigcup_{j=1}^{t - i}{Y_j}
        \quad
        \quad
    \end{equation}
    For $i = 0$ this is clear: since $X_1,\ldots,X_t$ contains all ranks of
    ${\ale_K^\phi}$ we have $\bigcup_{j=1}^{t-0} = X_1 \cup \cdots \cup X_t = A
    = A_0$ (and similar for $B$).

    Now suppose \labelcref{eqn:ai_bi_unions} holds for some $0 \le i < t$. We
    will show that $f(K, A_i, B_i) = X_{t-i}$ by considering three possible
    cases, at least one of which must hold.

    \textbf{Case 1:} ($A_i \ne \emptyset$, $B_i \ne \emptyset$). Here we have
    \begin{align*}
        f(K, A_i, B_i)
        &= \max(A_i, {\ale_K^\phi}) \\
        &= \max(X_1 \cup \cdots \cup X_{t-i}, {\ale_K^\phi}) \\
        &= X_{t-i}
    \end{align*}
    since the $X_j$ form (disjoint) ranks of ${\ale_K^\phi}$ with $X_{j} \alt
    X_{k}$ for $j < k$.

    \textbf{Case 2:} ($B_i = \emptyset$). Here we have
    $\bigcup_{j=1}^{t-i}{Y_j} = \emptyset$. Since $t - i \ge 1$ and $Y_j \ne
    \emptyset$ for $j > 1$, it must be the case that $t - i = 1$ and $B_i = Y_1
    = \emptyset$. Consequently by the induction hypothesis we have $A_i =
    \bigcup_{j=1}^{1}{X_j} = X_1$, and thus
    \begin{align*}
        f(K, A_i, B_i)
        &= f(K, A_i, \emptyset) \\
        &= A_i \\
        &= X_1 \\
        &= X_{t-i}
    \end{align*}

    \textbf{Case 3:} ($A_i = \emptyset$). By a similar argument as in case 2,
    we must have $t - i = 1$ and $A_i = X_1 = \emptyset$. Using the fact that
    $f(K, A_i, B_i) \subseteq A_i$ we get
    \begin{align*}
        f(K, A_i, B_i)
        &= \underbrace{f(K, \emptyset, B_i)}_{\subseteq \emptyset} \\
        &= \emptyset \\
        &= X_1 \\
        &= X_{t-i}
    \end{align*}
    We have now covered all cases, and have shown that $f(K, A_i, B_i) =
    X_{t-i}$ must hold. Consequently, using again the fact that the $X_j$ are
    disjoint,
    \begin{align*}
        A_{i+1}
        &= A_i \setminus f(K, A_i, B_i) \\
        &= \left(\bigcup_{j=1}^{t-i}{X_j}\right) \setminus X_{t-i} \\
        &= \bigcup_{j=1}^{t-(i+1)}{X_j}
    \end{align*}
    as required. By almost identical arguments we can show that $g(K, A_i, B_i)
    = Y_{t-i}$, and thus $B_{i+1} = \bigcup_{j=1}^{t-(i+1)}{Y_j}$ also. By
    induction, \labelcref{eqn:ai_bi_unions} holds for all $0 \le i \le t$.

    It remains to show that $a \ale_K^\phi a'$ iff $a \ale_K^{\intop{f,g}} a'$
    and that $b \ble_K^\phi b'$ iff $b \ble_K^{\intop{f,g}} b'$.

    For $a \in A$, let $p(a)$ be the unique integer such that $a \in X_{p(a)}$,
    i.e. $p(a)$ is the index of the rank of $a$ in the ordering
    ${\ale_K^\phi}$. Note that we have
    \[
        a \in A_i = X_1 \cup \cdots \cup X_{t-i}
        \iff
        t - i \ge p(a)
    \]
    and therefore
    \[
        r(a)
        = \max\{i \mid a \in A_i\}
        = \max\{i \mid t - i \ge p(a)\}
        = t - p(a)
    \]
    Using the fact that $X_i \alt X_j$ for $i < j$, we get
    \begin{align*}
        a \ale_K^{\intop{f,g}} a'
        &\iff r(a) \ge r(a') \\
        &\iff t - p(a) \ge t - p(a') \\
        &\iff p(a) \le p(a') \\
        &\iff a \ale_K^\phi a'
    \end{align*}
    A similar argument shows that $b \ble_K^\phi b'$ iff $b
    \ble_K^{\intop{f,g}} b'$ for any $b, b' \in B$. Since $K$ was arbitrary, we
    have shown that $\phi = \intop{f,g}$ as required.
\end{proof}

\subsection{Proof of \Cref{result:chaindef_axiom_compatibilities}}

\begin{proof}

    Since \axiomref{chain-min} implies \axiomref{chain-def},
    \Cref{result:chainmin_axiom_compatibilities} implies the existence of an
    operator with \axiomref{chain-def} and \axiomref{dual}, and an operator
    with \axiomref{chain-def} and \axiomref{mon}.  Moreover, the trivial
    operator which ranks all $A$s and $B$s equally satisfies \axiomref{anon}
    and \axiomref{IIM}. It only remains to show that there is an operator
    satisfying both \axiomref{chain-def} and \axiomref{pos-resp}.

    To that end, for any tournament $K$, define $K'$ by
    \[
        K'_{ab} = \begin{cases}
            1 ,& b \le |K(a)| \\
            0 ,& b > |K(a)|
        \end{cases}
    \]
    Note that $K'(a) = \{1,\ldots,|K(a)|\}$ for $|K(a)| > 0$. Consequently
    $K'(a) \subseteq K'(a')$ iff $|K(a)| \le |K(a)|$. We see that $K'$ has the
    chain property, and the operator $\phi$ defined by $\phi(K) =
    ({\anle_{K'}}, {\bnle_{K'}})$ satisfies \axiomref{chain-def}. In
    particular, $a \ale_K^\phi a'$ iff $|K(a)| \le |K(a')|$.

    To show \axiomref{pos-resp}, suppose $a \ale_K^\phi a'$ and $K_{a',b} = 0$
    for some $a, a' \in A$ and $b \in B$. Write $\hat{K} = K + \bm{1}_{a',b}$.

    Since $a \ale_K^\phi a'$ implies $|K(a)| \le |K(a')|$, we have
    $|\hat{K}(a')| = 1 + |K(a')| > |K(a)| = |\hat{K}(a)|$, and therefore $a
    \alt_{\hat{K}}^\phi a'$ as required for \axiomref{pos-resp}.
\end{proof}

\subsection{Proof of \Cref{result:chaindef_impossibility}}

\begin{proof}

    For contradiction, suppose there is an operator $\phi$ satisfying the
    stated axioms. Consider
    \[
        K = \left[\begin{smallmatrix}
            0 & 0 \\
            0 & 1 \\
            1 & 0 \\
            1 & 1
        \end{smallmatrix}\right]
    \]
    and two tournaments obtained by removing a single 1 entry:
    \[
        K_1 = \left[\begin{smallmatrix}
            0 & 0 \\
            0 & \bm{{\color{red}0}} \\
            1 & 0 \\
            1 & 1
        \end{smallmatrix}\right],
        \quad
        K_2 = \left[\begin{smallmatrix}
            0 & 0 \\
            0 & 1 \\
            1 & 0 \\
            1 & \bm{{\color{red}0}}
        \end{smallmatrix}\right]
    \]
    Now, \axiomref{anon} in $K_1$ gives $1 \aeq_{K_1}^\phi 2$ (e.g. take
    $\sigma = (1\ 2)$, $\pi = \text{id}_B$). In particular, $1 \ale_{K_1}^\phi
    2$, so \axiomref{pos-resp} implies $1 \alt_K^\phi 2$. A similar argument
    with $K_2$ shows that $3 \aeq_{K_2}^\phi 4$ and $3 \alt_K^\phi 4$.

    On the other hand, applying \axiomref{anon} to $K$ directly with $\sigma =
    (2\ 3)$ and $\pi = (1\ 2)$, we see that $2 \aeq_K^\phi 3$. The ranking of
    $A$ is thus fully determined as $1 \alt 2 \aeq 3 \alt 4$. In particular,
    $\ranks{\ale_K^\phi} = 3$.

    But now considering the dual tournament $\dual{K} =
    \left[\begin{smallmatrix} 1&1&0&0 \\ 1&0&1&0 \end{smallmatrix}\right]$ and
    applying permutations $\sigma = (1\ 2)$ and $\pi = (2\ 3)$, we obtain $1
    \aeq_{\dual{K}}^\phi 2$ by \axiomref{anon}, i.e. the $A$ ranking in
    $\dual{K}$ is flat. By \axiomref{dual} this implies the $B$ ranking in $K$
    is flat, i.e. $\ranks{\ble_K^\phi} = 1$. We see that $\ranks{\ale_K^\phi}$
    and $\ranks{\ble_K^\phi}$ differ by 2, contradicting \axiomref{chain-def}
    according to \Cref{result:chain_def_ranks_characterisation}.
\end{proof}

\subsection{Proof of \Cref{result:phicardint_axioms}}

We require a preliminary result providing sufficient conditions for an
interleaving operator $\intop{f,g}$ to satisfy various axioms.

\begin{lemma}
   \label{result:interleaving_suffconditions}

   Let $\phi = \phi_{f,g}^{\text{int}}$ be an interleaving operator.

   \begin{enumerate}
       \item \label{item:int_lemma_anon}

        If for any tournament $K$, $A' \subseteq A$, $B' \subseteq B$ and for
        any pair of permutations $\sigma: A \to A$ and $\pi: B \to B$ we have
        \begin{align*}
            f(\pi(\sigma(K)), \sigma(A'), \pi(B')) &= \sigma(f(K, A', B')) \\
            g(\pi(\sigma(K)), \sigma(A'), \pi(B')) &= \pi(g(K, A', B'))
        \end{align*}
        then $\phi$ satisfies \axiomref{anon}.

   \item \label{item:int_lemma_dual}

        If for any tournament $K$ and $A' \subseteq A$, $B \subseteq B$ we have
        \[ g(K, A', B') = f(\dual{K}, B', A') \] then $\phi$ satisfies
        \axiomref{dual}.

   \item \label{item:int_lemma_mon}

        If for any tournament $K$, $A' \subseteq A$, $B' \subseteq B$ and $a,
        a' \in A'$ we have
        \[
            K(a) \subseteq K(a')
            \implies
            a \not\in f(K, A', B') \text{ or } a' \in f(K, A', B')
        \]
        then $\phi$ satisfies \axiomref{mon}.

    \end{enumerate}

\end{lemma}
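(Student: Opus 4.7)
The plan is to prove each part by analysing how the iteration sets $A_i, B_i$ from \Cref{def:interleaving} behave under the relevant transformations (permutation, dualisation, or the subset relation on neighbourhoods), then reading off the axiom from the definition of $r$ and $s$.

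For part \labelcref{item:int_lemma_anon}, I would write $A_i^K, B_i^K$ for the iteration sets arising from running the interleaving algorithm on $K$. The key claim is that for every $i \ge 0$,
\[
    A_i^{\pi(\sigma(K))} = \sigma(A_i^K),
    \quad
    B_i^{\pi(\sigma(K))} = \pi(B_i^K).
\]
The base case is immediate since $\sigma$ and $\pi$ permute $A$ and $B$. The inductive step uses the assumed equivariance of $f$ to rewrite $f(\pi(\sigma(K)), \sigma(A_i^K), \pi(B_i^K)) = \sigma(f(K, A_i^K, B_i^K))$, so that $A_{i+1}^{\pi(\sigma(K))} = \sigma(A_i^K) \setminus \sigma(f(K, A_i^K, B_i^K)) = \sigma(A_{i+1}^K)$, and similarly for $B$. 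It follows that $r^{\pi(\sigma(K))}(\sigma(a)) = r^K(a)$, which gives $\sigma(a) \ale_{\pi(\sigma(K))}^\phi \sigma(a')$ iff $a \ale_K^\phi a'$, i.e.\ \axiomref{anon}.

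For part \labelcref{item:int_lemma_dual}, the analogous claim is $A_i^{\dual{K}} = B_i^K$ and $B_i^{\dual{K}} = A_i^K$ for all $i$. The base case holds since $A_{\dual{K}} = B_K$ and vice versa. For the inductive step, the hypothesis gives $g(K, A', B') = f(\dual{K}, B', A')$, and substituting $K \mapsto \dual{K}$ (using $\dual{\dual{K}} = K$) yields $f(K, A', B') = g(\dual{K}, B', A')$. Plugging these into the recursive definition lets me rewrite $f(\dual{K}, A_i^{\dual{K}}, B_i^{\dual{K}}) = f(\dual{K}, B_i^K, A_i^K) = g(K, A_i^K, B_i^K)$, so that $A_{i+1}^{\dual{K}} = B_{i+1}^K$, and similarly $B_{i+1}^{\dual{K}} = A_{i+1}^K$. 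Hence $r^{\dual{K}}(b) = s^K(b)$ for all $b \in B_K$, and the equivalence $b \ale_{\dual{K}}^\phi b'$ iff $b \ble_K^\phi b'$ follows directly, establishing \axiomref{dual}.

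For part \labelcref{item:int_lemma_mon}, the plan is more direct: suppose $K(a) \subseteq K(a')$, and set $i = r(a)$, so $a \in A_i^K$ and $a \in f(K, A_i^K, B_i^K)$. Either $a' \notin A_i^K$, in which case $a'$ was selected at some earlier step and $r(a') < i$; or $a' \in A_i^K$, in which case the hypothesis (applied with $A' = A_i^K, B' = B_i^K$) forces $a' \in f(K, A_i^K, B_i^K)$, so $r(a') = i$. Either way $r(a') \le r(a)$, which is exactly $a \ale_K^\phi a'$, giving \axiomref{mon}.

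The main subtle point is part \labelcref{item:int_lemma_dual}: one must carefully track which side is which when swapping $K$ with $\dual{K}$, since the roles of $f$ and $g$ effectively interchange, and it is easy to get the indices of $A_i$ and $B_i$ confused. The other two parts are straightforward inductions and case analyses once the right invariants are written down.
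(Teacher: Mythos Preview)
Your proposal is correct and follows essentially the same approach as the paper's proof: the same inductive invariants on the iteration sets for parts \labelcref{item:int_lemma_anon} and \labelcref{item:int_lemma_dual}, and the same analysis of $f(K, A_{r(a)}, B_{r(a)})$ for part \labelcref{item:int_lemma_mon}. The only cosmetic differences are that the paper argues part \labelcref{item:int_lemma_mon} by contradiction rather than your direct case split, and leaves the induction for part \labelcref{item:int_lemma_dual} to the reader where you spell it out.
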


\begin{proof}

    We take each statement in turn.

    \begin{enumerate}

    \item
    Let $K$ be a tournament. For brevity, write $K' = \pi(\sigma(K))$.  Let us
    write $A_i, B_i$ and $A_i', B_i'$ $(i \ge 0)$ for the sets defined in
    \Cref{def:interleaving} for $K$ and $K'$ respectively.  We claim that for
    all $i \ge 0$:
    \begin{equation}
        \label{eqn:interleaving_lemma_anon}
        A_i' = \sigma(A_i),
        \quad
        B_i' = \pi(B_i)
    \end{equation}
    For $i = 0$ this is trivial since $A_0' = A = \sigma(A) = \sigma(A_0)$
    since $\sigma$ is a bijection. The fact that $B_0' = \pi(B_0)$ is shown
    similarly.

    Suppose that \labelcref{eqn:interleaving_lemma_anon} holds for some $i \ge
    0$. Then applying our assumption on $f$:
    \begin{align*}
        A_{i+1}'
        &= A_i' \setminus f(K', A_i', B_i') \\
        &= \sigma(A_i) \setminus f(K', \sigma(A_i), \pi(B_i)) \\
        &= \sigma(A_i) \setminus \sigma(f(K, A_i, B_i)) \\
        &= \sigma(A_i \setminus f(K, A_i, B_i)) \\
        &= \sigma(A_{i+1})
    \end{align*}
    (note that $\sigma(X) \setminus \sigma(Y) = \sigma(X \setminus Y)$ holds
    for any sets $X, Y$ due to injectivity of $\sigma$). Using the assumption
    on $g$ we can show that $B_{i+1}' = \pi(B_{i+1})$ in a similar manner.
    Therefore, by induction, \labelcref{eqn:interleaving_lemma_anon} holds for
    all $i \ge 0$. This means that for any $a \in A$ we have
    \[
        \sigma(a) \in A_i'
        \iff
        \sigma(a) \in \sigma(A_i)
        \iff
        a \in A_i
    \]
    and therefore, with $r_K$ and $r_{K'}$ denoting the functions $A \to \N_0$
    defined in \Cref{def:interleaving} for $K$ and $K'$ respectively,
    \begin{align*}
        r_{K'}(\sigma(a))
        &= \max\{i \mid \sigma(a) \in A_i'\} \\
        &= \max\{i \mid a \in A_i\} \\
        &= r_K(a)
    \end{align*}

    From this it easily follows that $a \ale_K^\phi a'$ iff $\sigma(a)
    \ale_{K'}^\phi \sigma(a')$, i.e.  $\phi$ satisfies \axiomref{anon}.

    \item
    Once again, fix a tournament $K$ and let $A_i, B_i$ and $A_i', B_i'$ denote
    the sets from \Cref{def:interleaving} for $K$ and $\dual{K}$ respectively.
    It is easy to show by induction that the assumption on $f$ and $g$ implies
    $A'_i = B_i$ and $B_i' = A_i$ for all $i \ge 0$ . This means that for any
    $b \in B_K$:
    \begin{align*}
        s_K(b)
        &= \max\{i \mid b \in B_i\} \\
        &= \max\{i \mid b \in A_i'\} \\
        &= r_{\dual{K}}(b)
    \end{align*}
    which implies $b \ble_K^\phi b'$ iff $b \ale_{\dual{K}}^\phi b'$, as
    required for \axiomref{dual}.

    \item
    Let $K$ be a tournament and $a, a' \in A$ such that $K(a) \subseteq K(a')$.
    We must show that $a \ale_K^\phi a'$.

    Suppose otherwise, i.e. $a' \alt_K^\phi a$. Then $r(a') > r(a)$. Note that
    by definition of $r$, we have $a \in A_{r(a)} \setminus A_{r(a) + 1} = f(K,
    A_{r(a)}, B_{r(a)})$. Since $r(a') \ge r(a) + 1$ and $A_{r(a)} \supseteq
    A_{r(a) + 1} \supseteq A_{r(a) + 2} \supseteq \cdots$, we get $a' \in
    A_{r(a) + 1} \subseteq A_{r(a)}$. In particular, $a' \notin f(K, A_{r(a)},
    B_{r(a)})$.

    Piecing this all together, we have $a, a' \in A_{r(a)}$, $K(a) \subseteq
    K(a')$, $a \in f(K, A_{r(a)}, B_{r(a)})$ and $a' \not\in f(K, A_{r(a)},
    B_{r(a)})$. But this directly contradicts our assumption on $f$, so we are
    done.

    \end{enumerate}
\end{proof}

\begin{proof}[Proof of \Cref{result:phicardint_axioms}]

    We take each axiom in turn. Let $f$ and $g$ be the selection functions
    corresponding to $\phicardint$ from \Cref{ex:cardint}.

    \axiomref{chain-def.} Since $\phicardint$ is an interleaving operator,
    \axiomref{chain-def} follows from \Cref{result:chaindef_iff_interleaving}.

    \axiomref{anon.} Let $K$ be a tournament and let $\sigma: A \to A$ and $\pi:
    B \to B$ be bijective mappings. Write $K' = \pi(\sigma(K))$. We will show
    that the conditions on $f$ and $g$ in
    \Cref{result:interleaving_suffconditions} part
    (\labelcref{item:int_lemma_anon}) are satisfied.

    Let $A' \subseteq A$ and $B' \subseteq B$. We have
    \begin{align*}
        f(K', \sigma(A'), \pi(B'))
        &= \argmax_{\hat{a} \in \sigma(A')}{|K'(\hat{a}) \cap \pi(B')|} \\
        &= \sigma(\argmax_{a \in A'}{|K'(\sigma(a)) \cap \pi(B')|})
    \end{align*}
    where we make the `substitution' $a = \sigma^{-1}(\hat{a})$. Using the
    defintion of $K' = \pi(\sigma(K))$ it is easily seen that $K'(\sigma(a)) =
    \pi(K(a))$. Also, since $\pi$ is a bijection we have $\pi(X) \cap \pi(Y) =
    \pi(X \cap Y)$ for any sets $X$ and $Y$, and $|\pi(X)| = |X|$. Thus
    \begin{align*}
        f(K', \sigma(A'), \pi(B'))
        &= \sigma(\argmax_{a \in A'}{|K'(\sigma(a)) \cap \pi(B')|}) \\
        &= \sigma(\argmax_{a \in A'}{|\pi(K(a)) \cap \pi(B')|}) \\
        &= \sigma(\argmax_{a \in A'}{|\pi(K(a) \cap B')|}) \\
        &= \sigma(\argmax_{a \in A'}{|K(a) \cap B'|}) \\
        &= \sigma(f(K, A', B'))
    \end{align*}
    as required. The result for $g$ follows by a near-identical argument. Thus
    $\phicardint$ satisfies \axiomref{anon} by
    \Cref{result:interleaving_suffconditions} part
    (\labelcref{item:int_lemma_anon}).

    \axiomref{dual.} Fix a tournament $K$ and let $A' \subseteq A$, $B'
    \subseteq B$. Note that for $b \in B'$ we have
    \begin{align*}
        |K^{-1}(b) \cap A'|
        &= |(A \setminus \dual{K}(b)) \cap A'| \\
        &= |A' \setminus \dual{K}(b)| \\
        &= |A'| - |\dual{K}(b) \cap A'|
    \end{align*}
    Consequently
    \begin{align*}
        g(K, A', B')
        &= \argmin_{b \in B'}{|K^{-1}(b) \cap A'|} \\
        &= \argmin_{b \in B'}{\left(|A'| - |\dual{K}(b) \cap A'|\right)} \\
        &= \argmax_{b \in B'}{|\dual{K}(b) \cap A'|} \\
        &= f(\dual{K}, B', A')
    \end{align*}
    and, by \Cref{result:interleaving_suffconditions} part
    (\labelcref{item:int_lemma_dual}), $\phicardint$ satisfies \axiomref{dual}.

    \axiomref{mon.} Once again, we use
    \Cref{result:interleaving_suffconditions}.  Let $K$ be a tournament and $A'
    \subseteq A$, $B' \subseteq B$. Suppose $a, a' \in A'$ with $K(a) \subseteq
    K(a')$. We need to show that either $a \not\in f(K, A', B')$ or $a' \in
    f(K, A', B')$

    Suppose $a \in f(K, A', B')$. Then $a \in \argmax_{\hat{a} \in
    A'}{|K(\hat{a}) \cap B'|}$, so $|K(a) \cap B'| \ge |K(a') \cap B'|$. On the
    other hand $K(a) \cap B' \subseteq K(a') \cap B'$, so $|K(a) \cap B'| \le
    |K(a') \cap B'|$.  Consequently $|K(a) \cap B'| = |K(a') \cap B'|$, and so
    $a' \in f(K, A', B')$. This shows the property required by
    \Cref{result:interleaving_suffconditions} part
    (\labelcref{item:int_lemma_mon}) is satisfied, and thus $\phicardint$
    satisfies \axiomref{mon}.

    \axiomref{pos-resp.} We have show that $\phicardint$ satisfies
    \axiomref{chain-def}, \axiomref{anon} and \axiomref{dual}; due to
    impossibility result of \Cref{result:chaindef_impossibility}, $\phicardint$
    cannot satisfy \axiomref{pos-resp}.

    \axiomref{IIM.} Write
    \[
        K_1 = \begin{bmatrix}
            1 & 0 & 0 \\
            0 & 1 & 0 \\
            0 & 1 & 1
        \end{bmatrix}
        , \quad
        K_2 = \begin{bmatrix}
            1 & 0 & 0 \\
            0 & 1 & 0 \\
            1 & 0 & 1
        \end{bmatrix}
    \]
    Note that the first and second rows of each tournament are identical, so
    \axiomref{IIM} would imply $1 \ale_{K_1}^\phicardint 2$ iff $1
    \ale_{K_2}^\phicardint 2$.  However, it is easily verified that $1
    \alt_{K_1}^\phicardint 2$ whereas $2 \alt_{K_2}^\phicardint 1$. Therefore
    $\phicardint$ does not satisfy \axiomref{IIM}.
\end{proof}

\end{document}